\documentclass[a4paper,USenglish,cleveref,thm-restate]{lipics-v2021}

\pdfoutput=1 
\hideLIPIcs  

\usepackage[appendix=append]{apxproof} 

\usepackage{amsmath, amssymb, amsthm, amsfonts}
\usepackage{xspace}
\usepackage{thm-restate}
\usepackage{misc}   
\usepackage{mathtools}

\usepackage{color}
\usepackage[dvipsnames]{xcolor}
\usepackage{graphicx}

\usepackage[shortcuts]{extdash} 

\usepackage{hyperref}
\usepackage{cleveref}

\usepackage{tikz}
\usetikzlibrary{calc}
\usepackage{tikz-cd}

\title{Let's Play Tag: Linear Time Evaluation of Conjunctive Queries under TGD Constraints}

\author{Nofar Carmeli}{Inria, LIRMM, University of Montpellier, CNRS,  France}{nofar.carmeli@inria.fr}{https://orcid.org/0000-0003-0673-5510}{}

\author{Carsten Lutz}{Institute of Computer Science, Leipzig University, Germany \and Center for Scalable Data Analytics and Artificial Intelligence (ScaDS.AI)}{clu@informatik.uni-leipzig.de}{https://orcid.org/0000-0002-8791-6702}{}

\author{Marcin Przyby{\l}ko}{Institute for Informatics, Warsaw University, Poland}{m.przybylko@uw.edu.pl}{https://orcid.org/0000-0003-1859-7055}{}

\authorrunning{N. Carmeli and C. Lutz and M. Przyby{\l}ko}

\Copyright{Nofar Carmeli and Carsten Lutz and Marcin Przyby{\l}ko}

\ccsdesc[500]{Theory of computation~Database query processing and optimization (theory)}

\keywords{Conjunctive Queries, Tuple Generating Dependencies, Query Evaluation, Answer Testing, Enumeration, Direct Access, Linear Time, Fine-Grained Complexity, Tagging.}

\category{}

\relatedversion{} 

\acknowledgements{We thank Karl Bringmann for helpful discussions.}

\nolinenumbers 

\EventEditors{John Q. Open and Joan R. Access}
\EventNoEds{2}
\EventLongTitle{30th International Conference on Database Theory (ICDT 2027)}
\EventShortTitle{ICDT 2027}
\EventAcronym{ICDT}
\EventYear{2027}
\EventLogo{}

\newtheoremrep{lemma}[theorem]{Lemma}
\newtheoremrep{prop}[theorem]{Proposition}
\newtheoremrep{thm}[theorem]{Theorem}

\newcommand{\core}{\mn{core}}

\newcommand{\atoms}{\mn{atoms}}

\newcommand{\loghyperclique}{loghyperclique}
\newcommand{\inc}{\subseteq}
\newcommand{\qbad}{q_{\mathrlap{\square}\angle}}
\newcommand{\Tbad}{\Tmc_{\mathrlap{\square}\angle}}
\newcommand{\qtri}{q_\triangle}
\newcommand{\qtriguard}{q_{\mathrlap{\bigcirc}\triangle}}
\newcommand{\Ttri}{T_\triangle}
\newcommand{\Ttriguard}{T_{\mathrlap{\bigcirc}\triangle}}
\newcommand{\Dtag}{D_{\mn{tag}}}
\newcommand{\Ttop}{T_{\mn{top}}}
\newcommand{\Tbot}{T_{\mn{bot}}}
\newcommand\inset[3]{\text{In}_{#1, #2}({#3})}
\newcommand\outset[3]{\text{Out}_{#1, #2}({#3})}
\newcommand\solleft[1]{\text{SolLeft({#1})}}
\newcommand\soltop[1]{\text{SolTop({#1})}}

\begin{document}

\maketitle

\begin{abstract}
  We study the limits of linear time evaluation of conjunctive queries under constraints expressed as tuple-generating dependencies (TGDs), across several modes of query evaluation: single-testing, all-testing, counting, lexicographic direct access, and enumeration. While full classifications seem far beyond reach, we propose an approach that,  for some evaluation modes and classes of TGDs, makes it possible to lift known dichotomies from the unconstrained setting. In particular, our approach applies to all mentioned evaluation modes except enumeration, when the constraints fall into one of~two classes: non-recursive sets of TGDs in which every TGD uses at most binary relation symbols in the head or has at most two frontier variables; and frontier-guarded full TGDs.
  We~further provide a collection of examples showcasing the challenges that arise for enumeration and for less restrictive classes of TGDs.
\end{abstract}

\section{Introduction}

Starting with a famous article of Yannakakis \cite{yannakakis-algotrithm}, there has been a continuous interest in~mapping out the potential and the limits of querying in linear time in data complexity.
Yannakakis established that 
the set of all answers to an acyclic full conjunctive query (CQ) can be computed in time linear in the size of the input and the output. The interest later shifted to other modes of query evaluation, such as enumeration, counting, and direct access, recently surveyed in~\cite{DBLP:journals/corr/abs-2506-17702}. Notably, it was shown in~\cite{bagan-enum-cdlin} that for conjunctive queries  that are acyclic and free-connex, answer enumeration is possible with linear preprocessing and constant delay.
The corresponding lower bounds
for self-join free CQs have been proved in \cite{bagan-enum-cdlin,BraultBaron}; see also \cite{berkholz-enum-tutorial}  for a~tutorial. The lower bounds are subject to the
hyperclique hypothesis from fine-grained complexity theory and to the
hypothesis that it is not possible (or at least thoroughly out of reach of today's methods) to multiply two $n \times n$-matrices in time $O(n^2)$.
 However, the former implies the latter~\cite{bringmann2025unbalanced}.
For~counting, analogous results that also pertain to CQs that are acyclic and free\=/connex have been obtained in \cite{mengel-counting-note}, building on prior work \cite{DBLP:journals/mst/0001M15,DBLP:conf/icalp/DellRW19}. The lower bounds additionally rely on the strong exponential time hypothesis~(SETH). We~remark
that, in~the case of counting, also the parameterized complexity
has been thoroughly investigated, the parameter being the size of the query; see for instance \cite{chen-counting,DBLP:conf/icalp/DellRW19}.

Given that databases are subject to various kinds of constraints, it is natural to ask about the limits of linear time querying in their presence as well. Note that posing constraints makes the class of admissible databases smaller, and thus intractable cases may turn into tractable ones, but not vice versa. The~existing results on this subject are much more sparse. The~dichotomy for the enumeration of self-join free CQs discussed above has been lifted to unary functional dependencies in \cite{carmeli-enum-func},
and the same has been done for direct access in~\cite{carmeli2023da}.
The~parameterized (but not linear time) complexity of single-testing under constraints presented by guarded tuple-generating dependencies (TGDs) is analyzed in \cite{barcelo_omq_limits-g}, linking it to tree-width. Here, single\=/testing means to decide, given a database and an answer candidate, whether the candidate is indeed an answer.
Likewise, the parameterized complexity of~counting query answers under guarded TGDs  was studied in \cite{LMCS-counting-23}. 

In this paper, we study the limits of linear time querying under TGD constraints across several modes of CQ evaluation: single-testing, all-testing, counting, lexicographic direct access, and enumeration. 
Since TGDs constitute a rather 
general formalism for specifying 
database constraints, which in fact was the original reason
for their introduction~\cite{DBLP:conf/icalp/BeeriV81},
it seems unrealistic to expect complete classifications. Indeed, 
 we~provide several examples showing that such classifications remain far beyond reach, for all the evaluation modes considered.  Instead,  we 
 propose and investigate an approach that sometimes makes it possible to lift dichotomies from the case without constraints to certain classes of TGDs: \emph{tagging}. In short, the idea of~tagging is to annotate database facts with query variables, thereby making explicit which query atom a fact is intended to realize.
 Tagging has been used before, for example in~\cite{bagan-enum-cdlin,carmeli2021ucqs,carmeli-enum-sjs,chen-counting}, to obtain results on the enumeration complexity of CQs with self-joins \cite{carmeli-enum-sjs}, among other applications. We combine tagging with other techniques, which can loosely be described as flooding and filtering, to make sure that the resulting database satisfies all TGDs. This approach does not work for all sets of TGDs, and we call a set \emph{tagging-friendly} if it does. We then prove that every non-recursive set of TGDs is tagging-friendly and so is every set of frontier-guarded full TGDs. In addition to bare tagging-friendlyness, we also require that tagged databases can be constructed within certain time bounds. For non-recursive sets of TGDs, this is possible if TGD heads use only relation symbols of maximum arity two or if every TGD has at most two frontier variables.

We use our approach to lift dichotomies for self-join-free CQs without constraints to~unrestricted CQs (i.e., possibly with self-joins) in the presence of constraints given by tagging-friendly sets of TGDs.
One proviso is that, in the statement of the lifted dichotomy, the structural properties of the original query
are replaced by those of its tagging companion. This is a query that is equivalent to the original one under the set of constraints, and it is constructed in the proof of tagging-friendliness. For instance, we show that a
CQ admits single-testing w.r.t.\ a set of tagging-friendly TGDs in linear time if and only if its tagging companion is acyclic, assuming the hyperclique hypothesis. We then consider all-testing, which asks for an algorithm that, after a preprocessing phase, repeatedly takes answer candidates as input and decides whether the candidate is indeed an answer. Lifting a dichotomy from \cite{DBLP:conf/pods/LutzP22}, our result is that all-testing is possible with linear preprocessing time and constant testing time if and only if the tagging companion is free-connex, assuming the hyperclique hypothesis.
We next lift the dichotomy for counting stated above and proceed
to consider lexicographic direct access where the CQ is endowed with a linear order $L$ of its answer variables, and we seek an algorithm that, after a preprocessing phase, repeatedly takes a positive integer $i$ as
input and outputs the $i$th answer according to the lexicographic ordering of all answers under~$L$ \cite{carmeli2023da}. Here, the relevant structural property is~that the tagging-companion is~acyclic, free\=/connex, and has no disruptive $L$-trio. As the last evaluation mode, we consider enumeration. While we are able to obtain dichotomies for two restricted cases, namely CQs of arity at most two and TGDs with only unary relations in the head, we also give examples showing that there are serious obstacles to obtaining more general results, even for very simple forms of (tagging-friendly) TGDs.

We remark that our results also remove the assumption of self-join freeness from the known dichotomies for CQs without constraints, presented for all-testing in \cite{DBLP:conf/pods/LutzP22}, for counting in~\cite{mengel-counting-note}, and for direct access in \cite{carmeli2023da}. While this might not be  surprising, we are not aware that it was made explicit before.

\section{Preliminaries}

{\bf Relational Databases.}
Fix countably infinite and disjoint
sets \Cbf, \Nbf, \Vbf of \emph{constants}, \emph{null constants},
and \emph{variables}. 
A {\em schema} \Sbf
is a finite set of relation symbols~$R$ with associated arity \mbox{$\mn{ar}(R) \geq 0$}. 
An {\em \Sbf-fact} is an expression of
the form $R(\bar c)$, where $R \in \Sbf$ and $\bar c$ is an
$\mn{ar}(R)$-tuple of constants from $\Cbf \cup \Nbf$.
An {\em $\Sbf$-instance} is a
set of \Sbf-facts,
and an {\em \Sbf-database} is a finite $\Sbf$-instance {that uses
only constants from \Cbf.} We write
$\dom(I)$ to denote the \emph{active domain} of instance $I$, that is, the set of constants used in  $I$.
A {\em homomorphism} from $I$ to an instance $J$ is a function $h :
\dom(I) \rightarrow \dom(J)$ such that $R(h(\bar c)) \in J$
for every $R(\bar c) \in I$.

\smallskip
\noindent
{\bf Conjunctive Queries.}
A {\em conjunctive query} (CQ)  over a schema $\Sbf$ takes
the form \mbox{$q(\bar x) \leftarrow \phi(\bar x, \bar y)$}
where $\bar x$ and $\bar y$ are tuples of variables and $\phi$ is a
conjunction of \emph{relational atoms} $R(\bar z)$ with
$R \in \Sbf$ and $\bar z$ a tuple of variables that occur in $\bar x$ or $\bar y$.
 Every variable from $\bar x$ must appear in some atom in $\phi$.
When we do not want to make $\phi(\bar x, \bar
y)$ explicit, we may denote $q(\bar x) \leftarrow \phi(\bar x, \bar
y)$ simply by $q$ or by $q(\bar x)$.
We call the variables in $\bar x$ 
\emph{answer variables}
and those in $\bar y$ \emph{existentially quantified} variables, and
use $\mn{var}(q)$  to denote the set of all variables
in $q$. 
The {\em arity} of  $q$
is the number of its answer variables, and $q$ is \emph{Boolean} if
it is of arity~0.  

Every CQ $q(\bar x)$ can  naturally be seen as a database $D_q$, known
as the {\em canonical database} of~$q$, obtained by viewing variables
as constants {from \Cbf}; answer variables and quantified variables are treated in the same way.
  A \emph{homomorphism}
$h$  from $q$ to an instance $I$ is a homomorphism from $D_q$ to
 $I$.
 A tuple $\bar c \in \dom(I)^{|\bar x|}$ is an {\em answer} to
 $q$ on $I$ if there is a homomorphism $h$ from $q$ to $I$ with
 $h(\bar x) = \bar c$.
We use  $q(I)$ to denote the  set of all answers to $q$ on~$I$.
For a CQ $q$, but also for any other syntactic object $q$, we use
$||q||$ to denote the number of symbols needed to write $q$ as
a word over a suitable alphabet.

 Let $q(\bar x)$ be a CQ. We say that $q(\bar x)$ is \emph{self\=/join free} if no relation symbol
occurs in more than one atom in it, and that it is \emph{full} if it contains no quantified variables. 
A \emph{join tree} for
$q(\bar x)$ is an undirected tree $T=(V,E)$ where $V$ is the set of
atoms in $\phi$ and for each variable $x \in \mn{var}(q)$, the set $\{
\alpha \in V \mid x \text{ occurs in } \alpha \}$ is a connected
subtree of $T$.  
Then, $q(\bar x)$ is \emph{acyclic} if it has a join
tree, and it is \emph{weakly acyclic} if it has a join tree with the connectedness condition applying only to existential variables $x$. 
We say that $q(\bar x)$ is \emph{free\=/connex} if the CQ that results from adding to $q$ an atom
$R(\bar x)$  `guarding' the answer variables, $R$ a
relation symbol of arity~$|\bar x|$, is acyclic.

\smallskip
\noindent
{\bf TGDs, Chase.}
A {\em tuple-generating dependency} (TGD) $T$ over $\Sbf$ is a
first-order sentence $ \forall \bar x \forall \bar y \,
\big(\phi(\bar x,\bar y) \rightarrow \exists \bar z \, \psi(\bar
x,\bar z)\big) $ such that $q_\phi(\bar x) \leftarrow 
\phi(\bar x,\bar y)$ and $q_\psi(\bar x) \leftarrow 
\psi(\bar x,\bar z)$ are CQs that do not contain constants.  We call
$\phi$ and $exists \bar z \, \psi$ the {\em body} and {\em head} of $T$ respectively.  The body may
be the empty conjunction, i.e.~logical truth, denoted by \mn{true}.
The variables in $\bar x$ are the
\emph{frontier variables}.
For simplicity, we write $T$ as $\phi(\bar x,\bar y) \rightarrow
\exists \bar z \, \psi(\bar x,\bar z)$. 
A TGD $T$ is {\em frontier-guarded} if its body is \mn{true} or contains a
\emph{guard atom} $\alpha$ that contains all  frontier variables~\cite{DBLP:conf/ijcai/BagetMRT11}. It is \emph{full} if there are no existentially quantified variables in the head. 
An instance $I$ over $\Sbf$
\emph{satisfies}~$T$, denoted $I \models T$, if $q_\phi(I) \subseteq
q_\psi(I)$. 
Let $\Tmc$ be a set of TGDs over a schema \Sbf. We use
$\preceq_{\Tmc}$ to denote the smallest (w.r.t.~`$\subseteq$')
reflexive and transitive binary relation on $\Sbf$ such that if
$\phi \rightarrow \psi$ is a TGD in $\Tmc$ with $R_1$ occurring in
$\phi$ and $R_2$ in $\psi$, then $R_1 \preceq_{\Tmc} R_2$.
We say that $\Tmc$ is \emph{non-recursive} if
$\preceq_{\Tmc}$ is antisymmetric, thus a partial order. 

Let \Tmc be a set of TGDs over a schema \Sbf. For CQs $q_1(\bar x_1)$
and $q_2(\bar x_2)$ over \Sbf that are of the same arity, we write
$q_1 \equiv_\Tmc q_2$ and say that $q_1$ and $q_2$ are \emph{equivalent w.r.t.} \Tmc if $q_1(D) = q_2(D)$ for all
\Sbf-databases $D$ that satisfy all TGDs in $T$.

The well-known chase procedure makes explicit in an instance the
consequences of a set of TGDs \cite{MaMS79,JoKl84, FKMP05,
  cali-taming-chase}.
There are different versions of the chase,
and we use the \emph{Skolem chase}. Let \Tmc be a set of TGDs over a schema \Sbf, let
 $T= \phi(\bar x,\bar y) \rightarrow \exists \bar z \, \psi(\bar x,\bar
z) \in \Tmc$ be a TGD, and $I$ an instance over~\Sbf. We assume a fixed Skolem function $f_{\psi}$,
range disjoint across all TGDs in~\Tmc, and such that $f_{\psi}(z)$ is a
fresh null for every variable $z$, that is,
$f_{\psi}(z) \notin \dom(I)$. For a tuple of variables $\bar z$, we write
$f_{\psi}(\bar z)$ to mean the result of 
replacing in $\bar z$ every variable $z$ 
with $f_{\psi}(z)$. 
We say that $T$ is \emph{applicable} in  $I$ to a 
tuple $\bar c$ over $\dom(I)$ 
if $\phi(\bar c,\bar c') \subseteq I$
for some tuple $\bar c'$ over $\dom(I)$,
but $\psi(\bar c,f_\psi(\bar z)) \not\subseteq I$.
In this case, the {\em result of applying $T$ in $I$ at $\bar c$} is the
instance $I \cup \psi(\bar c,f_\psi(\bar z))$.

A {\em chase sequence for $I$ with a set of TGDs $\Tmc$} is a sequence of instances
$I_0,I_1,\dots$ such that $I_0=I$ and each $I_{i+1}$ is the result of
applying some TGD from \Tmc at some tuple $\bar c$ of constants in $I_i$. The \emph{result} of this chase sequence is the
instance $J = \bigcup_{i \geq 0} I_i$. The chase sequence is
\emph{fair} if whenever a TGD $T \in \Tmc$ is applicable to a tuple
$\bar c$ in some $I_i$, then this application is part of the sequence. It is not difficult to show that every fair chase sequence
for $I$ with a set of TGDs $\Tmc$ leads to the same result, up to isomorphism.  We
denote this result with $\mn{ch}_\Tmc(I)$. For brevity, for a CQ $q$
we may write $\mn{ch}_\Tmc(q)$ in place of $\mn{ch}_\Tmc(D_q)$.

We call a set \Tmc of TGDs \emph{chase terminating} if for every database $D$,
$\mn{ch}_\Tmc(D)$ is finite. It is obvious that sets of full TGDs are chase terminating and well-known that also non-recursive sets of TGDs are. For a set of TGDs \Tmc that is chase terminating and a CQ $q$, $\mn{ch}_\Tmc(q)$ is finite and may  be viewed as a CQ with the same answer variables as $q$. It thus makes sense
to state the following, which is easy to prove.
\begin{lemma}
  \label{lem:chasingqueryequiv}
  If \Tmc is a set of TGDs that is chase terminating and $q$ is a CQ, then
  $q \equiv_\Tmc \mn{ch}_\Tmc(q)$.
\end{lemma}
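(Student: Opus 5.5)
We prove the two inclusions $q(D) \subseteq \mn{ch}_\Tmc(q)(D)$ and $\mn{ch}_\Tmc(q)(D) \subseteq q(D)$ separately, for an arbitrary \Sbf-database $D$ that satisfies all TGDs in \Tmc. Throughout, we view $\mn{ch}_\Tmc(q)$ as a CQ: the constants of $D_q$ (the variables of $q$) stay variables, answer variables stay answer variables, and the nulls introduced by the chase become fresh existentially quantified variables. This is well defined because \Tmc is chase terminating, so $\mn{ch}_\Tmc(q)$ is finite.

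The inclusion $\mn{ch}_\Tmc(q)(D) \subseteq q(D)$ is immediate. Since $D_q \subseteq \mn{ch}_\Tmc(D_q)$, any homomorphism $h$ from $\mn{ch}_\Tmc(q)$ to $D$ with $h(\bar x) = \bar c$ restricts to a homomorphism from $q$ to $D$ with the same image of $\bar x$. This direction does not even use that $D$ satisfies \Tmc.

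For $q(D) \subseteq \mn{ch}_\Tmc(q)(D)$, let $h_0$ be a homomorphism from $D_q$ to $D$ with $h_0(\bar x) = \bar c$. Fix a fair chase sequence $I_0 = D_q, I_1, \dots$ with result $\mn{ch}_\Tmc(D_q)$. By induction on $i$ we build homomorphisms $h_i : I_i \to D$ with $h_{i+1}$ extending $h_i$. Suppose $I_{i+1}$ arises by applying $T = \phi(\bar x',\bar y) \rightarrow \exists \bar z\,\psi(\bar x',\bar z)$ at $\bar a$, witnessed by $\phi(\bar a,\bar a') \subseteq I_i$. Then $\phi(h_i(\bar a), h_i(\bar a')) \subseteq D$, so $h_i(\bar a) \in q_\phi(D) \subseteq q_\psi(D)$ because $D \models T$. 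Hence there is a tuple $\bar e$ in $\dom(D)$ with $\psi(h_i(\bar a),\bar e) \subseteq D$. We extend $h_i$ by mapping the fresh nulls $f_\psi(\bar z)$ to $\bar e$. If a null in $f_\psi(\bar z)$ already lies in $\dom(I_i)$, the application would have to be consistent with it; since the Skolem nulls are fresh for each application, we treat each application as introducing new nulls, so no conflict arises. The only point needing care is this bookkeeping of Skolem nulls, and it is routine.

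The union $h = \bigcup_i h_i$ is then a homomorphism from $\mn{ch}_\Tmc(D_q)$ to $D$. Since the sequence is finite by chase termination, it suffices to take the last $h_i$. This $h$ satisfies $h(\bar x) = h_0(\bar x) = \bar c$, which proves $\bar c \in \mn{ch}_\Tmc(q)(D)$ and completes the argument.
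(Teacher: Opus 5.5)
Your proof is correct. The easy inclusion uses $D_q \subseteq \mn{ch}_\Tmc(D_q)$, and the other inclusion is the standard universality-of-the-chase induction, which is the routine argument the paper has in mind when it calls the lemma easy and gives no proof. The Skolem-null bookkeeping you flag is settled by the paper's stipulation that $f_\psi(z) \notin \dom(I)$ at the time of application, so each step extends $h_i$ only on fresh elements; fairness of the chase sequence is not actually needed.
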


\smallskip \noindent
{\bf Cores.}
A database $D$ is a \emph{core} if every endomorphism
on $D$ is injective, or equivalently, if every endomorphism is an automorphism.
\emph{The core} of a finite instance $D$ is the unique (up to isomorphism) instance
$\mn{core}(D) \subseteq D$ such that $D$ and $\mn{core}(D)$ are
homomorphically equivalent and  $\mn{core}(D)$ is 
a core. For a CQ $q$, we use
$\mn{core}(q)$ to denote the core of $D_q$, admitting however
only homomorphisms that are the identity on all answer variables.
\begin{lemmarep}
    \label{claim:cored-chase}
    If a database $D$ satisfies a TGD $T$,
    then so does  $\core(D)$. 
\end{lemmarep}
\begin{proof}
  Since $D$ and $\core(D)$ are homomorphically equivalent, there is a
  homomorphism $h_1$ from $\core(D)$ to $D$ and a homomorphism $h_2$
  from $D$ to $\core(D)$.  Assume that TGD
  $T=\phi(\bar x,\bar y) \rightarrow \exists \bar z \, \psi(\bar x,\bar
  z)$ is applicable to a pair $(\bar c,\bar c')$ in
  $\mn{core}(D)$.  Then, there is a homomorphism $g$ from $\phi$ to
  $\core(D)$ with $g(\bar x)=\bar c$ and $g(\bar y)=c'$. The
 composition $h_1 \circ g$ is a homomorphism from $\phi$ to $D$, and since $D \models \Tmc$, we find a $\bar d$ such that
  $\psi(h_1(\bar c'),\bar d) \in D$. This implies
  $\psi(h_2 \circ h_1(\bar c'),h_2(\bar d)) \in \core(D)$.  Now
  consider the homomorphism $h = h_2 \circ h_1$, which is an
  endomorphism on $\core(D)$ and thus an automorphism. Set
  $\bar c'' = h^{-1}(h_2(\bar d))$. Then, $h$ being an automorphism
  implies that $h^{-1}$ is an endomorphism, and thus
  $\psi(\bar c',\bar c'') \in \core(D)$ and  $T$ is
  satisfied at $(\bar c,\bar c')$.
\end{proof}
For a set of TGDs \Tmc and a CQ $q$,
a \emph{minimization of $q$ w.r.t.} \Tmc is
any CQ $q'$ that can be obtained from $q$ by repeatedly and exhaustively removing atoms, chosen arbitrarily, as long
as $q' \equiv_\Tmc q$ still holds. 
Clearly, for all minimizations $q_1,q_2$ of $q$ w.r.t.\ \Tmc we have $q_1 \equiv_\Tmc q_2$
since $q_1 \equiv_\Tmc q \equiv_\Tmc q_2$.
It is well-known that any minimization of a CQ $q$ w.r.t.\ the empty set of TGDs is exactly the core of $q$. Moreover,
it is easy to see that any minimization of $q$ w.r.t.\ a set of TGDs is a core (though not necessarily the core of $q$). A CQ that has no proper minimization w.r.t.\ \Tmc is said to be \emph{minimal w.r.t.} \Tmc.

\smallskip \noindent
{\bf Query Evaluation.}
We consider several modes of query evaluation: single-testing, all-testing, counting,  direct access in lexicographic order, and enumeration. Here, we define only the first two of them,  deferring the exact definition of the other modes to  subsequent  sections.
In all of the modes, a query evaluation algorithm for a CQ $q$ w.r.t.\ a set of TGDs~\Tmc, both over the same schema \Sbf, takes as an input an \Sbf-database $D$ that is promised to satisfy all TGDs in~$\Tmc$, and possibly additional inputs. 
We use \emph{data complexity} to measure the performance of an algorithm. \emph{Linear time} thus  means time $f(||q||+||\Tmc||) \cdot \Omc(||D||)$, where $f$ is a computable function, and \emph{constant time} depends only on $q$ and \Tmc, but not on $D$. 
For~an~input of size $n$, we use the word-RAM computation model with $O(\log(n))$-bit words and unit-cost operations. As a consequence, the values stored in the registers are bounded by~$n^c$ for some constant $c$, so we~can sort $m$ such values in time $O(c(m+n))$ using radix sort.

Like previous related results,
all hardness results in this article are \emph{conditional lower bounds}
in that they rely on the assumed hardness of a known problem. In particular, we shall use the \emph{hyperclique hypothesis}
which claims that, for all $k\geq 3$, it is impossible to determine the existence of a $k$-hyperclique in a $(k-1)$-uniform hypergraph with $n$ vertices in time $O(n^{k-1})$. We remark that this implies the \emph{triangle hypothesis},
which is the case $k=2$, 
stating that it is not possible to detect a triangle in an undirected graph with $n$ vertices in time $O(n^2)$. 
It also implies the \emph{BMM hypothesis} which states that two Boolean
$n \times n$ matrices cannot be multiplied in
time $O(n^2)$, see Proposition~5.16 of \cite{bringmann2025unbalanced}.

We now introduce single-testing
and all-testing. \emph{Single-testing} a query $q(\bar x)$ w.r.t.\ a set of TGDs \Tmc, both over the same schema \Sbf, is the problem to decide, given as input an \Sbf-database $D$ that satisfies all TGDs in \Tmc and a tuple $\bar c \in \dom(D)^{|\bar x|}$,  whether $\bar c \in q(D)$.
In \emph{all-testing} $q$ w.r.t.\ \Tmc,
the initial input is an \Sbf-database 
$D$ that satisfies all TGDs in \Tmc and there is a \emph{preprocessing phase}
in which the algorithm may produce suitable data 
structures, but no output,
and a \emph{testing phase}, where the algorithm repeatedly receives a tuple $\bar c$ as input and decides whether $\bar c \in q(D)$. The \emph{preprocessing time} of the algorithm is the time spent in the preprocessing phase, while the \emph{testing time} is the time it takes the algorithm to output a decision given a tuple during the testing phase. 

\section{Getting Started}\label{sec:pre-discussion}

We present examples that illustrate how TGDs can affect the complexity of query evaluation. We use Boolean CQs~$q$, for which all considered evaluation modes coincide and amount to deciding whether $q(D) \neq \emptyset$ or, equivalently, whether there is a homomorphism from $q$ to~$D$. We refer to this task as \emph{(Boolean) query evaluation}. For  Boolean CQs~$q$, this problem is known to be solvable in linear time if and only if $q$ is  acyclic, assuming the hyperclique hypothesis~\cite{yannakakis-algotrithm,BraultBaron,carmeli-enum-sjs}.

To get a first idea of the effect of adding TGD constraints, consider the (Boolean) query $\qtri\leftarrow R_1(x_1,x_2),R_2(x_2,x_3),R_3(x_3,x_1)$ which is cyclic and thus cannot be evaluated in
linear time on unrestricted databases. In contrast, $\qtri$ can be evaluated in linear time on databases that satisfy the TGD $\Ttri:R_1(x_1,x_2),R_2(x_2,x_3)\rightarrow R_3(x_3,x_1)$ because we can remove the atom $R_3(x_3,x_1)$  to obtain an  acyclic CQ $\qtri'$  with $\qtri \equiv_\Tmc \qtri'$.
This observation raises the   question
whether, given
a~CQ~$q$ and a~set of TGDs~\Tmc, there is a systematic way to construct  a CQ $q'$ such
that $q \equiv_\Tmc q'$ and the
complexity of evaluating $q$ over
databases that satisfy \Tmc is the same as the complexity of evaluating $q'$ over unrestricted databases. If this is possible, then it clearly allows us to adapt known results about the complexity of CQ evaluation on unrestricted databases to the setting with TGDs. 
The example may also suggest that we can always find such a $q'$ by removing
atoms. This, however, is not the case.
Consider, for instance, the 
TGD $\Ttriguard: R_1(x_1,x_2),R_2(x_2,x_3),R_3(x_3,x_1)\rightarrow S(x_1,x_2,x_3)$.
We can add to  $\qtri$ 
the atom $S(x_1,x_2,x_3)$  and obtain an acyclic CQ $\qtri''$  with $\qtri \equiv_\Tmc \qtri''$. Moreover, it is easy to see that removing atoms will not yield a CQ with these properties.

At this point, it might be tempting to think that there is simply a delicate question of choosing which atoms to keep, insert, and remove in order to turn a query into its most tractable form. Our first result, however, is that sometimes there is no equivalent acyclic CQ even though the problem  admits linear time evaluation. 

\begin{proposition}\label{prop:cyclic-easy-query}
There exists a Boolean CQ $\qbad$ and a set of TGDs $\Tbad$ such that:
\begin{enumerate}
    \item $\qbad$ admits linear time evaluation over databases that satisfy $\Tbad$; and
    \item there is no acyclic CQ $q$ such that $q(D)=\qbad(D)$ for every database $D$ that satisfies $\Tbad$.
\end{enumerate}
\end{proposition}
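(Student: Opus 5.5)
The plan is to separate the two notions that the examples $\qtri$ and $\qtriguard$ conflate. One is \emph{linear-time evaluability over databases satisfying $\Tbad$}. The other is \emph{equivalence under $\Tbad$ to an acyclic CQ}. The first allows the algorithm to branch on the database, so a union of acyclic conditions is enough. The second asks for a single CQ. So I would design $\Tbad$ so that, on databases satisfying $\Tbad$, the cyclic query $\qbad$ is equivalent to a disjunction $p_1 \vee p_2$ of two acyclic Boolean CQs, neither of which implies the other under $\Tbad$.

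Concretely, I would take $\qbad$ to be a cycle (a square with an extra ``angle'' atom, in the spirit of the notation) over relations $R_1,\dots,R_4$ plus a unary marker. $\Tbad$ consists of two non-recursive TGDs, each of which completes the cycle from a different acyclic fragment:
\begin{itemize}
\item the first fires on an $R_1$--$R_2$ path together with a marker $A$ on its start;
\item the second fires on an $R_3$--$R_4$ path together with a marker $B$.
\end{itemize}
Each head creates the missing atoms, using existential variables where needed, so that a homomorphic image of $\qbad$ appears.

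The first real step is to check, via the chase and Lemma~\ref{lem:chasingqueryequiv}, that for every $D \models \Tbad$ we have $\qbad(D)\neq\emptyset$ iff $p_1(D)\neq\emptyset$ or $p_2(D)\neq\emptyset$ or $D$ contains a ``genuine'' cycle. The genuine cycle is the delicate case. The design goal is that any genuine homomorphic image of $\qbad$ in $D$ already contains a witness of $p_1$ or $p_2$, which I would enforce by placing the markers inside $\qbad$ itself. Once this equivalence holds, part~1 follows: run Yannakakis on the acyclic $p_1$ and $p_2$ and return the disjunction.

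For part~2, suppose some acyclic $q$ satisfies $q\equiv_{\Tbad}\qbad$. I would consider two databases, $D_1=\mn{ch}_{\Tbad}(p_1)$ and $D_2=\mn{ch}_{\Tbad}(p_2)$. Both satisfy $\Tbad$ and make $\qbad$ true, so $q$ maps homomorphically into each. Since $q$ is a single CQ, I then aim to build a third database $D_3\models\Tbad$ into which $q$ also maps but which satisfies neither $p_1$ nor $p_2$. This gives the contradiction $\qbad(D_3)=\emptyset\neq q(D_3)$.

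The natural candidate for $D_3$ uses acyclicity essentially. An acyclic CQ that maps into a cyclic structure also maps into its tree-like unravelling (equivalently, into a cover that breaks the cycle). I would therefore take a suitable unravelling or product of $D_1$ and $D_2$, close it under $\Tbad$ by chasing, and verify two things: that the chase does not recreate a witness for $p_1$ or $p_2$, and that $q$ still maps into it. An alternative route uses minimization. Any minimization of $q$ w.r.t.\ $\Tbad$ is a core. Comparing it with $\mn{ch}_{\Tbad}(\qbad)$ via the back-and-forth homomorphisms given by equivalence then shows it must contain a cycle.

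I expect the main obstacle to be getting the exact design of $\Tbad$ right. The TGDs must be strong enough to make $\qbad$ collapse to a disjunction of acyclic CQs, but weak enough that they do not force one disjunct to imply the other. They must also not make $\qbad$ equivalent to a single acyclic CQ obtained by adding a guard atom, which is the $\qtriguard$ phenomenon. In particular, the unravelled $D_3$ must remain closed under $\Tbad$ without reintroducing a cycle. I would first try it with $\Tbad$ full and frontier-guarded, so that chasing the unravelling is easy to control, and fall back to existential heads only if the full version makes $\qbad$ equivalent to a single acyclic CQ.
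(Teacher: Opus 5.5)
\textbf{There is a genuine gap: the intermediate goal you set for part~1 contradicts part~2.} You propose non-recursive (or full) TGDs, so $\Tbad$ is chase terminating. Hence $D^*=\mn{ch}_{\Tbad}(\qbad)$ is a finite database that satisfies $\Tbad$ and has $\qbad(D^*)\neq\emptyset$.

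Suppose, as you intend, that $\qbad(D)\neq\emptyset$ holds iff $p_1(D)\neq\emptyset$ or $p_2(D)\neq\emptyset$, for every $D$ satisfying $\Tbad$. Then some $p_i$, say $p_1$, maps into $D^*$. By universality of the chase, every homomorphism from $\qbad$ into a database $D$ satisfying $\Tbad$ extends to one from $D^*$ into $D$. So $\qbad$ implies $p_1$ under $\Tbad$. Together with the converse implication from your disjunction, this gives $\qbad\equiv_{\Tbad}p_1$ with $p_1$ acyclic, which is exactly what part~2 forbids. It also makes $p_2$ imply $p_1$, against your non-implication requirement.

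In general, a CQ that is equivalent under chase-terminating TGDs to a union of CQs is already equivalent to one of the disjuncts. So no choice of markers and heads rescues the design, and the database $D_3$ you want for part~2 cannot exist. The problem surfaces in your ``genuine cycle'' step: $D^*$ is itself a genuine image of $\qbad$, and requiring it to contain a witness for $p_1$ or $p_2$ is what collapses $\qbad$ to an acyclic CQ.

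\textbf{The missing idea:} the linear-time algorithm must exploit the TGDs in a way that no (union of) CQ rewriting captures. The paper uses $\qbad()\leftarrow B(x_1,x_2),R(x_3,x_2),T(x_3,x_4),L(x_1,x_4),S(x_1,x_3)$ with two full TGDs. $\Ttop$ derives the $T$-atom from the other four atoms of $\qbad$, and $\Tbot$ derives the $B$-atom from the other four. The algorithm works as follows.
\begin{enumerate}
\item Semijoin-filter $S$ to the facts $S(a,c)$ that have some $L(a,\cdot)$ and some $R(c,\cdot)$.
\item For each such fact, pick a single arbitrary $L(a,d)$ and test whether $T(c,d)\in D$.
\end{enumerate}
$\Tbot$ makes a successful test sound, and $\Ttop$ guarantees that the arbitrary choice of $d$ loses nothing.

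Part~2 then needs no unravelling. Since $D_{\qbad}$ satisfies $\Tbad$, any $q$ equivalent to $\qbad$ uses only these binary relations and has no self-loop atoms. If $q$ were acyclic, $D_q$ would contain neither a triangle nor a self-loop. Both TGD bodies contain a triangle, so $D_q$ would satisfy $\Tbad$ vacuously. Equivalence would then force the triangle formed by $B$, $R$, $S$ in $\qbad$ to map into $D_q$, which is impossible.
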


\Cref{prop:cyclic-easy-query} is proved by the following example.

\begin{figure}[t]
    \centering
    \begin{tikzpicture}[scale=.5]

        \node (x1) at (0, 0) {$x_1$};
        \node (x2) at (3,0)  {$x_2$};
        \node (x3) at (3,3)  {$x_3$};
        \node (x4) at (0,3)  {$x_4$};

        \draw [-latex] (x1)--(x2) node [midway, below] {\scriptsize $B$};
        \draw [-latex] (x1)--(x4) node [midway, left] {\scriptsize $L$};
        \draw [-latex] (x1)to node [midway, left] {\scriptsize $S$} (x3);
        \draw [-latex] (x3)--(x2) node [midway, right] {\scriptsize $R$};
        \draw [-latex] (x3)--(x4) node [midway, above] {\scriptsize $T$};

        \node (x1) at (5,0) {$x_1$};
        \node (x2) at (8,0)  {$x_2$};
        \node (x3) at (8,3)  {$x_3$};
        \node (x4) at (5,3)  {$x_4$};

        \draw [-latex] (x1)--(x2) node [midway, below] {\scriptsize $B$};
        \draw [-latex] (x1)--(x4) node [midway, left] {\scriptsize $L$};
        \draw [-latex] (x1)to node [midway, left] {\scriptsize $S$} (x3);
        \draw [-latex] (x3)--(x2) node [midway, right] {\scriptsize $R$};
        \draw [dashed, -latex] (x3)--(x4) node [midway, above] {\scriptsize $T$};

        \node (x1) at (10, 0) {$x_1$};
        \node (x2) at (13,0)  {$x_2$};
        \node (x3) at (13,3)  {$x_3$};
        \node (x4) at (10,3)  {$x_4$};

        \draw [dashed,-latex] (x1)--(x2) node [midway, below] {\scriptsize $B$};
        \draw [-latex] (x1)--(x4) node [midway, left] {\scriptsize $L$};
        \draw [-latex] (x1)to node [midway, left] {\scriptsize $S$} (x3);
        \draw [-latex] (x3)--(x2) node [midway, right] {\scriptsize $R$};
        \draw [-latex] (x3)--(x4) node [midway, above] {\scriptsize $T$};

    \end{tikzpicture}
    \caption{Illustration of \Cref{ex:Bool-cyclic-easy}: the query $\qbad$ and the rules $\Ttop$ and $\Tbot$.}
    
    \label{fig:example-weird-easy}
  \end{figure}

\begin{example}\label{ex:Bool-cyclic-easy}
    Let $\qbad() \leftarrow B(x_1,x_2), R(x_3,x_2), T(x_3,x_4), L(x_1,x_4), S(x_1,x_3)$, and let $\Tbad$ be the set  that contains  the TGDs
    \begin{itemize}
        \item $\Ttop:L(x_3,x_4), S(x_1, x_3), R(x_3,x_2), B(x_1,x_2) \to T(x_3,x_4)$
        \item $\Tbot: L(x_3,x_4), S(x_1, x_3), R(x_3,x_2), T(x_3,x_4) \to B(x_1,x_2)$.
    \end{itemize}
    The CQ and TGDs are depicted in \Cref{fig:example-weird-easy}, with the facts added by TGD heads displayed as dashed edges.
    To evaluate $\qbad$ in linear time on a database $D$,  first filter the relation $S$ to only contain facts that join with $L$ and $R$ as required, then for every fact $S(a,c)$ of~$S$, choose a single fact $L(a,d)$ and check whether $T(c,d)$ holds. The TGD $\Tbot$ guarantees that the query has an answer if the check succeeds, while the TGD $\Ttop$ guarantees that this check suffices, that is, no other $L$-facts need to be checked.
    Moreover, a careful  analysis of $\qbad$
    and $\Tbad$ shows that
    there is no acyclic CQ $q$
    with $\qbad \equiv_{\Tbad} q$.
Details are in the appendix.
\end{example}

\begin{toappendix}
We next prove that \Cref{ex:Bool-cyclic-easy} satisfies the requirements of \Cref{prop:cyclic-easy-query}.
First, observe that the query $\qbad$ is cyclic. In the presence of $\Tbad$, it can be reduced to one of two minimal queries: the one obtained by removing the atom $T(x_3,x_4)$, and the one obtained by removing the atom $B(x_1,x_2)$. Both minimal equivalent queries are cyclic. We next show  that any query equivalent to $\qbad$, even when taking $\Tbad$ into account, is cyclic. In the following, we write $D \models q$
to mean $q(D) \neq \emptyset$.
\begin{lemma}\label{prop:ex-no-cyclic-eq}
    There is no acyclic CQ $q'$
    such that for every database $D$ satisfying $\Tbad$, $q'(D)=\qbad(D)$.
\end{lemma}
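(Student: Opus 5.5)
Suppose, for contradiction, that $q'$ is an acyclic Boolean CQ with $q'(D)=\qbad(D)$ for all $D$ satisfying $\Tbad$. The TGDs in $\Tbad$ are full, so the chase terminates. By Lemma~\ref{lem:chasingqueryequiv}, $\qbad \equiv_{\Tbad} \mn{ch}_{\Tbad}(\qbad)$. Moreover $D_{\qbad}$ already satisfies $\Tbad$: both heads are present in $\qbad$, and one checks that no other body match produces a new fact. So $\mn{ch}_{\Tbad}(\qbad)=\qbad$.

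I would first transfer homomorphisms both ways through chased canonical databases. Since $D_{\qbad}\models\Tbad$ and $\qbad(D_{\qbad})\neq\emptyset$, we get $q'(D_{\qbad})\neq\emptyset$, i.e.\ a homomorphism $h:q'\to\qbad$. Conversely, $\mn{ch}_{\Tbad}(q')$ is a database satisfying $\Tbad$ on which $q'$ is true, so there is a homomorphism $g:\qbad\to\mn{ch}_{\Tbad}(q')$.

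The central claim is that $g$ must essentially land inside $q'$ itself, so that some subquery of $q'$ maps homomorphically onto a copy of the $\qbad$ pattern. The image of $h$ is a subset of the atoms of $\qbad$. Acyclicity of $q'$ does not pass to its homomorphic image, so I instead look at the part of $q'$ through which $g$ factors. The facts added by the chase of $q'$ are $T$- or $B$-facts. Each is produced by a body match of $\Ttop$ or $\Tbot$ consisting of $L,S,R$ plus the other of $B$, $T$. I would then carry out a case analysis on which atoms of $g(\qbad)$ are chase-generated:
\begin{itemize}
\item If $g(T(x_3,x_4))$ is chase-generated, replace it by the body atoms of $\Ttop$ that produced it.
\item Symmetrically, if $g(B(x_1,x_2))$ is chase-generated, replace it by the body atoms of $\Tbot$.
\end{itemize}
The structure of the bodies (they share $S$, $R$ and one of the cycle atoms) should let me rewire $g$ into a homomorphism from $\qbad$ into $q'$ itself. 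Nested generation is not an issue: the TGDs are non-recursive in effect after one round, since a $T$ generated from $B$ cannot regenerate $B$ at new positions. This step is where the careful analysis lives, and I expect it to be the main obstacle. I would try to prove it by showing that every atom generated in $\mn{ch}_{\Tbad}(q')$ has its "witness cycle" $L,S,R,B$ or $L,S,R,T$ already present in $q'$.

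Once we have a homomorphism $g':\qbad\to q'$, note that $h\circ g'$ is an endomorphism of $\qbad$. The core of $\qbad$ is $\qbad$ itself: it is self-join free, so every endomorphism is the identity on atoms. Hence $g'$ is injective and $h$ restricted to $g'(\qbad)$ is a bijection onto $\qbad$. Using this retraction-like structure, I would show that the full subquery of $q'$ induced on the variables $g'(x_1),\dots,g'(x_4)$ contains a $5$-atom pattern whose hypergraph is the cycle $x_1x_2x_3x_4$ with chord $x_1x_3$. It also contains no atom covering the triangle $\{x_1,x_2,x_3\}$ or $\{x_1,x_3,x_4\}$, because such atoms would map under $h$ onto atoms of $\qbad$ that do not exist.

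Finally, acyclicity is preserved under taking induced subhypergraphs and under GYO reduction. Restricting the acyclic hypergraph of $q'$ to those four vertices should therefore yield an acyclic hypergraph, which is contradicted by the two uncovered triangles. The care needed here is that hyperedges of $q'$ containing extra variables restrict to subsets of $\{g'(x_i)\}$. These restrictions are again bounded by what $h$ allows, namely binary edges only, since every atom of $\qbad$ is binary. This gives the contradiction.
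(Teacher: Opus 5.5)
There is a genuine gap in your central step, the rewiring of $g:\qbad\to\mn{ch}_{\Tbad}(q')$ into a homomorphism $g':\qbad\to q'$. The construction you describe cannot work. Suppose $g(B(x_1,x_2))$ is chase-generated and you replace it by the $L,S,R,T$ body atoms of $\Tbot$ that produced it. There is still no $B$-atom of $q'$ for $B(x_1,x_2)$ to map to.

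In fact the claim is false unless acyclicity of $q'$ is used. The query $\qbad$ without $B(x_1,x_2)$ is exactly the body of $\Tbot$, so it is equivalent to $\qbad$ w.r.t.\ $\Tbad$. Yet it admits no homomorphism from $\qbad$, because it has no $B$-atom. Your sketch never says where acyclicity enters this step.

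Your side remark on nested generation is also wrong. The set is recursive ($B\preceq_{\Tbad}T\preceq_{\Tbad}B$). A $T$-fact produced by $\Ttop$ can combine with further $L,S,R$-facts in $\Tbot$ to yield $B$-facts at new positions.

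The missing idea is to use acyclicity at the very start. This is the paper's argument, and it makes both the case analysis and your later core/induced-subhypergraph argument unnecessary.
\begin{itemize}
\item From $h:q'\to D_{\qbad}$ you already know that every atom of $q'$ is binary with two distinct variables. So an acyclic $q'$ has a forest as its Gaifman graph.
\item The body of $\Ttop$ contains the triangle formed by its $S$, $R$, $B$ atoms. The body of $\Tbot$ contains the triangle formed by its $L$, $S$, $T$ atoms.
\item A homomorphic image of a triangle in a structure without self-loops is again a triangle. Hence neither body has a match in $D_{q'}$.
\item So $D_{q'}$ satisfies $\Tbad$ vacuously, and $\mn{ch}_{\Tbad}(q')=D_{q'}$.
\item Your $g$ therefore already maps $\qbad$, which contains a triangle, into a forest. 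That is impossible, which gives the contradiction.
\end{itemize}
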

\begin{proof}
    Notice that the canonical database $D_{\qbad}$ of $\qbad$, which satisfies $\Tbad$, contains only binary relations and no self-loops.
    Consider any query $q$ equivalent to $\qbad$ under $\Tbad$. Then
    $D_{\qbad} \models \qbad$
    implies  $D_{\qbad} \models q$, and consequently  $q$ contains only binary relations and no self-loops.
    Assume by way of contradiction that $q$ is acyclic. Clearly, any database that does not satisfy $\Tbad$ must contain a cycle or a self-loop. Consequently, $\Tbad$ is satisfied in the canonical database $D_q$ of~$q$. Together with $q \equiv_{\Tbad} \qbad$, this yields
    $D_q \models \qbad$. However, $\qbad$ is cyclic, and thus for  any acyclic database $D$ with only binary relations and no self-loops we must have $D \not\models \qbad$, contradicting the fact that $q$ is acyclic.
\end{proof}

As a consequence of \Cref{prop:ex-no-cyclic-eq}, we cannot get a linear time algorithm for evaluating $\qbad$ by solving an equivalent CQ. Instead, we show an ad-hoc algorithm for $\qbad$.

\begin{lemma}
    $\qbad$ admits linear time evaluation over databases satisfying $\Tbad$.
\end{lemma}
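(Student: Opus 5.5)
The plan is to make precise the ad-hoc algorithm sketched in \Cref{ex:Bool-cyclic-easy} and prove it correct using the two TGDs in opposite directions. I read the bodies of $\Ttop$ and $\Tbot$ as matching \Cref{fig:example-weird-easy}, i.e.\ containing $L(x_1,x_4)$; the atom $L(x_3,x_4)$ in the displayed rules looks like a typo. The whole argument rests on this reading.

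By definition, $D \models \qbad$ iff there are constants $a,b,c,d$ with $S(a,c), L(a,d), R(c,b), B(a,b), T(c,d) \in D$.

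\textbf{Algorithm.} Given $D$ satisfying $\Tbad$, proceed as follows.
\begin{enumerate}
\item Using radix sort, build in linear time an index that returns, for each constant $a$, one fixed fact $L(a,d_a)$ if one exists. Build the analogous index returning, for each $c$, one fixed fact $R(c,b_c)$.
\item Compute $S' = \{S(a,c)\in D \mid d_a \text{ and } b_c \text{ exist}\}$.
\item Form the list of candidate pairs $(c,d_a)$, one for each $S(a,c)\in S'$. This list has size at most $|S|$.
\item Decide which candidate pairs occur as facts $T(c,d_a)\in D$. Do this in linear time by radix-sorting the candidate list together with the $T$-relation and scanning once.
\item Answer ``yes'' iff some candidate pair $(c,d_a)$ is found in $T$.
\end{enumerate}
Every step is a constant number of sorts and scans over inputs of size $O(||D||)$. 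So the total time is linear.

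\textbf{Soundness.} Suppose the check succeeds for some $S(a,c)\in S'$, i.e.\ $T(c,d_a)\in D$. Then $L(a,d_a), S(a,c), R(c,b_c), T(c,d_a)$ all lie in $D$. This is exactly the body of $\Tbot$ under $x_1\mapsto a$, $x_2\mapsto b_c$, $x3\mapsto c$, $x_4 \mapsto d_a$. Since $D$ satisfies $\Tbot$, we get $B(a,b_c)\in D$. These five facts together give a homomorphism from $\qbad$ to $D$.

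\textbf{Completeness.} Suppose $D \models \qbad$ via a homomorphism with witnesses $a,b,c,d$. Then $L(a,d)\in D$ and $R(c,b)\in D$, so $d_a$ and $b_c$ exist and $S(a,c)\in S'$. Now $L(a,d_a), S(a,c), R(c,b), B(a,b)$ instantiate the body of $\Ttop$. Note that we use the original witness $b$ here, not $b_c$, and the chosen $d_a$ in place of $d$. Since $D$ satisfies $\Ttop$, we get $T(c,d_a)\in D$, so the check succeeds.

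\textbf{Main obstacle.} The delicate point is completeness. The key observation is that $\Ttop$ lets us swap the arbitrary witness $d$ for the canonically chosen $d_a$, because its body does not mention $T$ or the old $d$. This is why it suffices to test a single $L$-fact per $a$, and why the algorithm never has to examine all $L$-neighbours of $a$. By contrast, the remaining work is routine: the linear-time membership tests rely only on radix sort in the word-RAM model fixed in the preliminaries.
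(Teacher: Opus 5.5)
Your proof is correct and follows the paper's argument: filter $S$ to facts $S(a,c)$ that have an $L$-successor of $a$ and an $R$-successor of $c$, test one chosen $L(a,d_a)$ per such fact against $T$, use $\Tbot$ for soundness, and use $\Ttop$ with the original witness $b$ for completeness. Your reading of the TGD bodies as containing $L(x_1,x_4)$ is exactly what the paper's proof implicitly relies on. Your batched radix-sort membership test is a more explicit version of the paper's constant-time fact lookup.
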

\begin{proof}
Consider the following algorithm. First compute $S'$, the set of $S$-facts that can be joined with $L$ and $R$ facts, i.e. $S' = \{S(a,c) \mid \exists b,d.\ L(d,a), R(c,b)\}$.
Then, for every fact $S(a,c)\in S'$, take a fact $L(a,d) \in D$, and check whether $T(c,d) \in D$. If the check is positive for some fact, return true. Otherwise, return false.

We can compute $S'$ in linear time using semi-joins, and we can look up specific facts in constant time per fact. Overall, the algorithm runs in linear time.

If the algorithm returns true, the check is positive, and we have found the facts $S(a,c)$, $L(a,d)$, and $T(c,d)$ in $D$. Since $S(a,c)\in S'$, there exists $b$ with $R(c,b)\in D$. Due to the rule $\Tbot$, we also have $B(a,b)\in D$, and so $D \models q$.

If $D \models q$, there exists a homomorphism from $q$ to $D$, and denote by $a,b,c,d$ the values it assigns $x_1,x_2,x_3,x_4$ respectively. Then, $S(a,c)\in S'$ by definition. Consider the fact $L(a,d')\in D$ chosen by the algorithm. Due to the rule $\Ttop$, we have $T(c,d')\in D$. Thus, the check will be positive for $S(a,c)$ and the algorithm will return true.
\end{proof}

\end{toappendix}

\Cref{ex:Bool-cyclic-easy} demonstrates that the approach suggested at the beginning of this section for determining the complexity of a CQ under a set of TGDs is not always applicable. In the remainder of this paper, we identify classes of TGDs for which it is. \Cref{ex:Bool-cyclic-easy}
rules out several candidates including the class of chase-terminating TGDs and the class of full TGDs.

\section{Tagging-Friendly TGDs}

We introduce the main technical tool used in this paper, tagging, which annotates each fact of a  database with  variables from the query while
also translating the database to a different schema. Its main purpose is to bridge the gap between unrestricted  databases and databases constrained by a set of TGDs. Simultaneously, it enables the transition from self-join-free CQs to CQs with self-joins.

For every CQ $q(\bar x)$ we define a \emph{self-join free version} $q_{\mn{sjf}}(\bar x)$,
 obtained  from $q$ by replacing every atom $R(\bar y)$ with $R_{\bar y}(\bar y)$ where $R_{\bar y}$ is a fresh relation symbol of the same arity as $R$. 
For tuples $\bar y=y_1,\dots,y_\ell$ and $\bar c= c_1,\dots,c_\ell$ of the same length, we use 
$\bar y \otimes \bar c$ to denote the tuple
$\langle y_1, c_1\rangle, \dots, \langle y_\ell,c_\ell\rangle$ obtained by component-wise
pairing. The naming scheme
for relation symbols just introduced plays a central role in the tagging construction.
\begin{definition}[Tagging]
\label{def:tagging}
    Let $q(\bar x)$ be a CQ over some schema \Sbf and $D$ a database over the schema of $q_{\mn{sjf}}$. The \emph{tagging of $D$ with $q$} is the database $D_{\mn{tag}}$ over \Sbf that contains, for every atom $R_{\bar y}(\bar y)$ in $q_{\mn{sjf}}$ and every fact $R_{\bar y}(\bar c)\in D$, the fact $R(\bar y \otimes \bar c)$.
\end{definition}
Note that constructing the database $D_{\mn{tag}}$ from  $D$ is possible in time linear in~$|D|$. 
\begin{lemma}
\cite[Lemma 5]{carmeli-enum-sjs}.
\label{lem:tagging}
    Let $q(\bar x)$ be a CQ that is a core, $D$  a database in the schema of $q_{\mn{sjf}}$, and $D_{\mn{tag}}$ the tagging of $D$ with $q$. Then, $\bar c \in q_{\mn{sjf}}(D)$ if and only if $\bar x \otimes \bar c \in q(D_{\mn{tag}})$.
\end{lemma}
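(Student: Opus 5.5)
We prove the two directions separately. Throughout, write $\bar y \otimes \bar c$ for the component-wise pairing and, for every atom $\alpha = R_{\bar y}(\bar y)$ of $q_{\mn{sjf}}$, let $\alpha^\circ = R(\bar y)$ be the corresponding atom of $q$.

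\emph{($\Rightarrow$)} Suppose $\bar c \in q_{\mn{sjf}}(D)$, witnessed by a homomorphism $h$ from $q_{\mn{sjf}}$ to $D$ with $h(\bar x)=\bar c$. Define $g(v) = \langle v, h(v)\rangle$ for every $v \in \mn{var}(q)$. For each atom $R(\bar y)$ of $q$, the atom $R_{\bar y}(\bar y)$ lies in $q_{\mn{sjf}}$ and $R_{\bar y}(h(\bar y)) \in D$. By \Cref{def:tagging}, therefore, $R(\bar y \otimes h(\bar y)) = R(g(\bar y)) \in D_{\mn{tag}}$. So $g$ is a homomorphism from $q$ to $D_{\mn{tag}}$ with $g(\bar x) = \bar x \otimes \bar c$. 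This direction does not use the core assumption.

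\emph{($\Leftarrow$)} Suppose $g$ is a homomorphism from $q$ to $D_{\mn{tag}}$ with $g(\bar x)=\bar x\otimes\bar c$. Every constant of $D_{\mn{tag}}$ is a pair $\langle v, d\rangle$ with $v\in\mn{var}(q)$. Let $\pi_1$ and $\pi_2$ be the two projections and set $e = \pi_1\circ g$.

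First, $e$ is an endomorphism of $q$. Every fact of $D_{\mn{tag}}$ has the form $R(\bar y\otimes \bar d)$ with $R(\bar y)$ an atom of $q$, so projecting a fact onto first components gives an atom of $q$ with the same relation symbol. Moreover, $e$ is the identity on the answer variables, since $g(\bar x)=\bar x\otimes \bar c$.

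Second, because $q$ is a core (in the sense of homomorphisms fixing the answer variables), $e$ is an automorphism of $D_q$. Hence some power $e^k$ is the identity, and $e^{-1}=e^{k-1}$ is again an endomorphism fixing $\bar x$.

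Third, replace $g$ by $g' = g\circ e^{-1}$. This is a homomorphism from $q$ to $D_{\mn{tag}}$ with $\pi_1\circ g' = \mathrm{id}$ and $g'(\bar x)=\bar x\otimes\bar c$. Put $h=\pi_2\circ g'$. Then $g'(v)=\langle v,h(v)\rangle$ for all $v$ and $h(\bar x)=\bar c$.

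Fourth, $h$ is a homomorphism from $q_{\mn{sjf}}$ to $D$. Take an atom $R_{\bar y}(\bar y)$ of $q_{\mn{sjf}}$. Then $R(\bar y\otimes h(\bar y)) \in D_{\mn{tag}}$. By \Cref{def:tagging} this fact arose from some atom $R_{\bar y'}(\bar y')$ of $q_{\mn{sjf}}$ and a fact $R_{\bar y'}(\bar d)\in D$ with $\bar y'\otimes \bar d = \bar y\otimes h(\bar y)$. Comparing first components gives $\bar y'=\bar y$, and comparing second components gives $\bar d = h(\bar y)$. Thus $R_{\bar y}(h(\bar y))\in D$, as required.

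The main obstacle is the step that normalizes $g$ so that its first projection is the identity. This is exactly where the core hypothesis enters, through the automorphism argument. It is also where one must check that the relation-symbol naming $R_{\bar y}$ lets the atom be recovered from the tag. The fourth step needs care in the case where $q$ contains two atoms with the same symbol and the same variable tuple. These are literally the same atom, so $q_{\mn{sjf}}$ is well defined and no ambiguity arises.
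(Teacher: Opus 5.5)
Your proof is correct and follows the standard argument behind this lemma; the paper does not reprove it but cites Lemma~5 of Carmeli and Segoufin. For one direction you tag a witness via $v \mapsto \langle v, h(v)\rangle$; for the other you note that $\pi_1 \circ g$ is an endomorphism of $q$ fixing $\bar x$, use the core property to invert it, and read off second components — the same projection-to-tags reasoning the paper itself uses elsewhere, e.g.\ in the proof of \Cref{lem:non-recursive-construction}.
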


We remark that $q(D_{\mn{tag}})$ in \Cref{lem:tagging} may additionally contain answers $\langle y_1, c_1\rangle, \dots, \langle y_n,c_n\rangle$ such that
$(y_1,\dots,y_n) \neq (x_1,\ldots,x_n)$. When using the lemma, we shall be
careful that these surplus answers
can simply be ignored without spoiling
the targeted time bounds.
To showcase the use of \Cref{lem:tagging}, we note that self-joins do not affect the complexity of single-testing and all-testing, without constraining TGDs. This can be seen as an extension of what was already known for Boolean CQs~\cite{carmeli-enum-sjs}, where single-testing and all-testing coincide. 
\begin{lemma}\label{cor:sjs}
    Let $q(\bar x)$ be CQ over some schema \Sbf that is a core. Then,
    \begin{enumerate}
        \item $q$ admits single-testing  in linear time if and only if $q_{\mn{sjf}}$ does;
        \item $q$ admits all-testing with linear preprocessing time and constant testing time if and only if $q_{\mn{sjf}}$ does.
    \end{enumerate}
\end{lemma}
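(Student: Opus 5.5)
Both items follow the same pattern: two directions, each a linear-time reduction between the two query problems, with \Cref{lem:tagging} carrying the hard direction.

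\textbf{Easy direction: from $q_{\mn{sjf}}$ to $q$.} Suppose $q_{\mn{sjf}}$ admits the algorithm and we are given an $\Sbf$-database $D$. We build a database $D'$ over the schema of $q_{\mn{sjf}}$ as follows. For every atom $R_{\bar y}(\bar y)$ of $q_{\mn{sjf}}$, we put into $D'$ the fact $R_{\bar y}(\bar c)$ for every fact $R(\bar c)\in D$. There are only constantly many atoms, so $D'$ has size $O(||D||)$ and is built in linear time. Homomorphisms from $q$ to $D$ and from $q_{\mn{sjf}}$ to $D'$ coincide as variable assignments, so $q(D)=q_{\mn{sjf}}(D')$. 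Running the $q_{\mn{sjf}}$ algorithm on $D'$ therefore decides single-testing. For all-testing we do the same in the preprocessing phase and forward each candidate tuple unchanged, which keeps the testing time constant.

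\textbf{Harder direction: from $q$ to $q_{\mn{sjf}}$.} Suppose $q$ admits the algorithm and we are given a database $D$ over the schema of $q_{\mn{sjf}}$. We compute $D_{\mn{tag}}$ in linear time, as noted after \Cref{def:tagging}. Since $q$ is a core, \Cref{lem:tagging} gives $\bar c\in q_{\mn{sjf}}(D)$ iff $\bar x\otimes\bar c\in q(D_{\mn{tag}})$.

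For single-testing with candidate $\bar c$, we run the $q$ algorithm on $D_{\mn{tag}}$ with candidate $\bar x\otimes\bar c$. There is one small point to check: $\bar x\otimes\bar c$ should lie in $\dom(D_{\mn{tag}})^{|\bar x|}$. If some pair $\langle x_i,c_i\rangle$ does not occur in $D_{\mn{tag}}$, we answer ``no'' directly. This is correct because every answer variable occurs in some atom, so any answer must use only values that occur.

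For all-testing, the preprocessing runs the $q$ algorithm's preprocessing on $D_{\mn{tag}}$. We also store a lookup structure for the active domain of $D_{\mn{tag}}$, so that the membership check above takes constant time. Each test on $\bar c$ is then translated into a test on $\bar x\otimes\bar c$. Because $\bar x$ is fixed, this translation takes constant time.

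\textbf{Expected obstacle: encoding the paired constants.} The pairs $\langle y,c\rangle$ must be valid constants of size $O(1)$ words, so that $||D_{\mn{tag}}||=O(||D||)$ and translating a candidate is constant time. We handle this by encoding $\langle y,c\rangle$ as $c\cdot k+\mathrm{idx}(y)$, where $k=|\mn{var}(q)|$. This encoding is injective and computable in $O(1)$, and values stay polynomially bounded, as the word-RAM model requires.

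The surplus answers of $q(D_{\mn{tag}})$ mentioned after \Cref{lem:tagging} cause no harm. We only ever query tuples whose tags are exactly $\bar x$, so those extra answers are never consulted.
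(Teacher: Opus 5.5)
Your proof is correct and follows the paper's argument essentially verbatim. The `if' directions copy each $R$-fact into every $R_{\bar y}$ to obtain $q(D)=q_{\mn{sjf}}(D')$, and the `only if' directions tag $D$ with $q$, invoke \Cref{lem:tagging}, and test $\bar x\otimes\bar c$ on $D_{\mn{tag}}$; your active-domain check and $O(1)$-word encoding of pairs are sound and just make explicit details the paper leaves implicit.
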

\begin{proof}
   We treat Points~1 and~2 simultaneously. For the `if' directions,
   assume that we are given a database $D$ over \Sbf.
   We can construct in time linear
   in $|D|$ the database $D'=\{ R_{\bar y}(\bar c) \mid R(\bar c) \in D \text{ and } R(\bar y) \in q \}$ over the schema of $q_{\mn{sjf}}$. We then have 
    $q(D)=q_{\mn{sjf}}(D')$ and can test $q_{\mn{sjf}}$ on $D'$ in linear time.
   
   In the `only if' directions, we are given a database $D$ over the
   schema of $q_{\mn{sjf}}$. We may construct in  time linear in $|D|$
   the tagging $D_{\mn{tag}}$ of $D$ with $q$. By Lemma~\ref{lem:tagging}, $\bar c \in q_{\mn{sjf}}(D)$ if and only if $\bar x \otimes \bar c  \in q(D_{\mn{tag}})$. To test whether $\bar c \in q_{\mn{sjf}}(D)$, we can therefore test whether $\bar x \otimes \bar c \in q(D_{\mn{tag}})$, in linear time.
\end{proof}
\medskip
\noindent
\cref{cor:sjs} lifts known dichotomies for single-testing  and all-testing
\cite{DBLP:conf/pods/LutzP22} from self-join free CQs to the unrestricted case. For later use, we observe slightly stronger lower bounds that we would need here.
\begin{proprep}
\label{prop:singleallnoTGDs}
  Let $q(\bar x)$ be CQ that is a core. Then, assuming the hyperclique hypothesis,
    \begin{enumerate}
        \item $q$ admits single-testing  in linear time if it is weakly acyclic, and it does not admit single-testing in time $O(|D|+|\dom(D)|^2)$ otherwise;
        \item $q$ admits all-testing with linear preprocessing and constant testing time if  $q$ is free-connex, and it does not admit all-testing with preprocessing time $O(|D|+|\dom(D)|^2)$ and constant testing time otherwise.
    \end{enumerate}
\end{proprep}
\begin{proof}
  For all-testing, it is proved in \cite{DBLP:conf/pods/LutzP22} that (i)~every CQ $q(\bar x)$ that is free-connex admits all-testing
  with linear preprocessing and constant testing time, and (ii)~if $q$ is self-join free and not free-connex,  then it does not admit all-testing with linear preprocessing and constant testing time, assuming the hyperclique hypothesis. Point~(ii) is shown by reductions from the triangle hypothesis, the hyperclique hypothesis, and the BMM hypothesis, depending on the case. In the reductions from triangle detection and hyperclique detection, given an undirected (hyper)graph $G$ with $n$ vertices, one builds a database $D$ with $|\dom(D)| \in O(n)$. Likewise, in the reduction 
  from Boolean $n \times n$ matrix multiplication, one build a database $D$ with 
  $|\dom(D)| \in O(n)$.
  Based on this, it is easy to see that even an all-testing algorithm with preprocessing time $O(|D|+|\dom(D)|^2)$ and constant testing time would contradict the hyperclique hypothesis and the BMM hypothesis, respectively.
  It remains to apply Lemma~\ref{cor:sjs} and recall that the hyperclique hypothesis implies the triangle hypothesis and the BMM hypothesis.

  For single-testing, we start with the upper bound. It is 
  a classic result of Yannakakis that acyclic Boolean CQs can be evaluated in linear time \cite{yannakakis-algotrithm}. This result easily extends to acyclic Boolean CQs that that may use constants. Note that the definition of a join tree does not change in the presence of constants, and in particular the connectedness condition only applies to variables, but not to constants. What is more, there is an obvious linear time reduction from single-testing of a weakly acyclic CQ $q(\bar x)$ to the evaluation of  acyclic Boolean CQs: given a database $D$ and an answer candidate $\bar c$, let  $q'$ be obtained from $q$ by replacing the answer variables $\bar x$ with the  constants $\bar c$,  evaluate $q'$ over $D$, and return the result. Overall, we have thus seen that  single-testing acyclic CQs is possible in linear time.

Now for the lower bound.
Let $q(\bar x)$ be a CQ that is not weakly acyclic, and let $q'$ be the Boolean CQ obtained from $q$ by 
replacing every answer variable
with a distinct constant, giving rise to a candidate answer $\bar c$ to $q$. Then $q'$ is cyclic and consequently,
evaluating the self-join free version $q'_{\mn{sjf}}$ of $q'$
is not  possible in linear time, assuming the hyperclique hypothesis~\cite{BraultBaron}.
In the reduction from hyperclique detection, given a hypergraph with $n$ vertices, one  builds a database $D$ with $|\dom(D)| \in O(n)$. 
By Lemma~\ref{cor:sjs}, whose proof clearly extends to CQs with constants, the same is true for $q'$ itself.
What is more, there is an obvious linear time reduction
from the evaluation of $q'$ to
single-testing $q$. Consequently, a single-testing
algorithm for $q$ that runs in time $O(|D|+|\dom(D)|^2)$ would
contradict the hyperclique hypothesis.
\end{proof}
\medskip
\noindent
Let us now switch to the setting where TGD constraints are present.
The following definition summarizes the
properties that we would like a set of TGDs \Tmc to satisfy so that we can use tagging to clarify the complexity of evaluating CQs w.r.t.\ \Tmc.
\begin{definition}\label{def:taggingfriendly}
    Let \Tmc be a finite set of TGDs over a schema~\Sbf. 
    We say that \Tmc is \emph{tagging friendly} with time
bound~$t:\mathbb{N} \times
\mathbb{N} \rightarrow \mathbb{N}$ if for every CQ $q(\bar x)$ over \Sbf, there exists a CQ $q'(\bar x)$ such that $q \equiv_\Tmc q'$ and given a database $D$ over the schema of 
$q'_{\mn{sjf}}$, we can construct in time $t(|D|,|\dom(D)|)$
a database $D_{\mn{tag}}$ over \Sbf with   $\dom(D_{\mn{tag}}) \subseteq \mn{var}(q') \times \dom(D)$ such that
\begin{enumerate}

  \item $D_{\mn{tag}}$ satisfies all TGDs in \Tmc and 
        
  \item $\bar c \in q'_{\mn{sjf}}(D)$ if and only if $\bar x \otimes \bar c \in q(D_{\mn{tag}})$.

\end{enumerate}
We then call  $q'$ a \emph{tagging companion} of $q$ w.r.t.\ \Tmc.
\end{definition}
For brevity, we  say that \Tmc is tagging friendly \emph{with time bound} 
$O(f(|D|,|\dom(D)|))$ if $\Tmc$ is tagging friendly with 
some time bound $t(m,n) \in O(f(m,n))$, and that  \Tmc is tagging friendly \emph{with time bound}
$O(f(|D|))$ if $t$ is tagging friendly with 
some time bound $t(m,n) \in O(f(m,1))$.

\medskip

Note that Definition~\ref{def:taggingfriendly} does not insist that the database $D_{\mn{tag}}$ is constructed through tagging. However, the only way  for constructing 
$D_{\mn{tag}}$ that we are aware of involves tagging, combined with additional constructions that make sure that $D_{\mn{tag}}$ satisfies all TGDs in \Tmc. Note that, to achieve the latter, we cannot afford to chase with $\Tmc$ because the classes of TGDs that we are concerned with admit unrestricted CQs as the body, and thus checking whether a TGD  is applicable in a database $D$ requires to decide the existence of a homomorphism from a fixed, but unrestricted graph to $D$. In general, this destroys the targeted time bounds, which are preferably linear in $|D|$, but at most quadratic.

\begin{toappendix}

The following lemmas prepare for proving that, under
certain additional restrictions, sets of non-recursive TGDs are
tagging friendly. Let \Tmc be a set of TGDs and $q(x)$ a CQ. We refer to the facts in $\mn{ch}_\Tmc(q) \setminus D_q$ as \emph{chase facts}.
The lemmas shows that since we are using the Skolem chase,
(i)~$\Tmc$ is satisfied in $\mn{ch}_\Tmc(q)$ using only chase facts to witness TGD heads and (ii)~the answer $\bar x \in q(\mn{ch}_\Tmc(q))$ is only witnessed by homomorphisms that do not involve chase facts.
\begin{lemma}\label{lem:non-recursive-queries1}
    Let $\Tmc$ be a set of TGDs and  $q(\bar x_0)$  a CQ that is minimal w.r.t.~\Tmc. Then, for every TGD $\phi(\bar x,\bar y) \rightarrow
\exists \bar z \, \psi(\bar x,\bar z)$ in $\Tmc$ and every homomorphism $h$ from $\phi$ to $\mn{ch}_\Tmc(q)$, there is a homomorphism $g$ from $\psi$ to $\mn{ch}_\Tmc(q)\setminus D_{q}$ such that $h(x)=g(x)$ for all frontier variables $x$.
\end{lemma}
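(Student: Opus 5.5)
The plan is to use the structure of the Skolem chase directly, and to invoke minimality only where full TGD heads may coincide with atoms of $q$. Fix a TGD $T=\phi(\bar x,\bar y)\rightarrow\exists\bar z\,\psi(\bar x,\bar z)$ in $\Tmc$ and a homomorphism $h$ from $\phi$ to $\mn{ch}_\Tmc(q)$. Every fact in $h(\phi)$ is added at some finite stage of a fair chase sequence $I_0=D_q,I_1,\dots$. Since $\phi$ is finite, there is an index $i$ with $h(\phi)\subseteq I_i$.

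By fairness, either $T$ is applied at $h(\bar x)$ at some later stage, or $\psi(h(\bar x),f_\psi(\bar z))\subseteq I_j$ already holds for some $j$. In both cases we have $\psi(h(\bar x),f_\psi(\bar z))\subseteq\mn{ch}_\Tmc(q)$. We then set $g(\bar x)=h(\bar x)$ and $g(\bar z)=f_\psi(\bar z)$, and it remains to show that $g(\psi)$ avoids $D_q$.

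\textbf{Case 1: $T$ is not full.} Every head atom of $\psi$ that contains a variable of $\bar z$ is mapped to a fact that contains a Skolem null. Such a fact cannot lie in $D_q$, whose domain consists of variables of $q$ viewed as constants. Head atoms that use only frontier variables, and full TGDs in general, are the real issue.

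\textbf{Case 2: some head atom $\alpha$ is mapped by $g$ onto a fact $\beta\in D_q$.} I would use minimality here. Consider $q^-$, obtained from $q$ by removing the atom $\beta$. If I can show that $\beta\in\mn{ch}_\Tmc(q^-)$, then $q^-\equiv_\Tmc q$, contradicting minimality. The two directions go as follows.
\begin{itemize}
\item Answers to $q$ are answers to $q^-$ because $q^-\subseteq q$.
\item Conversely, if $D\models\Tmc$ and $h'$ is a homomorphism from $q^-$ to $D$, then $h'$ extends to a homomorphism from $\mn{ch}_\Tmc(q^-)$ to $D$. This uses the standard universality of the chase, which underlies Lemma~\ref{lem:chasingqueryequiv}. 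Restricting that extension to $D_q\subseteq\mn{ch}_\Tmc(q^-)$ yields a homomorphism from $q$ to $D$ with the same answer tuple.
\end{itemize}

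To get $\beta\in\mn{ch}_\Tmc(q^-)$, I would replay the chase derivation of $h(\phi)$ starting from $D_q\setminus\{\beta\}$, by induction along the chase sequence. The main obstacle is that this derivation may itself use $\beta$, so the inductive replay is circular; the degenerate case is a TGD $R(x,y)\rightarrow R(x,y)$. I would first try to resolve this by reading ``$\mn{ch}_\Tmc(q)\setminus D_q$'' as the set of facts \emph{produced} by chase applications, with produced copies kept distinct from the input facts. Under that reading, the argument of Case~1 together with fairness already yields $g$, and minimality is only needed to rule out that a derivation strictly requires $\beta$ in order to re-derive $\beta$ in a non-trivial way. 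Failing that, I would strengthen the induction to show that every derivation of $\beta$ can be pruned to one whose body images avoid $\beta$. This is exactly where minimality w.r.t.\ $\Tmc$, as opposed to mere core-ness, must be used.
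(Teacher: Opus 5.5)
Your setup and Case~1 match the paper's proof: the Skolem chase supplies $g$ with $g(\bar z)$ fresh nulls, so head atoms containing a $\bar z$-variable land on chase facts. The gap is Case~2, which you leave open, and neither of your two exits closes it.

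Reading $\mn{ch}_\Tmc(q)\setminus D_q$ as `produced facts' changes the statement. Moreover, under the paper's Skolem chase a TGD is not applicable where its head image already exists, so nothing is produced there. Pruning derivations cannot work either, because your degenerate case is an actual counterexample to the lemma as stated. Take $\Tmc=\{R(x,y)\rightarrow R(x,y)\}$ and $q()\leftarrow R(u,v)$. Then $q$ is minimal, since dropping its atom gives a query that is true on the empty database. The chase adds nothing, so $\mn{ch}_\Tmc(q)\setminus D_q=\emptyset$ and no $g$ exists. A non-degenerate instance is $\Tmc=\{R(x,y)\rightarrow S(x,y),\,S(x,y)\rightarrow R(x,y)\}$ with $q()\leftarrow R(a,b)$, where $h$ sends the body of the second TGD to $S(a,b)$.

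So minimality alone cannot carry Case~2. The paper's proof has the same hole: it disposes of a frontier-only head atom in one sentence (`cannot be in $D_q$ because $q$ is minimal').

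The missing ingredient is non-recursiveness, which is the only setting in which the lemma is used (Lemma~\ref{lem:non-recursive-construction}). It must be read strictly: besides antisymmetry of $\preceq_\Tmc$, no TGD may use the same relation symbol in body and head (your example does satisfy antisymmetry). Then $S\not\preceq_\Tmc R$ for the head symbol $S$ of $\beta$ and every body symbol $R$ of $T$, and your replay is no longer circular:
\begin{itemize}
\item By induction along the chase sequence, every fact of $\mn{ch}_\Tmc(q)$ over a symbol $R$ with $S\not\preceq_\Tmc R$ is derived only from facts over such symbols. None of these is $\beta$.
\item Hence each such fact also lies in $\mn{ch}_\Tmc(q^-)$ for $q^-=q\setminus\{\beta\}$, since the replay yields the same Skolem terms and fixes the constants of $D_q$.
\item So $h(\phi)\subseteq\mn{ch}_\Tmc(q^-)$, and fairness puts $\beta=S(h(\bar u))$ there too.
\item Your universality argument then gives $q^-\equiv_\Tmc q$, contradicting minimality. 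The arguments of $\beta$ occur in $h(\phi)$, so $q^-$ still contains all answer variables.
\end{itemize}
This is the same $\preceq_\Tmc$-maximal-symbol idea the paper uses for Lemma~\ref{lem:non-recursive-queries2}.
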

    \begin{proof}
    Let $h$ be a homomorphism from $\phi$ to $\mn{ch}_\Tmc(q)$. By definition of the (Skolem!) chase, there is then a homomorphism $g$ from $\psi$ to $\mn{ch}_\Tmc(q)$  such that
    all variables in $\bar z$ are mapped to nulls introduced by the chase. We claim that $g$ is also
    a homomorphism from $\psi$ to $\mn{ch}_\Tmc(q)\setminus D_{q}$. Let $R(\bar u)$ be an atom in $\psi$, so $R(g(\bar u))$ is in $\mn{ch}_\Tmc(q)$. If $u$ contains a variable from
    $\bar z$, then $g(u)$ is a null introduced by the chase.
    As a consequence, $R(g(\bar u))$ must be a chase fact,
    thus in $\mn{ch}_\Tmc(q)\setminus D_{q}$. Now
    assume that $u$ contains no variable from $\bar z$.
    Then, $R(g(\bar u))$ cannot be in $D_q$ because  $q$ is minimal w.r.t.\ \Tmc. Thus, $R(g(\bar u))$ is in $\mn{ch}_\Tmc(q)\setminus D_{q}$, as required.
 \end{proof}
\begin{lemma}\label{lem:non-recursive-queries2}
    Let $\Tmc$ be a set of non-recursive TGDs and  $q(\bar x_0)$  a CQ that is minimal w.r.t.~\Tmc. Then, 
        every homomorphism $h$ from $q$ to $\mn{ch}_\Tmc(q)$ such that $h(\bar{x}_0)=\bar{x}_0$ is also a homomorphism from $q$ to $D_{q}$.
\end{lemma}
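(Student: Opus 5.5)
The plan is to argue by contradiction and reduce to minimality. Suppose $h$ is a homomorphism from $q$ to $\mn{ch}_\Tmc(q)$ with $h(\bar x_0)=\bar x_0$ that is not a homomorphism into $D_q$. Then some atom of $q$ is mapped by $h$ to a chase fact. The goal is to build from $h$ a homomorphism $h^*$ from $q$ into $\mn{ch}_\Tmc(q')$ that fixes $\bar x_0$, where $q'$ is $q$ with some atom removed. This would give $q' \equiv_\Tmc q$ and contradict minimality. The reason this suffices is that, by Lemma~\ref{lem:chasingqueryequiv} and the standard chase argument, a homomorphism from $q$ to $\mn{ch}_\Tmc(q')$ fixing answer variables yields $q' \subseteq_\Tmc q$, while $q \subseteq_\Tmc q'$ holds trivially since $q'$ has fewer atoms.

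To pick the atom to remove, I use non-recursiveness. Since $\preceq_\Tmc$ is a partial order on $\Sbf$, every chase fact $R(\bar c)$ has a derivation tree that ends at facts of $D_q$. Each TGD application only produces facts over relation symbols strictly above those in its body, except for symbols shared between body and head. Non-recursiveness forbids such sharing, because $R_1 \preceq R_2 \preceq R_1$ would force $R_1 = R_2$. I then choose an atom $\alpha = R(\bar u)$ of $q$ whose relation symbol $R$ is $\preceq_\Tmc$-maximal among the relation symbols occurring in $q$, and set $q' = q \setminus \{\alpha\}$.

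The key observation is that $\mn{ch}_\Tmc(q)$ and $\mn{ch}_\Tmc(q')$ differ only in facts whose derivations use $\alpha$. Every chase fact derived using $\alpha$ has a relation symbol strictly above $R$. By maximality of $R$, that symbol does not occur in $q$, so no atom of $q$ can be mapped by $h$ to such a fact. The facts of $\mn{ch}_\Tmc(q)$ that $h$ actually uses therefore lie in $(\mn{ch}_\Tmc(q') \cup \{\alpha\})$. This leaves two cases.
\begin{enumerate}
\item If $h$ never uses $\alpha$ itself, then $h$ is already a homomorphism into $\mn{ch}_\Tmc(q')$, which gives the contradiction directly.
\item If $h$ does use $\alpha$, then $\alpha$ is the image of some atom with the same symbol $R$.
\end{enumerate}

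The second case is the main obstacle, and this is where I would exploit the assumption that some atom $\beta$ is mapped to a chase fact. I would choose the atom to delete more cleverly: take a $\preceq_\Tmc$-maximal symbol $S$ among the atoms of $q$ that $h$ maps to chase facts. Then compose $h$ with itself, iterating $h^k$ so that its image stabilizes on an idempotent-like retract; since $D_q$ is finite, some power of $h$ restricted to $D_q$ is a retraction. Combining this with the fact that chase facts of symbol $S$ are derived only from atoms of symbols strictly below $S$, I would show that one atom of $q$ with symbol $S$ lies outside the image of this power of $h$ on $D_q$. Deleting that atom preserves the homomorphism and yields the contradiction with minimality.

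I expect the technical difficulty to be making this counting argument precise. Lemma~\ref{lem:non-recursive-queries1} will likely help by guaranteeing that TGD heads are witnessed by chase facts.
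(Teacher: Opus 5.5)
Your reduction to minimality, and your eventual choice of $S$ as a $\preceq_\Tmc$-maximal symbol among the chase facts in the image of $h$, match the paper's proof. However, the decisive step is left as ``I would show'': you never exhibit an $S$-atom of $q$ whose deletion preserves a homomorphism fixing $\bar x_0$. The mechanism you propose for that step does not work.

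Powers of $h$ are not defined. The map $h$ sends $\mn{var}(q)$ into $\dom(\mn{ch}_\Tmc(q))$, which contains nulls on which $h$ is undefined, and $h$ is not an endomorphism of $D_q$. So ``some power of $h$ restricted to $D_q$ is a retraction'' has no meaning as stated. Suppose you first extend $h$ to an endomorphism $\hat h$ of the chase. Then $\hat h^k$ can map the chase fact back into $D_q$ (for instance when $\hat h$ is an involution), so the very feature you need may vanish. Moreover, an atom missed by $\hat h^k$ need not be missed by $h$, and $S$ need not be maximal for the chase facts in $\hat h^k(q)$. So neither map is guaranteed to survive the deletion. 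Your first attempt, taking a symbol maximal among all symbols of $q$, is a dead end, as you noticed.

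No iteration is needed; a pigeonhole argument on $h$ itself suffices.
\begin{enumerate}
\item Let $q$ have $n$ atoms with symbol $S$. Then $h$ maps them to at most $n$ $S$-facts, and by the choice of $S$ at least one of these is a chase fact.
\item Hence at most $n-1$ of the $S$-atoms of $q$ lie in $h(q)$, so some $S$-atom $\alpha$ of $q$ is missed. Set $q'=q\setminus\{\alpha\}$.
\item The facts of $D_q$ that lie in $h(q)$ are then atoms of $q'$.
\item No chase fact in $h(q)$ has a derivation using an $S$-atom. Such a derivation would give $S\preceq_\Tmc S'$ with $S'\neq S$ for the symbol $S'$ of some chase fact in $h(q)$, contradicting the maximality of $S$. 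The case $S'=S$ is excluded because non-recursiveness rules out a chain of TGDs from $S$ back to $S$.
\end{enumerate}
Step~4 is the property actually required, not merely that $S$-facts are derived from strictly smaller symbols. It follows that $h$ is a homomorphism from $q$ into $\mn{ch}_\Tmc(q')$ fixing $\bar x_0$. This gives $q\equiv_\Tmc q'$ and contradicts minimality. The companion lemma about TGD heads being witnessed by chase facts plays no role here.
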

    \begin{proof}
    Assume to the contrary that there is a homomorphism $h$ from $q$ to $\mn{ch}_\Tmc(q)$ such that  $h(\bar{x}_0)=\bar{x}_0$ and the image of $q$ under
    $h$ contains a chase fact. Let 
    $\Gamma$ be the set of relation symbols
    that occur in chase facts in the image of $q$ under
    $h$,  and choose an $R \in \Gamma$
    that is maximal among all symbols in $\Gamma$
    regarding the order `$\preceq_\Tmc$'
    on relation symbols induced by the non-recursive set of TGDs \Tmc; in other words: there is no $S \in \Gamma$ such that $R \preceq_\Tmc S$.
    Then, there is no chase fact in the image of $q$ under
    $h$  whose
    derivation in the chase of $q$ with \Tmc depends on an atom that uses $R$. 

Assume that the number of $R$-atoms in $q$
is $n$. Then, there are also $n$ $R$-facts in the image of $q$ under
    $h$. We know that at least one of these $R$-facts is a chase fact, and thus at most $n-1$ of the $R$-facts in the image of $q$ under
    $h$ are atoms in $q$. Thus, there must be some atom $R(\bar z)$ in $q$ that is not in the range of $h$. Let $q'$ denote $q$ without~$R(\bar z)$. 
Since there is no chase fact in the image of $q$ under
    $h$  whose derivation depends on an $R$-atom in $q$, all facts in the range of 
$h$ are in  $\mn{ch}_\Tmc(q')$.
    Thus, $h$ is also a homomorphism from $q$ to $\mn{ch}_\Tmc(q')$. This means that $q \equiv_\Tmc q'$, contradicting the minimality of $q$ w.r.t.\ \Tmc.
 \end{proof}
\end{toappendix}

We next show that, under
certain additional restrictions, sets of non-recursive TGDs are
tagging friendly. To construct $D_{\mn{tag}}$, we use tagging and then `flood' with facts in a careful way to achieve that all TGDs are satisfied.
\begin{lemmarep}\label{lem:non-recursive-construction}
    Let \Tmc be a finite set of non-recursive TGDs over schema \Sbf, and let $k \geq 0$.
    Then, \Tmc is tagging friendly with time bound $O(|D|+|\dom(D)|^k)$
     if all TGDs in
    \Tmc satisfy one of the following conditions:
    \begin{enumerate}
    
        \item relation symbols in the head are of arity at most $k$;
        
        \item there are at most $k$ frontier variables.
        
    \end{enumerate}
    Moreover, for every CQ $q(\bar x)$ over \Sbf, any minimization
    of $q$ w.r.t.\ \Tmc is a tagging companion of $q$ w.r.t.~\Tmc. 
\end{lemmarep}
\begin{proof}
  Let \Tmc be a finite set of non-recursive TGDs over schema \Sbf, and let $k \geq 0$. We first prove the lemma under the assumption that \Tmc satisfies Condition~1
  and later sketch the modification required for 
  Condition~2.
Let $q(\bar x)$ be a CQ over \Sbf, and let
$q'(\bar x)$  be a minimization of $q$ w.r.t.\ \Tmc. 
Assume that we are given a database 
$D$ over the schema of $q'_{\mn{sjf}}$. We
have to show that we can construct in time $O(|D|+|\dom(D)|^k)$ a database $D_{\mn{tag}}$ over \Sbf that satisfies Conditions~1 and~2 from~\Cref{def:taggingfriendly}. We first
construct the tagging $D'_{\mn{tag}}$ of $D$
with $q'$ in time $O(|D|)$. Since $q'$ is
minimized w.r.t.~\Tmc, it is a core, and
thus \Cref{lem:tagging} yields $\bar c \in q'_{\mn{sjf}}(D)$ iff $\bar x \otimes \bar c \in q'(D'_{\mn{tag}})$ for all $\bar c \in \dom(D)^{|\bar x|}$.

The database $D'_{\mn{tag}}$ is not yet as desired because it does not necessarily satisfy the TGDs in~$\Tmc$. Since
we cannot chase with \Tmc while achieving the desired time bound, we  resort to the brute force approach of flooding all relevant relations with all possibly relevant facts.
With $\mn{atoms}(q)$, we denote the set of atoms in a query $q$.
Define
\[D_{\mn{tag}} = D'_{\mn{tag}} \cup \{R(\bar{y} \otimes \bar{c}) \mid R(\bar y)\in\atoms(\mn{ch}_\Tmc(q')) \setminus\atoms(q') \text{ and } \bar{c}\in\dom(D)^{|\bar y|}\}.\]

Recall that $q'$ is fixed, so the number of $R(\bar{y})$ atoms considered in this construction is a constant. Moreover, we assume that Condition~1 holds and thus the relation symbols in $\atoms(\mn{ch}_\Tmc(q')) \setminus\atoms(q')$ have arity at most $k$. Consequently, $D_{\mn{tag}}$ can be computed from $D'_{\mn{tag}}$ within $O(|\dom(D)|^k)$ time.
The overall construction time of $D_{\mn{tag}}$ from $D$ is thus $O(|D|+|\dom(D)|^k)$, as required. 

We make the following observation:
\begin{quote}

   ($*$) 
    if $R(\bar y \otimes c) \in D_{\mn{tag}}$, then $R(\bar y)$ is an atom in $\mn{ch}_\Tmc(q')$.
    
\end{quote}
In fact, $R(\bar y \otimes c)$ may already  be a fact in  
$D'_{\mn{tag}}$. Then, $R(\bar y)$ is an atom in
$q'$ by construction of $D'_{\mn{tag}}$. Otherwise, $R(\bar y \otimes c)$ was added during the construction of
$D_{\mn{tag}}$. Then, $R(\bar y)$ is an atom in  $\mn{ch}_\Tmc(q') \setminus q'$. In either case,
$R(\bar y)$ is an atom in $\mn{ch}_\Tmc(q')$.

We now show that $D_{\mn{tag}}$ satisfies all TGDs in $\Tmc$. Let  $T = \phi(\bar u,\bar y) \rightarrow
\exists \bar z \, \psi(\bar u,\bar z)$  be such a TGD.
If there is a homomorphism $h$ from $\phi$ to $D_{\mn{tag}}$, then by ($*$), $g = \pi_1 \circ h$, where $\pi_1$ is the projection to the first component, is a homomorphism  from $\phi$ to $\mn{ch}_\Tmc(q')$. Thus, by \Cref{lem:non-recursive-queries1}, there is also a homomorphism $g'$ from $\psi$ to the \emph{chase facts} of $\mn{ch}_\Tmc(q')$ that agrees with $g$ on the frontier variables. Since the construction of $D_{\mn{tag}}$ pairs all chase facts from  $\mn{ch}_\Tmc(q')$ with all elements of $\dom(D)$, this implies a homomorphism $h'$ from  $\psi$ to $D_{\mn{tag}}$ that agrees with $h$ on the frontier variables $\bar u$, as required for the satisfaction of $T$. More precisely, we may define $h'$ by choosing any $c \in \dom(D)$ and (i)~setting $h'(u)=h(u)$ for all frontier variables $u$ in $\bar u$ and (ii)~$h'(z)=(g'(z),c)$ for all
variables $z$ in $\bar z$.

We next argue that $\bar x \otimes \bar c \in q'(D'_{\mn{tag}})$ if and only if $\bar x \otimes \bar c \in q'(D_{\mn{tag}})$ for all $\bar c \in \dom(D)^{|\bar x|}$.
The `only if' direction is immediate since
$D'_{\mn{tag}} \subseteq D_{\mn{tag}}$
 and thus $q'(D'_{\mn{tag}})\subseteq q'(D_{\mn{tag}})$. For the `if' direction, consider an answer $\bar x \otimes \bar c \in q'(D_{\mn{tag}})$. Then, there is a homomorphism
 $h$ from $q'$ to $D_{\mn{tag}}$ with $h(\bar x)=\bar x \otimes \bar c$. By ($*$),
 $g = \pi_1 \circ h$ is a homomorphism from $q'$ to $\mn{ch}_\Tmc(q')$ with $g(\bar x)=\bar x$. According to \Cref{lem:non-recursive-queries2}, the image of $q'$ under $g$ contains no chase facts. But since all facts $R(\bar y \otimes \bar d) \in D_{\mn{tag}} \setminus D'_{\mn{tag}}$ are such that $R(\bar y)$
 is a chase fact in $\mn{ch}_\Tmc(q')$, this
 means that $h$ is actually a homomorphism from
 $q'$ to  $D'_{\mn{tag}}$. Consequently $\bar x \otimes \bar c \in q'(D'_{\mn{tag}})$.

From $q \equiv_\Tmc q'$ and since $D_{\mn{tag}}$
satisfies all TGDs in \Tmc, we further get $q'(D_{\mn{tag}})=q(D_{\mn{tag}})$.
Overall, we obtain $\bar c \in q'_{\mn{sjf}}(D')$ if and only if $\bar x \otimes \bar c \in q(D_{\mn{tag}})$, as required.

\medskip

The proof remains essentially identical when Condition~2 of Lemma~\ref{lem:non-recursive-construction}
is assumed in place of Condition~1, that is, when there are at most $k$ frontier variables. The only difference lies
in the construction of the database $D_{\mn{tag}}$, which is 
now defined as 
$$
\begin{array}{r@{\;}l}
D_{\mn{tag}} = D'_{\mn{tag}} \cup \{R(\bar{y} \otimes \bar{c}) \mid &R(\bar y)\in\atoms(\mn{ch}_\Tmc(q')) \setminus\atoms(q') \text{ and } \\[1mm]& \bar{c}\in\dom(D)^{|\bar y|} \text{ uses only $k$ distinct constants} \}.
\end{array}
$$
Let $r$ be the maximum arity of relation symbols in \Sbf. Note that
for our purposes, $r$ is a constant. It is easy to see that the 
number of tuples $\bar c$ considered in the definition of $D_{\mn{tag}}$
is bounded by $|\dom(D)|^k \cdot k^r$.
As a consequence,  $D_{\mn{tag}}$ can again be constructed in time $O(|D|+|\dom(D)|^k)$.

One can show in the same way as before that $D_{\mn{tag}}$ satisfies
all TGDs in \Tmc. In particular, the crucial step at the end of concluding that there is a homomorphism $h'$ from $\psi$ to $D_{\mn{tag}}$
that agrees with $h$ on the frontier variables $\bar x$ still goes
through: instead of choosing any $c \in \dom(D)$, we now choose such a $c$ such that it occurs in a second component of some  pair $h(u)$, for a frontier variable $u$. If the TGD has no frontier variables, we again choose $c \in \dom(D)$ arbitrarily.
\end{proof}
\cref{lem:non-recursive-construction} only turns out to be useful for us in
the case that $k \in \{1,2\}$ as otherwise, the high time bound is prohibitive.
We next observe that sets of TGDs that are both frontier-guarded and full are also tagging friendly. Here, we construct $D_{\mn{tag}}$
by first filtering out from $D$ certain facts that cannot
possibly be involved in matches of $q$, and then tagging. The former serves to ensure that all
TGDs are satisfied.
\begin{lemmarep}\label{lem:guarded-friendly}
    Let \Tmc be a finite set of frontier-guarded full TGDs over  schema \Sbf. Then, \Tmc is tagging friendly with time bound $O(|D|)$, and for every CQ $q(\bar x)$ over \Sbf, the core of $\mn{ch}_\Tmc(q)$ is a tagging companion of $q$ 
    w.r.t.~\Tmc.
\end{lemmarep}
\begin{proof}
Define $q'$ to be the core of $\mn{ch}_\Tmc(q)$, and assume we are given a database $D$ over the schema of $q'_{\mn{sjf}}$. We
show that we can construct in time $O(|D|)$ a database $D_{\mn{tag}}$ over \Sbf that satisfies \Tmc such that $\bar c \in q'_{\mn{sjf}}(D)$ if and only if $\bar x \otimes \bar c \in q(D_{\mn{tag}})$.

We first construct from $D$
a database $D_0$ by filtering out certain
facts that are irrelevant for answering $q'_{\mn{sjf}}$. This step will ensure that
the database $D_{\mn{tag}}$ ultimately
constructed satisfies all TGDs in~$\Tmc$. More
precisely, $D_0$ is obtained from $D$ by removing every fact $R_{\bar y}(\bar a)$ such that either:
\begin{itemize}
    \item[(i)] there is no substitution mapping $\bar y$ to $\bar a$; or
    \item[(ii)] there is such a substitution $\sigma$, and there is an atom $S_{\bar z}(\bar{z})$ in $q'_{\mn{sjf}}$ with ${\bar z}\subseteq{\bar y}$ (every variable in $\bar z$ occurs in $\bar y$) such that $S_{\bar z}(\sigma(\bar{z}))\notin D$. 
    
\end{itemize}
This filtering step does not change answers to $q'_{\mn{sjf}}$, that is, $q'_{\mn{sjf}}(D)=q'_{\mn{sjf}}(D_0)$. The `$\supseteq$' direction is immediate. For
`$\subseteq$', let $h$ be a 
homomorphism from $q'_{\mn{sjf}}$ to $D$.
Then, for every atom $R_{\bar y}(\bar y)$ of $q'_{\mn{sjf}}$, the fact $R_{\bar y}(h(\bar y))$ is in $D$. So, $h$ is a substitution from $\bar y$ to $h(\bar y)$, and this fact is not removed in filtering step (i). 
Since $q'_{\mn{sjf}}$ is self-join-free, the only atom using $R_{\bar y}$ is $R_{\bar y}(\bar y)$, and $h$ is the only substitution mapping it to $R_{\bar y}(h(\bar y))$. Recall that for every atom $S_{\bar z}(\bar{z})$ in $q'_{\mn{sjf}}$, we have that $S_{\bar z}(h(\bar{z}))\in D$, so $R_{\bar y}(h(\bar y))$ is not removed in the second filtering step either. We conclude that $R_{\bar y}(h(\bar y))\in D_0$, and $h$ is a 
homomorphism from $q'_{\mn{sjf}}$ to $D_0$.

\smallskip
The desired database $D_{\mn{tag}}$ over \Sbf is the tagging 
of $D_0$ with $q'$. Clearly, $D_{\mn{tag}}$ can be constructed
in time $O(|D|)$, as required.
By \Cref{lem:tagging}, $\bar c \in q'_{\mn{sjf}}(D_0)$ if and only if $\bar x \otimes \bar c \in q'(D_{\mn{tag}})$. Since we also have that $q'_{\mn{sjf}}(D)=q'_{\mn{sjf}}(D_0)$ and $q \equiv_\Tmc q'$, we get that $\bar c \in q'_{\mn{sjf}}(D)$ if and only if $\bar x \otimes \bar c \in q(D_{\mn{tag}})$, as desired.
It is left to show that $D_{\mn{tag}}$ satisfies \Tmc.

Since the TGDs are full, we may assume w.l.o.g.\ that they have a single head atom.
Consider a TGD $\phi(\bar y,\bar z) \rightarrow  S(\bar y)$ in \Tmc.
Assume there exists a homomorphism $h$ from $\phi$ to $\Dtag$; we need to show that $S(h(\bar{y}))\in \Dtag$.

 Recall that $\pi_i$ denotes the projection to the $i$th component.
 First, we show that $q'_{\mn{sjf}}$ contains the atom $S_{\pi_1\circ h(\bar y)}(\pi_1\circ h(\bar y))$.
Consider an atom $R(\bar v) \in \phi$.
We have that $R(h(\bar{v}))\in\Dtag$.
By construction of $\Dtag$, 
the atom $R_{\pi_1\circ h(\bar v)}(\pi_1\circ h(\bar v))$ appears in $q'_{\mn{sjf}}$.
By definition of $q'_{\mn{sjf}}$, the atom $R(\pi_1\circ h(\bar v))$ appears in $q'$, and since $q'\subseteq \mn{ch}_\Tmc(q)$, it appears in $\mn{ch}_\Tmc(q)$ as well. 
Since this is true for every atom $R(\bar v) \in \phi$ and since $\mn{ch}_\Tmc(q)$ satisfies \Tmc, we conclude that $\mn{ch}_\Tmc(q)$ contains the atom $S(\pi_1\circ h(\bar y))$.
Since $q'$ is a core of $\mn{ch}_\Tmc(q)$,
it is also a \emph{retract}, implying that
there is a homomorphism from 
$\mn{ch}_\Tmc(q)$ to $q'$ that is the 
identity on all variables in $q'$~\cite{DBLP:books/daglib/0013017}.
As we saw that $R(\pi_1\circ h(\bar v))$ appears in $q'$ for every atom $R(\bar v) \in \phi$, we have that $\pi_1\circ h(\bar y)$ appear in $q'$, and so the atom $S(\pi_1\circ h(\bar y))$ appears in $q'$ as well.
Finally, by definition of $q'_{\mn{sjf}}$, it contains the atom $S_{\pi_1\circ h(\bar y)}(\pi_1\circ h(\bar y))$.

Since the TGD is frontier-guarded, $\phi$ contains an atom $G(\bar u)$  with $\bar y\subseteq \bar{u}$.
Since $h$ is a homomorphism from $\phi$ to $\Dtag$, we get that $G(h(\bar u))\in\Dtag$.
By construction of $\Dtag$, we have that $G_{\pi_1\circ h(\bar u)}(\pi_2\circ h(\bar u))\in D_0$ and $q'_{\mn{sjf}}$ contains the atom $G_{\pi_1\circ h(\bar u)}(\pi_1\circ h(\bar u))$.
Since $G_{\pi_1\circ h(\bar u)}(\pi_2\circ h(\bar u))$ was not deleted from $D_0$ during filtering step (i), there is a substitution mapping ${\pi_1\circ h(\bar u)}$ to $\pi_2\circ h(\bar u)$. Since it was not deleted during filtering step (ii), and since $q'_{\mn{sjf}}$ contains the atom $S_{\pi_1\circ h(\bar y)}(\pi_1\circ h(\bar y))$ with $\pi_1\circ h(\bar y)\subseteq \pi_1\circ h(\bar{u})$, we conclude that $D_0$ contains the fact $S_{\pi_1\circ h(\bar y)}(\pi_2\circ h(\bar y))$.
By definition of $\Dtag$, we get that it contains the fact $S(h(\bar y))$, so the TGD is satisfied.
\end{proof}
Note that \cref{lem:guarded-friendly}
establishes a stronger time bound
than \cref{lem:non-recursive-construction} except for the case $k=1$.
Nevertheless, the weaker bound of  $O(|D|+|\dom(D)|^2)$ suffices to establish 
the desired dichotomy results in 
 subsequent sections, apart from one case.

\section{Single-Testing and All-Testing}

 The following theorem lifts \Cref{prop:singleallnoTGDs} to sets of TGDs that are tagging
friendly with time bound $O(|D|+|\dom(D)|^2)$.
\begin{theorem}\label{thm:testing-dichotomies}
Let \Tmc be a finite set of TGDs  that is tagging friendly with time
bound $O(|D|+|\dom(D)|^2)$.
Further, let $q(\bar x)$ be a CQ over the same schema as \Tmc, and let $q'(\bar x)$ be its tagging companion.
Then, assuming the hyperclique hypothesis,
    \begin{enumerate}
        \item $q$ admits single-testing  w.r.t.\ $\Tmc$ in linear time if and only if $q'$ is  weakly acyclic;
        \item $q$ admits all-testing w.r.t.\ $\Tmc$ with linear preprocessing and constant testing time if and only if $q'$ is free-connex.
    \end{enumerate}
\end{theorem}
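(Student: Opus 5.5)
Each of the two points has an upper bound and a lower bound, and I would treat them separately. The upper bounds transfer directly from the unconstrained setting via $q \equiv_\Tmc q'$. The lower bounds go through the tagging construction of \Cref{def:taggingfriendly}, combined with the strengthened hardness statement in \Cref{prop:singleallnoTGDs}.

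\emph{Upper bounds.} Let $D$ satisfy $\Tmc$. Since $q \equiv_\Tmc q'$, we have $q(D) = q'(D)$, so it suffices to single-test or all-test $q'$ on $D$ with no constraints at all. If $q'$ is weakly acyclic, Yannakakis' algorithm as described in the proof of \Cref{prop:singleallnoTGDs} single-tests $q'$ in linear time. For this we do not even need $q'$ to be a core: the algorithm for weakly acyclic (resp.\ free-connex) CQs applies to any CQ with that property. If $q'$ is free-connex, the algorithm of \cite{DBLP:conf/pods/LutzP22} gives linear preprocessing and constant testing time.

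\emph{Lower bounds.} Assume $q'$ is not weakly acyclic (resp.\ not free-connex), and suppose for contradiction that $q$ admits the respective algorithm w.r.t.\ $\Tmc$. I would derive an algorithm for $q'_{\mn{sjf}}$ on arbitrary databases, as follows.
\begin{itemize}
\item Given a database $D$ over the schema of $q'_{\mn{sjf}}$, build $D_{\mn{tag}}$ in time $O(|D|+|\dom(D)|^2)$. This satisfies $\Tmc$, so the assumed algorithm for $q$ is applicable to it.
\item To test a candidate $\bar c$, test $\bar x \otimes \bar c$ on $D_{\mn{tag}}$. Correctness is Condition~2 of \Cref{def:taggingfriendly}.
\item Since $\dom(D_{\mn{tag}}) \subseteq \mn{var}(q') \times \dom(D)$ and $q'$ is fixed, $|\dom(D_{\mn{tag}})| \in O(|\dom(D)|)$. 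The size satisfies $|D_{\mn{tag}}| \le t(|D|,|\dom(D)|) \in O(|D|+|\dom(D)|^2)$, because the database is written within that time.
\end{itemize}
A linear-time single-tester (resp.\ linear preprocessing with constant testing) for $q$ on $D_{\mn{tag}}$ therefore yields an algorithm for $q'_{\mn{sjf}}$ with total time (resp.\ preprocessing time) $O(|D|+|\dom(D)|^2)$ and constant testing time. Pairing $\bar x \otimes \bar c$ costs only constant time per test.

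It remains to contradict \Cref{prop:singleallnoTGDs}, which I expect to be the main obstacle. That proposition is stated for cores, and here we need it for $q'_{\mn{sjf}}$. Self-join-free CQs are not automatically cores, because atoms could be redundant. I would therefore invoke the proof of \Cref{prop:singleallnoTGDs} directly: the reductions of \cite{BraultBaron,DBLP:conf/pods/LutzP22} for self-join-free CQs that are not weakly acyclic (resp.\ not free-connex) build databases with $|\dom(D)| \in O(n)$. Those reductions already give the $O(|D|+|\dom(D)|^2)$ lower bound for $q'_{\mn{sjf}}$ directly, without passing through Lemma~\ref{cor:sjs}, and this is exactly what contradicts the hyperclique hypothesis.

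If the reductions genuinely require a core, I would fall back on the structure of the tagging companions in our two classes, which are minimizations or cores of the chase and hence cores. The self-join-free version of a core is again a core, since its relation symbols are all distinct and any endomorphism must fix every atom. Finally, $q'_{\mn{sjf}}$ has exactly the same hypergraph as $q'$, so it is weakly acyclic (resp.\ free-connex) precisely when $q'$ is; this completes both directions.
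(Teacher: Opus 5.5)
Your proposal is correct and follows the paper's proof. The upper bounds are inherited via $q \equiv_\Tmc q'$, and the lower bounds compose the tagging construction with \Cref{prop:singleallnoTGDs} applied to $q'_{\mn{sjf}}$, which has the same hypergraph as $q'$.

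Your worry about cores is unfounded, though. Every self-join-free CQ is a core: your own fallback argument, that any endomorphism must fix each atom, uses only self-join-freeness. So the proposition applies to $q'_{\mn{sjf}}$ directly, and the fallback to specific classes of tagging companions is never needed (it would not cover an arbitrary tagging-friendly $\Tmc$ anyway).
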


\begin{proof}
Since $q \equiv_\Tmc q'$, we may consider $q'$ in place of $q$ when proving
the theorem. 
The upper bounds are thus simply inherited
from the case without TGDs. For the lower bound in Point~1, assume that $q'$ is not weakly acyclic. Then, the same is true for the self-join-free version $q'_{\mn{sjf}}$ of $q'$.
Thus, by \Cref{prop:singleallnoTGDs},  $q'_{\mn{sjf}}$ does not admit single-testing over unrestricted databases in  time $O(|D|+|\dom(D)|^2)$, assuming the hyperclique hypothesis.
Given a database $D$ in the schema of $q'_{\mn{sjf}}$, 
since $q'$ is a tagging companion of~$q$, 
we can construct in time $O(|D|+|\dom(D)|^2)$ a database $\Dtag$ for $q$ that satisfies all TGDs in $\Tmc$ and such that $\bar c \in q'_{\mn{sjf}}(D)$ if and only if $\bar x \otimes \bar c \in q(D_{\mn{tag}})$. 
Thus, a linear time single-testing algorithm for $q$ w.r.t.\ $\Tmc$ would imply an $O(|D|+|\dom(D)|^2)$ single-testing algorithm for $q'_{\mn{sjf}}$ over unrestricted databases, which does not exist.
The lower bound in 
Point~2 is analogous.
\end{proof}

As an important special case, \Cref{cor:non-rec-Bool} identifies the condition for linear time evaluation of Boolean CQs. Recall that for such CQs,  single-testing and all-testing all collapse
into evaluation. In addition, acyclicity, weak acyclicity, and free-connexity all coincide.

\begin{corollary}\label{cor:non-rec-Bool}
Let \Tmc be a finite set of TGDs that is tagging friendly with time
bound $O(|D|+|\dom(D)|^2)$.
Further, let $q$ be a Boolean CQ over the same schemas as \Tmc, and let $q'$ be its tagging companion.
Then, $q$ admits linear time evaluation w.r.t.\
\Tmc if and only if $q'$ is acyclic, assuming the hyperclique hypothesis.
\end{corollary}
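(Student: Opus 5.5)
The plan is to derive the corollary directly from Point~1 (or equivalently Point~2) of \Cref{thm:testing-dichotomies}, specialised to Boolean CQs. For a Boolean CQ $q$ (arity $0$), the tagging companion $q'$ is also Boolean, because \Cref{def:taggingfriendly} requires $q'(\bar x)$ to have the same answer variables as $q(\bar x)$. The empty tuple is the only answer candidate. Single-testing $q$ with this candidate is therefore exactly deciding whether $q(D)\neq\emptyset$, which is what the paper calls (Boolean) query evaluation. Hence ``$q$ admits linear time evaluation w.r.t.\ \Tmc'' is the same as ``$q$ admits single-testing w.r.t.\ \Tmc in linear time''.

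Next I would check that weak acyclicity and acyclicity coincide for $q'$. Weak acyclicity asks for a join tree in which the connectedness condition is imposed only on the existentially quantified variables. Since $q'$ has no answer variables, every variable of $q'$ is existential, so the condition applies to all variables and we recover the definition of a join tree. Thus $q'$ is weakly acyclic iff it is acyclic. For completeness, free-connexity reduces to acyclicity as well: the guard atom added for $\bar x$ is nullary, and a nullary atom can be attached anywhere in a join tree without violating any connectedness condition.

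Combining the two observations with Point~1 of \Cref{thm:testing-dichotomies}, under the hyperclique hypothesis and the assumption that \Tmc is tagging friendly with time bound $O(|D|+|\dom(D)|^2)$, gives: $q$ admits linear time evaluation w.r.t.\ \Tmc iff $q'$ is acyclic. The one point to handle carefully is that the evaluation problem receives only $D$ as input, with no candidate tuple. This is harmless, since supplying the empty tuple costs constant time, so I do not expect a genuine obstacle anywhere in the argument.
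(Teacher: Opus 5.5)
Your proposal is correct and follows the paper's route. The paper also obtains the corollary by specialising \Cref{thm:testing-dichotomies} to Boolean CQs, using that single-testing with the empty tuple is exactly Boolean evaluation and that weak acyclicity (and free-connexity) coincide with acyclicity when there are no answer variables.
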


By virtue of Lemmas~\ref{lem:non-recursive-construction} and~\ref{lem:guarded-friendly}, \Cref{thm:testing-dichotomies} and \Cref{cor:non-rec-Bool} apply to (1)~non-recursive sets of TGDs in which the arity of head atoms is at most two, (2)~non-recursive sets of TGDs in which every TGD contains at most two frontier variables, and (3)~sets of frontier-guarded full TGDs. 
It is an interesting question whether 
\Cref{thm:testing-dichotomies} and \Cref{cor:non-rec-Bool} still apply to non-recursive sets of TGDs when neither the arity of head relations nor the number of frontier variables is restricted.
Note that, in Lemma~\ref{lem:non-recursive-construction}, the tagging companion is a minimized query  and \Cref{thm:testing-dichotomies} and \Cref{cor:non-rec-Bool} refer to \emph{any} tagging companion. 
As we see next, this overall strategy can no longer
work in the unrestricted case.
\begin{example}
    Recall the Boolean CQ $\qtri$ and the TGD $\Ttriguard$ defined in \Cref{sec:pre-discussion}. Further consider the acyclic CQ
    $\qtriguard()\leftarrow R_1(x_1,x_2),R_2(x_2,x_3),R_3(x_3,x_1),S(x_1,x_2,x_3)$. 
    Then $\qtri$ can be evaluated in linear time w.r.t.\ $\Ttriguard$ because $\qtri \equiv_{\Ttriguard} \qtriguard$ and $\qtriguard$ is acyclic.
    However, the minimization of $\qtri$ w.r.t.~$\Ttriguard$ is $\qtri$ itself, and it is cyclic.
\end{example}

This  suggests that a characterization for the unrestricted case has to be more involved. We next show  an example for which the complexity is open, currently preventing us from reaching a full classification for non-recursive sets of TGDs when the arity is unrestricted.
\begin{example}
    For $\ell\geq 3$,  consider the Boolean CQ $q_\ell$ for detecting $\ell$-cliques and
    the TGD $T_\ell$ that says that every $(\ell-1)$-clique has a guard:
   $$
   q_\ell=\bigwedge_{1\leq i < j\leq\ell}R_{i,j}(y_i,y_j)
   \qquad\text{ and } \qquad
   T_\ell = \bigwedge_{1 \leq i < j\leq\ell-1}R_{i,j}(x_i,x_j)
   \rightarrow S(x_1,\dots,x_{\ell-1}).
   $$
    For any $\ell \geq 3$, $q_\ell$ does not admit linear time evaluation 
    on unrestricted databases assuming the triangle hypothesis~\cite{bagan-enum-cdlin}. For $q_3$,
    the same lower bound holds with respect to $T_3$: in the linear time reduction from triangle detection, one uses the same reduction database $D$ as in the case without TGDs, extended with all  facts $S(c_1,c_2)$ for $c_1,c_2 \in \dom(D)$. The reduction still works since this $S$-flooding only takes $O(n^2)$ time. The same argument does not work for $q_4$ and $T_4$ because in that case, $S$ is ternary. However, we can replace
    triangle detection with $4$-clique detection, which is believed to
    not be possible in $O(n^3)$ time \cite{lincoln2018tight}. It is not hard
    to give a linear-time reduction from 4-clique detection 
    to the evaluation of $q_4$ on
    unrestricted databases, and this
    reduction extends to evaluation
    of  $q_4$ w.r.t.\ $T_4$ via
    $S$-flooding.
  This argument, however, does not 
  extend to all $\ell$. In particular,
    for sufficiently large~$\ell$ it is possible to detect $\ell$-cliques in $O(n^{\ell-1})$ time using matrix multiplication~\cite{eisenbrand2004cliques}.     
    This situation resembles an open case of UCQ evaluation, see~\cite[Example 36]{carmeli2021ucqs}.
\end{example}

\section{Counting}
\label{sect:counting}

\emph{Counting the answers to a CQ $q$ w.r.t.\ a set of TGDs} \Tmc, both over the same schema \Sbf, is the
problem to compute, given as input an \Sbf-database $D$ that satisfies all TGDs in \Tmc, the cardinality of
$q(D)$. We start with stating the known dichotomy for the case without TGDs. 
It pertains to the exponential time hypothesis.

Consider the satisfiability problem for propositional logic formulas in conjunctive normal form with $m$ variables and $k$ variables per clause ($k$-SAT),
and let $s_k$ denote the infimum of the real numbers $\delta$
for which this problem admits an $O(2^{\delta m})$ algorithm. The Strong Exponential Time Hypothesis (SETH) states that $\lim_{k \to \infty} s_k = 1$.
If a CQ $q$ is acyclic and free-connex, then it admits linear time counting, see the recent survey~\cite{DBLP:journals/corr/abs-2506-17702}. If $q$ is self-join-free and cyclic, then it does not admit counting in $O(|D|+|\dom(D)|^2)$ time, assuming the hyperclique hypothesis~\cite{BraultBaron}.
Moreover, if $q$ is self-join-free, acyclic, and not free-connex, then it does not admit linear time counting, assuming SETH~\cite{mengel-counting-note}. 

We lift these results
to TGDs, starting with a central lemma that connects answer counting for a query w.r.t.\ a set of TGDs $T$ to answer counting for the self-join free version of its tagging companion on unrestricted databases. To prove the lemma, we make use of a coloring
technique introduced in~\cite{chen-counting}.
For a CQ $q(\bar x)$, we use
$q^*(\bar x)$ to denote the result of 
coloring every variable with a different color. More precisely, we 
add
an atom $A_x(x)$ for every variable $x$ in $q$, with $A_x$  a fresh unary relation symbol.
Chen and Mengel~\cite{chen-counting} give a linear time (Turing) reduction from the counting problem for $q(\bar x)$ to that for $q^*(\bar x)$, without TGDs. We observe that it also applies
in the presence of a set of full TGDs.
\begin{theorem}~\cite[Lemma 30]{chen-counting}\label{thm:chen-counting}
    Let $q(\bar x)$ be a CQ that is a core,
    and let \Tmc be a set of full TGDs satisfied by $D_q$. If $q(\bar x)$
    admits linear time counting w.r.t.\ \Tmc, then so does $q^*(\bar x)$.
\end{theorem}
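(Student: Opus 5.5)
The plan is to follow the reduction of Chen and Mengel without changes and to check that every database it queries still satisfies \Tmc. Let $D$ be a database over the schema of $q^*$ satisfying \Tmc. The relations $A_x$ are fresh, so the TGDs do not mention them. Let $D^-$ denote the reduct of $D$ to \Sbf; it satisfies \Tmc as well.

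\textbf{Step 1: product databases.} For every set $S \subseteq \mn{var}(q)$, build the database $D_S$ as the direct product $D^- \times D_q$, restricted to the domain
\[
\{(c,x) \mid x \in S \text{ and } A_x(c) \in D\}.
\]
Since $q$ is fixed, $D_S$ has size $O(|D|)$ and can be built in linear time. A homomorphism from $q$ to $D_S$ is the same as a pair $(h_1,h_2)$ where $h_1$ is a homomorphism from $q$ to $D^-$, $h_2$ is an endomorphism of $q$ with image in $S$, and $A_{h_2(y)}(h_1(y)) \in D$ for every variable $y$.

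\textbf{Step 2: the TGD side condition.} Each $D_S$ must satisfy \Tmc, so that the assumed counting algorithm w.r.t.\ \Tmc may be run on it. This is the one point where fullness and the hypothesis on $D_q$ enter.
\begin{itemize}
\item Full TGDs are Horn sentences without existentials, so their satisfaction is preserved under direct products. Since $D^-$ and $D_q$ both satisfy \Tmc, so does $D^- \times D_q$.
\item Full TGDs are also preserved under induced substructures. A body match inside the restricted domain forces the head atom, which mentions only frontier elements, hence elements already in the subdomain.
\end{itemize}
Consequently every $D_S$ satisfies \Tmc. This preservation argument would fail for existential heads, which explains the restriction to full TGDs.

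\textbf{Step 3: inclusion--exclusion.} Answers of $q$ on $D_S$ are tuples $h(\bar x)$, whose second components form a tuple $h_2(\bar x)$.
\begin{itemize}
\item Answers of $q^*$ on $D$ correspond to answers on $D_{\mn{var}(q)}$ with second component exactly $\bar x$.
\item Since $q$ is a core, any endomorphism fixing $\bar x$ is an automorphism. The relevant $h_2$ can be organized by the set of variables they hit. Inclusion--exclusion over $S$,
\[
\sum_{S} (-1)^{|\mn{var}(q)\setminus S|}\,|q(D_S)|,
\]
isolates the answers arising from surjective $h_2$, that is, from automorphisms.
\item One can refine this by also restricting the answer coordinates, so as to separate the contributions of the finitely many automorphisms and their action on $\bar x$. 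Dividing by the appropriate count of automorphisms that act trivially on $\bar x$ then yields $|q^*(D)|$.
\end{itemize}
This bookkeeping is exactly that of Chen and Mengel and involves a constant number of oracle calls. Each call is made on a database satisfying \Tmc, by Step~2, so the total running time is linear.

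\textbf{Main obstacle.} I expect the main difficulty to be the combinatorial correctness of the inclusion--exclusion in Step~3, because answers rather than homomorphisms are counted. Since that step is untouched by the TGDs, I would cite it from \cite{chen-counting} and verify in detail only Step~2.
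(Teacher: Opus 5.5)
Your Step~2 is sound and matches the paper's argument: an induced subdatabase of the product of $D$ and $D_q$ satisfies all full TGDs true in both factors.

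The gap is in Step~3. You claim that $\sum_S (-1)^{|\mathrm{var}(q)\setminus S|}\,|q(D_S)|$ counts the answers arising from surjective $h_2$. That is the identity for counting homomorphisms, and it fails for counting answers. Restricting the domain to tags in $S$ also restricts where the \emph{existential} variables may go. An answer, however, is counted in $q(D_S)$ as soon as one of its witnesses survives. So the sets $S$ with $\bar a\in q(D_S)$ form an up-closed family generated by the tag images of \emph{all} witnesses of $\bar a$. If this family has several minimal elements whose union is $\mathrm{var}(q)$, then $\bar a$ contributes a nonzero amount even though none of its witnesses is an automorphism. The core property only tells you that the witnesses of a fixed answer are either all automorphisms or none.

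Concretely, let $\Tmc=\emptyset$ and $q(x_1,x_2) \leftarrow E(x_1,y),E(y,x_1),E(x_2,y),E(y,x_2)$, which is a core. Let $D$ consist of $E(a,b_i)$, $E(b_i,a)$, $A_{x_1}(b_i)$, $A_{x_2}(b_i)$ for $i=1,2$, together with $A_y(a)$. Then $|q^*(D)|=4$.
\begin{itemize}
\item Your $D_{\mathrm{var}(q)}$ is a symmetric star with center $m=(a,y)$ and four leaves. So $|q(D_{\mathrm{var}(q)})|=17$: all $16$ pairs of leaves, plus $(m,m)$.
\item $|q(D_{\{x_1,y\}})|=|q(D_{\{x_2,y\}})|=5$, and all other $D_S$ have no answers.
\item The alternating sum is therefore $7$. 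The answers witnessed by automorphisms are the $8$ pairs of leaves with tag tuple $(x_1,x_2)$ or $(x_2,x_1)$.
\end{itemize}
The culprit is $(m,m)$: its witnesses have tag images $\{x_1,y\}$ and $\{x_2,y\}$, so it contributes $-1$. No subsequent division turns $7$ into $4$. Your proposed refinement of ``also restricting the answer coordinates'' has the same defect, because shrinking the domain cannot constrain answer positions without also constraining the existential ones.

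What is missing is the device the paper relies on, following Chen and Mengel: control only the answer positions, by \emph{cloning} elements rather than deleting them, so existential variables keep access to the whole product database.
\begin{itemize}
\item For a set $T$ of answer variables and $j=1,\dots,|\bar x|+1$, replace every element whose tag lies in $T$ by $j$ copies, copying facts for all combinations of copies. No witness is lost.
\item The number of answers on the cloned database is then a polynomial in $j$ of degree at most $|\bar x|$. Solving a constant-size linear system gives its constant coefficient, namely the number $N_T$ of answers in which no answer position carries a tag from $T$.
\item Inclusion--exclusion over $T$ then yields the answers whose answer tags form a permutation of $\bar x$. By the core property these are witnessed only by automorphisms, and each occurring tag tuple accounts for exactly $|q^*(D)|$ answers, so a division finishes the argument.
\end{itemize}

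This also changes what Step~2 must verify. The queried databases are the cloned ones, so you additionally need that cloning preserves satisfaction of $\Tmc$, which the paper notes. Simply citing Chen and Mengel for Step~3 does not close the gap. The bookkeeping you wrote down is not theirs, and the databases their reduction queries are not the ones your Step~2 checks.
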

\Cref{thm:chen-counting} is in fact a reformulation of Chen and Mengel's Lemma~30 for our purposes. Chen and Mengel do not explicitly speak about linear time, but their proof indeed yields a linear time algorithm. More precisely, the algorithm constructs databases obtained from the input database $D$
by first constructing, in the spirit
of Definition~\ref{def:tagging},
$$
\begin{array}{rcl}
  D' &=& \{ R(\langle y_1,c_1 \rangle ,\dots,\langle y_\ell,c_\ell\rangle ) \mid R(y_1,\dots,y_\ell) \in q, R(c_1,\dots,c_\ell) \in D,
  \text{ and } \\[1mm]
  && \hspace*{3.75cm} A_{y_i}(c_i) \in D
  \text{ for all $i$ with } 1 \leq i \leq \ell\},
\end{array}
$$
and then adding a linear number of clones of existing elements.
The answer is obtained by solving an equation system in which the number of equations and the number of unknowns
depend only on the size of the query, and whose coefficients are obtained by counting answers on the constructed databases.
This is clearly possible in linear time.
It is not hard to see that if
$D$ satisfies all TGDs in \Tmc, then so does $D'$: it is an induced subdatabase of the product of $D$ and $D_q$, full TGDs are preserved under taking induced subdatabases, and taking the product preserves all TGDs that are true in both factors.
Moreover, cloning elements also preserves the satisfaction of TGDs. See also~\cite{LMCS-counting-23}, where these observations were exploited.

We use \Cref{thm:chen-counting} to prove the previously announced central lemma. 
\begin{lemma}\label{lem:count-reduction}
    Let \Tmc be a finite set of full TGDs that is tagging friendly with time bound $t_\mn{tag}$. Further, let $q(\bar x)$ be a CQ over the same schema as \Tmc and let $q'(\bar x)$ be its tagging companion.
    If  $q$ admits linear time counting w.r.t.\ \Tmc, then there exists an $O(|D|+t_\mn{tag})$ time algorithm for  counting answers to $q'_{\mn{sjf}}$ over unrestricted databases~$D$.
\end{lemma}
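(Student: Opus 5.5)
We plan to reduce counting for $q'_{\mn{sjf}}$ on an unrestricted database $D$ to counting a colored variant of a query equivalent to $q$ w.r.t.\ \Tmc on a database that satisfies \Tmc. Counting for this colored variant is linear time by \Cref{thm:chen-counting}. The tagging construction supplies the database, and the colors are used to discard the surplus answers of $q$ on $D_{\mn{tag}}$ whose tags differ from $\bar x$.

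\textbf{Step 1 (choice of query).} Let $p$ be the core of $\mn{ch}_\Tmc(q)$. It is finite because \Tmc is full, so the chase terminates, and it should be taken with homomorphisms that are the identity on answer variables. By \Cref{lem:chasingqueryequiv} we have $p \equiv_\Tmc q \equiv_\Tmc q'$. Since $D_{\mn{ch}_\Tmc(q)}$ satisfies \Tmc, \Cref{claim:cored-chase} (adapted to cores fixing $\bar x$) gives that $D_p$ satisfies \Tmc. Because counting depends only on answer sets over databases satisfying \Tmc, the assumed linear time counting for $q$ w.r.t.\ \Tmc is also linear time counting for $p$ w.r.t.\ \Tmc. \Cref{thm:chen-counting} applies to $p$ and yields linear time counting for $p^*$ w.r.t.\ \Tmc; this stays meaningful because the fresh symbols $A_y$ do not occur in \Tmc. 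Chen and Mengel's argument works equally when only a subset of the variables is colored. We will use the version $p^{\bar x}$ that colors only the answer variables, adding an atom $A_x(x)$ for each $x$ in $\bar x$.

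\textbf{Step 2 (reduction).} Given $D$ over the schema of $q'_{\mn{sjf}}$, we build $D_{\mn{tag}}$ in time $t_\mn{tag}$ using tagging friendliness. We then add the facts $A_x(\langle x,c\rangle)$ for every answer variable $x$ and every element $\langle x,c\rangle \in \dom(D_{\mn{tag}})$, which takes linear time; call the result $D^c$. Satisfaction of \Tmc is unaffected, since the $A_x$ are fresh. Any homomorphism from $p^{\bar x}$ to $D^c$ sends each answer variable $x$ to a pair with first component $x$. Hence $p^{\bar x}(D^c)$ is exactly the set of answers in $p(D_{\mn{tag}})=q(D_{\mn{tag}})$ of the form $\bar x\otimes\bar c$. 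By Condition~2 of \Cref{def:taggingfriendly}, these are in bijection with $q'_{\mn{sjf}}(D)$, since $\bar c \mapsto \bar x \otimes \bar c$ is injective. Counting $p^{\bar x}$ on $D^c$ therefore gives $|q'_{\mn{sjf}}(D)|$ in total time $O(|D|+t_\mn{tag})$. One caveat is that the size of $D_{\mn{tag}}$ is itself at most $t_\mn{tag}$, so linear counting on it stays within the bound.

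\textbf{Main obstacle.} The delicate step is justifying the use of \Cref{thm:chen-counting} with partial coloring and under \Tmc. This means checking that the Chen--Mengel construction (restricting to the product with $D_p$ and cloning elements) preserves \Tmc, as discussed after \Cref{thm:chen-counting}, and that it needs only that $D_p$ satisfies \Tmc and that $p$ is a core. If the partial-coloring variant causes trouble, the fallback is full coloring with $p^*$. In that case I would color every element $\langle y,c\rangle$ with $A_y$ for all variables $y$ of $p$ and must show that homomorphisms of $p$ into $D_{\mn{tag}}$ fixing the tags on $\bar x$ can be chosen to preserve all tags. For that I would try to show that the projection to first components is an endomorphism of $p$ into $\mn{ch}_\Tmc(p)$ and then use the core property. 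This is where I expect the real difficulty, since $D_{\mn{tag}}$ need not arise by pure tagging.
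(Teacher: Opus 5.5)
Your proposal is correct and follows essentially the paper's proof. Like the paper, you pass to $\core(\mn{ch}_\Tmc(q))$ so that \Cref{thm:chen-counting} applies, build $\Dtag$, color each answer variable $x$ with the elements $\langle x,c\rangle$ to discard surplus answers, and count within $O(|D|+t_\mn{tag})$. The step you flag as the main obstacle is settled in the paper without reopening Chen and Mengel's argument: it uses the full coloring $\core(\mn{ch}_\Tmc(q))^*$ but puts every element of $\dom(\Dtag)$ into $A_y$ for each quantified variable $y$, which yields exactly your $p^{\bar x}$, so your tag-preserving fallback is never needed (and it would indeed be problematic, since the tags in $\Dtag$ come from $q'$ rather than from $p$).
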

\begin{proof}
    Assume that we are given a database $D$ in the schema of $q'_{\mn{sjf}}$. Since $q'$ is a tagging companion of $q$, we can construct a database $\Dtag$ for $q$  satisfying all TGDs in $\Tmc$ such that $\bar c \in q'_{\mn{sjf}}(D)$ if and only if $\bar x \otimes \bar c \in q(\Dtag)$ for all $\bar c \in \dom(D)^{|\bar x|}$.
    By Lemma~\ref{lem:chasingqueryequiv},
    chasing $q$ with \Tmc results in a query $\mn{ch}_\Tmc(q)$ such that,
     $\mn{ch}_\Tmc(q)(\Dtag)=q(\Dtag)$.
    By \cref{claim:cored-chase}, the query $\core(\mn{ch}_\Tmc(q))$ also satisfies all TGDs in \Tmc, and in addition we have $\core(\mn{ch}_\Tmc(q))(\Dtag)=q(\Dtag)$.
    To compute $|q'_{\mn{sjf}}(D)|$, it is thus enough to count those answers to $\core(\mn{ch}_\Tmc(q))$
    on $\Dtag$ that are of the form
    $\bar x \otimes \bar c$, for some $\bar c \in \dom(D)^{|\bar x|}$.
    Extend $\Dtag$ to a database $\Dtag^*$ in the schema of the
    colored query $\core(\mn{ch}_\Tmc(q))^*$ as follows. For every answer variable~$x$, add the fact $A_x(\langle x, c\rangle)$ for every  $c \in \dom(D)$. For every quantified variable $y$, add the fact $A_y(\langle z,c\rangle )$ for
    every $(z,c) \in \dom(\Dtag)$. Then,  $\core(\mn{ch}_\Tmc(q))^*(\Dtag^*)$ is exactly the set of those answers to $\core(\mn{ch}_\Tmc(q))$ on $\Dtag$ that are of the form $\bar x \otimes \bar c$. Thus, $|\core(\mn{ch}_\Tmc(q))^*(\Dtag^*)|=|q'_{\mn{sjf}}(D)|$.    
    A counting algorithm for $q$ w.r.t.\ \Tmc functions as a counting algorithm for the equivalent query $\core(\mn{ch}_\Tmc(q))$ w.r.t.\ \Tmc, which by \Cref{thm:chen-counting} implies such an algorithm for $\core(\mn{ch}_\Tmc(q))^*$. 
    The construction of $\Dtag$ takes $O(t_\mn{tag})$ time, extending it to $\Dtag^*$ takes linear time, and counting $|\core(\mn{ch}_\Tmc(q))^*(\Dtag^*)|$ takes $O(|\Dtag^*|)\leq O(|\Dtag|)\leq O(|D|+t_\mn{tag})$ time.
\end{proof}

To obtain dichotomies, we first consider the case of full CQs.
Note that every full CQ  is free-connex, and thus in the case without TGDs the only reason why a (self-join free) full CQ  may not admit linear time counting is that 
it is acyclic. In fact, it then does not even admit counting in $O(|D|+|\dom(D)|^2)$ time~\cite{BraultBaron}, assuming the hyperclique hypothesis.
The following theorem, which is easy to obtain from Lemma~\ref{lem:count-reduction}, lifts this result to CQs that may contain self-joins, in the presence of full TGDs.
\begin{thmrep}\label{thm:non-rec-counting}
    Let \Tmc be a finite  set of full TGDs that is tagging friendly with time bound $O(|D|+|\dom(D)|^2)$. Further, let $q(\bar x)$ be a full CQ over the same schema as~\Tmc, and let $q'(\bar x)$ be its tagging companion. Then,
        $q$ admits linear time counting w.r.t.~$\Tmc$ if and only if $q'$ is acyclic, assuming the hyperclique hypothesis.
\end{thmrep}
\begin{proof}
    It is known that full CQs without self-joins admit linear time counting over unrestricted databases if and only if they are acyclic, assuming the hyperclique hypothesis~\cite{DBLP:journals/corr/abs-2506-17702}.

    First assume that $q'$ is acyclic. Then, so is $q'_{\mn{sjf}}$, and so $q'_{\mn{sjf}}$ admits counting in linear time over unrestricted databases. Given a database $D$ for $q'$, we can duplicate facts to build a database $D_{\mn{sjf}}$ for $q'_{\mn{sjf}}$  such that $q'(D)=q'_{\mn{sjf}}(D_{\mn{sjf}})$. Thus, $q'$ also admits linear time counting over unrestricted databases. It follows
    that $q$ admits linear time counting w.r.t.\ \Tmc, since $q \equiv_\Tmc q'$.
    
    For the other direction, assume that $q'$ is cyclic. Then, so is $q'_{\mn{sjf}}$, and thus $q'_{\mn{sjf}}$ does not admit linear time counting over unrestricted databases, assuming the hyperclique hypothesis. 
    By \Cref{lem:count-reduction}, a linear time counting algorithm for $q$ w.r.t.\ $\Tmc$ would imply an $O(|D|+|\dom(D)|^2)$ counting algorithm for $q'_{\mn{sjf}}$ over unrestricted databases, which contradicts the hyperclique hypothesis.
\end{proof}

By  Lemmas~\ref{lem:non-recursive-construction} and~\ref{lem:guarded-friendly}, 
\cref{thm:non-rec-counting}
applies to (1)~non-recursive sets of full TGDs in which the arity of head atoms is at most two, (2)~non-recursive sets of full TGDs in which every TGD contains at most two frontier variables, and (3)~sets of frontier-guarded full TGDs.

We remark that there is no 
obvious way to use 
\Cref{cor:non-rec-Bool} to
prove  the lower bound in \Cref{thm:non-rec-counting}. It might be tempting to 
think that, to prove that a full CQ $q(\bar x)$ does
not admit linear time counting w.r.t.\ a set of TGDs $T$, it suffices to  show
that the Boolean CQ $q_0$ obtained from $q$ by quantifying all variables does not admit linear time evaluation. This is because, clearly, a linear time counting algorithm for $q$ w.r.t.\ \Tmc makes it possible to evaluate $q_0$ in linear time. However, the
tagging companion of
$q$ being cyclic does not imply that the tagging companion of $q_0$ is 
cyclic. As an example, consider the
query
$q(x_1,x_2,x_3,x_4) \leftarrow R_1(x_1,x_2),R_2(x_2,x_3),R_3(x_1,x_4),R_4(x_4,x_3)$ and the empty set of TGDs. As there are no TGDs, the core of $q$ is a  tagging companion of $q$, and likewise for $q_0$. But the core of $q$ is $q$ itself, and thus cyclic, while the core of $q_0$ is a path of length two, thus acyclic.

\medskip

We next consider unrestricted CQs, that is, CQs that may contain quantified variables.
For self-join free CQs without TGDs, there is then an additional reason for why a query $q$ may fail to admit linear time counting:
$q$ could be acyclic but not free-connex. 
While the lower bound for cyclic CQs even establishes  that there is no $O(|D|+|\dom(D)|^2)$ time counting algorithm (assuming the hyperclique hypothesis), this is not the case for the lower bound for CQs that are acyclic but not free-connex.
More precisely, the proof in \cite{mengel-counting-note} that there is no linear time counting for such a CQ $q$  relies on the facts 
that any counting algorithm for $q$ can be used to solve the $k$-dominating set problem and, moreover, that
 SETH implies that there is no $O(n^{k-\varepsilon})$ algorithm that solves $k$-dominating set for any constant $\varepsilon$. However,
if one assumes an $O(|D|+|\dom(D)|^2)$ time counting algorithm for $q$ rather than a linear time one,
the construction from \cite{mengel-counting-note} no longer yields an
algorithm for $k$-dominating set with
 a contradictory running time. 
We illustrate the problem using a
simple example.

\begin{example}
    Consider the following CQ and non-recursive full TGD:
    $$
    q(x_1,x_2)\leftarrow R_1(x_1,y),R_2(y,x_2)
    \qquad
    T= R_1(x_1,x_2),R_2(x_2,x_3)\rightarrow S(x_1,x_3).$$
    Any database $D$ that satisfies $T$
    must contain at least $|q(D)|$ many
    $S$-facts. In the specific case of the databases constructed in \cite{mengel-counting-note}, this requires the
    addition of $n^k$ many $S$-facts.
    The overall algorithm, which involves first constructing $D$ and then running a counting algorithm for $q$ on $D$, thus cannot have running time $O(n^{k-\varepsilon})$ for any $\varepsilon$. In fact, it remains open whether $q$ admits linear time counting w.r.t.\ \Tmc. 
\end{example}

For sets of TGDs that are tagging friendly with time-bound $O(|D|)$, such
as sets of full frontier-guarded TGDs, the counting dichotomy for self-join free CQs over unrestricted databases extends to self-join free CQs and TGDs.
\begin{thmrep}\label{thm:general-linear-counting}
    Let \Tmc be a  finite set of full TGDs that is tagging friendly with time bound $O(|D|)$. Further, let $q(\bar x)$ be a CQ over the same schema as \Tmc, and let $q'(\bar x)$ be its tagging companion.
  Then, $q$ admits linear time counting w.r.t.\ $\Tmc$ if and only if $q'$ is acyclic and free-connex, assuming the hyperclique hypothesis and SETH.
\end{thmrep}

\begin{proof}
    It is known that self-join free CQs admit linear time counting over unrestricted databases if and only if they are acyclic or not free connex, assuming the hyperclique hypothesis and SETH~\cite{mengel-counting-note,BraultBaron}.

  First assume that $q'$ is acyclic
  and free-connex. Then, so is $q'_{\mn{sjf}}$, and thus $q'_{\mn{sjf}}$ admits counting in linear time over unrestricted databases. Given a database $D$ for $q'$, we can duplicate facts to build a database $D_{\mn{sjf}}$ for $q'_{\mn{sjf}}$  such that $q'(D)=q'_{\mn{sjf}}(D_{\mn{sjf}})$. Thus, $q'$ also admits linear time counting over unrestricted databases.
  It follows that $q$ admits linear time counting w.r.t.\ \Tmc, since $q \equiv_\Tmc q'$.

        For the other direction, assume that $q'$ is cyclic or not free-connex. Then, so is $q'_{\mn{sjf}}$, and thus $q'_{\mn{sjf}}$ does not admit linear time counting over unrestricted databases, assuming the hyperclique hypothesis. 
    By \Cref{lem:count-reduction} with time bound $t_\mn{tag}=O(|D|)$, a linear time counting algorithm for $q$ w.r.t.\ $\Tmc$ would imply a linear time counting algorithm for $q'_{\mn{sjf}}$ over unrestricted databases, which contradicts the hyperclique hypothesis.
\end{proof}

\section{Lexicographic Direct Access}

\emph{Lexicographic direct access} for a query $q$ w.r.t.\ an ordering $L$ of its answer variables and a set of TGDs \Tmc, over the same schema \Sbf as $q$, is the following problem. The initial input is an \Sbf-database $D$ that satisfies all TGDs in \Tmc and
there is a \emph{preprocessing phase} in which the algorithm may produce suitable data structures, but no output. In the subsequent \emph{access phase}, the algorithm repeatedly receives a positive integer $i$ and returns the $i$th answer in $q(D)$ sorted lexicographically according to $L$; if $i>|q(D)|$, an out-of-bound signal is returned. The \emph{preprocessing time} of the algorithm is the time spent in the preprocessing phase, while the \emph{access time} is the time it takes the algorithm to output an answer given its index.

We recall the known dichotomy for the case without TGDs. 
For a CQ $q$ and an ordering $L$ of some of its answer variables, a \emph{disruptive $L$-trio} is a triple $(v_1,v_2,v_3)$ such that $v_1$ and $v_2$ appear before $v_3$  in $L$, no atom in $q$ contains both $v_1$ and $v_2$, and $v_3$ shares in $q$ an atom with $v_1$ and an atom with $v_2$.
 If a CQ $q$ is acyclic and free-connex and contains no disruptive $L$-trio, then it admits direct access with linear preprocessing and logarithmic access time~\cite{carmeli2023da}.
We define the \loghyperclique{} hypothesis in the same way as the hyperclique hypothesis, but allowing logarithmic factors: for all $k\geq 3$, it is not possible to determine the existence of a $k$-hyperclique in a $(k-1)$-uniform hypergraph with $n$ vertices in time $O(n^{k-1}\text{polylog}(n))$.
The following is a slight
variation of Theorem 3.3
in \cite{carmeli2023da}.
\begin{lemmarep}\label{lem:known-da-hardness}
    Assuming the \loghyperclique{} hypothesis, if a self-join free 
    CQ $q$ is not acyclic and free-connex or has a disruptive $L$-trio, then $q$ does not admit lexicographic direct access w.r.t.\ $L$ with $O(|D|+|\dom(D)|^2)$ preprocessing and polylogarithmic access time.
\end{lemmarep}
\begin{proof}
The authors of \cite{carmeli2023da} do not state the result exactly as we do, but their proof given also establishes the statement given here. In particular, they are not interested in the $|\dom(D)|^2$ term, and consequently they use a sparse version of the \loghyperclique{} hypothesis. Given a hypergraph with $n$ vertices, their reduction constructs a database with a domain of size~$n$. Thus $|\dom(D)|^2 \leq n^2$, which is within the $O(n^{k-1}\text{polylog}(n))$ bound assumed by the hypothesis for all $k\geq 3$.
\end{proof}
We lift this result to the case with tagging-friendly TGDs and with self-joins, 
using an approach that is
similar to that used for counting  in Section~\ref{sect:counting}.
In fact, we proceed via
a suitable form of counting
that pertains to an ordering $L$ of the answer variables in the query, called \emph{counting under prefix constraints}, and use the
results  for this
querying mode established in 
\cite{bringmann2025tight}. The details are in the appendix, and we only state the main result.

\begin{toappendix}
We define the task of counting under prefix constraints.
Consider the following notion of constraints on query answers. Let $q$ be a CQ, $L$ an order of
its answer variables, and $D$ a database. We define a \emph{prefix constraint} $c$ on a prefix
$x_1,\ldots,x_r$ of $L$ to be a mapping with domain $\{x_1,\ldots,x_r\}$ such that $c(x_i) \in \dom(D)$ for $1 \leq i < r$, and $c(x_r) \subseteq \dom(D)$ is an interval, that is, the set $c(x_r)$  can be ordered into a contiguous sequence under $L$. 
To make the
notation for prefix constraints more symmetric, we treat elements of $\dom(D)$ as intervals of
length~$1$, so $c$ maps all $x_i$ in the prefix to intervals, but for $1 \leq i < r$ these intervals
have length~$1$. An answer $a \in q(D)$ \emph{satisfies}  prefix constraint $c$ if $a(x_i) \in c(x_i)$  for  $1 \leq i \leq r$. 

\emph{Counting under prefix constraints}  for a CQ $q$, endowed with an ordering $L$ of its answer variables, w.r.t.\ a set of TGDs \Tmc, both over the same schema \Sbf, is the following problem. The initial input is an \Sbf-database $D$ that satisfies all TGDs in \Tmc.
There is a \emph{preprocessing phase} in which the algorithm may produce suitable data structures, but no output. In the subsequent \emph{counting phase}, the algorithm repeatedly receives a prefix constraint $c$ and returns the number of  answers in $q(D)$ that satisfy $c$. The \emph{preprocessing time} of the algorithm is the time spent in the preprocessing phase, while the \emph{access time} is the time it takes the algorithm to output a count given a constraint.

The following is a restatement of results by Bringmann et al.~\cite{bringmann2025tight} for general CQs instead of full CQs and restricted to a set of TGDs.

\begin{lemma}~\cite[Lemma 36]{bringmann2025tight}\label{lem:prefix-count-to-colored}
    Let $q$ be a CQ that is a core, $L$ an ordering of its answer variables, and \Tmc a set of TGDs satisfied by $D_q$. If there is an algorithm for counting under prefix constraints for $q$ and $L$ on databases satisfying \Tmc, then there is an algorithm for counting under prefix constraints for $q^*$ and $L$ on databases satisfying \Tmc with the same time guarantees.
\end{lemma}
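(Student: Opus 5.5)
The approach follows the proof of \Cref{thm:chen-counting}, adapted from plain counting to counting under prefix constraints. Given a database $D$ for $q^*$ that satisfies all TGDs in \Tmc, I would build from $D$ a constant number of databases for $q$ that still satisfy \Tmc. Each prefix-constrained count for $q^*$ on $D$ is then recovered from prefix-constrained counts for $q$ on these databases, with the same prefix constraint passed through. The reconstruction is a linear equation system whose size depends only on $q$.

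\textbf{Step 1 (product construction).} Following~\cite{chen-counting}, let $D'$ consist of all facts $R(\langle y_1,c_1\rangle,\dots,\langle y_\ell,c_\ell\rangle)$ such that $R(\bar y)$ is an atom of $q$, $R(\bar c)\in D$, and $A_{y_i}(c_i)\in D$ for all $i$. As in \Cref{def:tagging}, this takes linear time. Its homomorphisms from $q$ correspond to homomorphisms from $q^*$ to $D$, except for the surplus ones that do not map every variable $y$ into its own layer $\{y\}\times\dom(D)$.

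\textbf{Step 2 (removing surplus answers).} To eliminate the surplus I would use cloning and inclusion--exclusion. For each subset of variables, clone the elements of the corresponding layers a chosen number of times. The answer counts then become polynomials in the clone multiplicities, with coefficients indexed by endomorphism patterns of $q$. Since $q$ is a core, only automorphisms preserve all layers, so the coefficient we want can be isolated by solving a Vandermonde-type system over constantly many instances. This is the argument of Lemma~30 of~\cite{chen-counting} and Lemma~36 of~\cite{bringmann2025tight}, and I would cite it rather than redo it.

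\textbf{Step 3 (compatibility with the prefix constraint).} The ordering of $\dom(D')$ is inherited from the second components, with ties broken by layer. A prefix constraint $c$ for $q^*$ is translated into a prefix constraint on the tagged elements. A fixed value of $x_i$ becomes the constant $\langle x_i,c(x_i)\rangle$. An interval for $x_r$ becomes the set $\{x_r\}\times c(x_r)$, which must be an interval in the new order; to ensure this, I would order $\dom(D')$ first by layer and then by the original order. Clones of an element are placed consecutively next to that element, so intervals map to intervals and each count is a sum over clone copies. This ordering bookkeeping is the main technical obstacle. Every query is then answered by constantly many counting calls on the constructed databases, plus constant-time arithmetic, so the time guarantees are preserved.

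\textbf{Step 4 (TGD satisfaction).} Each constructed database must still satisfy \Tmc. $D'$ is an induced subdatabase of the product $D\times D_q$. Both factors satisfy \Tmc: $D$ by assumption and $D_q$ by hypothesis, and products preserve TGDs true in both factors. Passing to the induced subdatabase picked out by the colors $A_y$ preserves full TGDs. Here I expect to need that \Tmc is full; alternatively I would argue that existential witnesses can be chosen inside the correct layers, using that $D_q$ satisfies \Tmc. Finally, cloning elements preserves satisfaction of all TGDs.
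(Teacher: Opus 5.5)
Your construction is the paper's: restrict the product of $D$ (minus the colour relations) with $D_q$ to the elements $\langle y,c\rangle$ with $A_y(c)\in D$. Then remove surplus answers by cloning layers and interpolating, and check that products, induced subdatabases and cloning preserve $\Tmc$.

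There is one caveat on Step~2. Lemma~36 of Bringmann et al.\ covers only full CQs, so citing it is not enough. As the paper does, you must count answer tuples (maps on the answer variables that extend to homomorphisms) rather than homomorphisms. You must also run the inclusion--exclusion over subsets of the answer variables. The core property you need is that an endomorphism of $q$ that is bijective on the answer variables, and fixes the constrained ones, is an automorphism. (In Step~3, only the layer-major order you end up with works; ordering by second component first does not make $\{x_r\}\times c(x_r)$ an interval.)

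The real gap is in Step~4. Fullness is not optional, and your fallback of choosing existential witnesses inside the correct layers cannot succeed: without fullness the statement is false under the triangle hypothesis. Take the Boolean CQ $q() \leftarrow R(x,y),S(y,z),U(z,x)$, so $L$ is empty and counting under prefix constraints is plain counting. Take $\Tmc=\{R(x,y)\rightarrow \exists z\,(S(y,z)\wedge U(z,x))\}$.
\begin{itemize}
\item $q$ is a core and $D_q$ satisfies $\Tmc$. On databases satisfying $\Tmc$, $q$ holds iff $R$ is nonempty, so counting for $q$ is linear.
\item For $q^*$, encode a tripartite graph with parts $X,Y,Z$ by $A_x,A_y,A_z$ and its edge sets by $R,S,U$. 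For each $R(a,b)$ add a fresh $d_{ab}$ with facts $S(b,d_{ab}),U(d_{ab},a)$.
\item The resulting database has linear size and satisfies $\Tmc$. Since $d_{ab}\notin A_z$, it satisfies $q^*$ iff the graph has a triangle, so linear-time counting for $q^*$ would refute the triangle hypothesis.
\end{itemize}
This is exactly where your $D'$ breaks: the witness $\langle z,d_{ab}\rangle$ of the product lies outside the colour-induced part.

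The paper's proof also uses fullness at precisely this step, although the statement omits it. The lemma is only applied to full TGDs, in \Cref{thm:directaccessmain}, and \Cref{thm:chen-counting}, its plain-counting analogue, is stated for full TGDs. So add ``full'' to the hypothesis and drop the fallback; your main line in Step~4 is then correct.
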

\begin{proof}

As the original proof in~\cite[Lemma 36]{bringmann2025tight} is long, we describe here only the differences required in order to modify the statement to our version: going from {full CQs} to~{general} CQs, and preserving the satisfaction of a set of TGDs.

We first address the transition from {full CQs} to~{general} CQs.
\cite[Lemma 36]{bringmann2025tight} is an extension of \cite[Lemma 30]{chen-counting} from counting all the answers to counting under prefix constraints. However, while \cite[Lemma 30]{chen-counting} handles CQs in general, \cite[Lemma 36]{bringmann2025tight} is restricted to {full CQs} (as these were the only queries considered in that paper). We simply need to reincorporate the support of quantified variables into \cite[Lemma 36]{bringmann2025tight}. The proof for the statement we need follows along the same lines as that of \cite[Lemma 36]{bringmann2025tight}, except instead of counting directly homomorphisms from the query to the database, we count functions from the answer variables to the database domain that can be extended into a homomorphism.
We detail exactly how the definitions given in the proof change next, and the proof remains the same but with the modified definitions.

To match the notation of \cite{bringmann2025tight}, set $Q=q$ and $A_Q=D_q$.
Given a database $D$ for $Q$ and a prefix constraint $c$, we let $hom(A_Q, D, c)$ denote the set of functions from the answer variables of $Q$ to the domain of $D$ that can be extended to a homomorphism from $A_Q$ to $D$ which satisfy $c$.
The following definitions assume that the domain of $D$ is comprised of pairs and that $\pi_i$ denotes the projection to~$i$th component.
We denote by $hom(A_Q, D, c)^{id}$ the set of functions from the answer variables of $Q$ to the domain of $D$ that can be extended to a homomorphism $h'$ from $A_Q$ to $D$ for which $\pi_1\circ h'$ is the identity on the answer variables of $Q$, and $\pi_2\circ h'$ satisfies $c$.
Let $aut(A_Q, c)$ denote the set of functions within the answer variables of $Q$ that can be extended to an automorphism of $A_Q$ that is the identity on all variables of $c$.
Let $hom(A_Q, D, c)^{aut}$ be the set of functions from the answer variables of $Q$ to the domain of $D$ that can be extended to a homomorphism $h'$ from $A_Q$ to $D$ such that $\pi_1\circ h'$ is an automorphism of $A_Q$ that is the identity on all variables of $c$, and $\pi_2\circ h'$ satisfies $c$.
Let $\mn{avar}(Q)$ denote the set of 
answer variables in $Q$.
We define $N_T$ and $N_{T,i}$ as in \cite{bringmann2025tight} for every $T\subseteq \mn{avar}(Q)$, and apply the inclusion-exclusion principle over the subsets of answer variables.
We define the new databases $D_{T,j}$ for every $j\in[\mn{avar}(Q)+1]$.
The equivalent of Observation $39$~\cite{bringmann2025tight} states that, for every homomorphism $h$ from $Q$ to $D$, since $Q$ is a core, $(\pi_1\circ h)(\mn{avar}(Q))=\mn{avar}(Q)$ iff $\pi_1\circ h$ is an automorphism.

{Similarly to the counting case, it is left to discuss the satisfaction of TGDs. 
Recall that the relations $A_x$, defined for every variable $x$, are added in the definition of $q^*$, so they do not appear in the TGDs.
As an input, we are given a database $D^c$ for $q^*$ satisfying the TGDs. 
The database $D$ for $q$ is built by~removing the unary relations $A_x$ for every variable $x$, tagging the remaining relations according to~$q^*$, and taking an induced subdatabase. Since the TGDs do not use the relations $A_x$, the database $D^c$ without the unary atoms satisfies them too.
Since $q^*$ and $D^c$ without the unary atoms  satisfy the TGDs, so does their product~$D'$. Finally, the database $D$ satisfies the TGD because it is an~induced subdatabase of $D'$ and the TGDs are full.}
Next, $D_{T,j}$ is~built from~$D$ given a~subset $T$ of the answer variables and a number $j$ as follows. Each tag~$x$ of a variable $x\in T$ is refined into $j$ distinct tags $x^{(k)}$ with $k\in [j]$, while each appearance of~a~value tagged with $x$ in a fact causes this fact to be copied $j$ times to appear with all possible refinements of~this tag. It is easy to see that if $D$ satisfies a set of TGDs, then $D_{T,j}$ satisfies them too.
\end{proof}

The following is a variation of Proposition 35 in \cite {bringmann2025tight}. While that proposition does not mention TGDs, the proof works as is.
\begin{lemma}\label{lem:da-prefix-count}
    Let $q$ be a CQ, $L$ an ordering of its answer variables, and \Tmc a set of TGDs. Then, $q$ admits  direct access w.r.t.\  $L$ and \Tmc  with linear preprocessing time and polylogarithmic access time  if and only if $q$ admits counting under prefix constraints w.r.t.\ $L$ and \Tmc with linear preprocessing time and polylogarithmic counting time.
\end{lemma}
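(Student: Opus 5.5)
The plan is to prove the two directions separately by direct simulation, following the argument behind Proposition~35 of \cite{bringmann2025tight}. Both reductions keep the input database $D$ unchanged and only use data structures built on top of it. So the promise that $D$ satisfies all TGDs in \Tmc carries over from one problem to the other, and the TGDs play no further role. The main point to check is that polylogarithmic overhead composes: polylog many calls, each taking polylog time, still take polylog time in total.

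\emph{From counting to direct access.} Assume a counting-under-prefix-constraints algorithm with linear preprocessing and polylogarithmic counting time.
\begin{enumerate}
\item In preprocessing, run the counting preprocessing. Also sort $\dom(D)$ by radix sort in linear time, so that intervals can be represented by index pairs into the sorted array.
\item Given an index $i$, determine the answer one variable of $L=x_1,\dots,x_k$ at a time. Suppose values $a_1,\dots,a_{r-1}$ for $x_1,\dots,x_{r-1}$ are already fixed and $i$ has been adjusted accordingly. Binary search over the sorted domain for the smallest $a$ such that the number of answers satisfying the constraint $x_1\mapsto a_1,\dots,x_{r-1}\mapsto a_{r-1}, x_r\mapsto[\min,a]$ is at least $i$.
\item Subtract from $i$ the count for the interval $[\min,a)$ and continue with $x_{r+1}$.
\item To detect out-of-bound indices, first count with the empty prefix, i.e.\ the whole interval on $x_1$, and compare the result to $i$.
\end{enumerate}
This makes $O(k\log|\dom(D)|)$ counting calls, so the access time is polylogarithmic.

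\emph{From direct access to counting.} Assume a direct-access algorithm with linear preprocessing and polylogarithmic access time.
\begin{enumerate}
\item Answers satisfying a prefix constraint $c$ on $x_1,\dots,x_r$ form a contiguous block in the lexicographic order, because $c$ fixes a prefix and restricts the next variable to an interval.
\item Locate the two endpoints of this block by binary search over indices $i\in[1,N]$. Each probe accesses the $i$th answer and compares its $(x_1,\dots,x_r)$-prefix lexicographically with the constraint, treating $c(x_r)$ via its lower and upper ends.
\item Obtain $N=|q(D)|$ in preprocessing by exponential and binary search on out-of-bound signals, using $O(\log N)$ accesses. Since $q$ is fixed, $N\le|\dom(D)|^{k}$, so $\log N=O(\log|D|)$.
\item The count is the difference of the two located positions.
\end{enumerate}
This uses $O(\log N)$ accesses of polylogarithmic cost each.

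The only delicate step is representing intervals and comparing values under the word-RAM conventions, including the handling of constants absent from $\dom(D)$. Radix sorting the domain once in preprocessing settles this.
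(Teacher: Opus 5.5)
Your proposal is correct and takes the same route as the paper. The paper invokes the binary-search reductions of Proposition~35 in \cite{bringmann2025tight} in both directions, noting that they do not depend on the query being full and do not modify the input database, so the promise that $D$ satisfies \Tmc carries over. You spell out those same two reductions explicitly and correctly account for the polylogarithmic overhead.
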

\begin{proof}
The proof of \cite[Proposition 35]{bringmann2025tight}  uses straightforward binary search to go between the two tasks. It applies as is to prove  \Cref{lem:da-prefix-count} because it does not depend on the form of the query (and in particular it applies to CQs, not only to full CQs as studied in \cite{bringmann2025tight}) and it does not modify the database (so it holds also w.r.t.\ TGDs).
\end{proof}
\end{toappendix}

\begin{thmrep}
\label{thm:directaccessmain}
    Let \Tmc be a  finite set of full TGDs  that is tagging friendly with time bound $O(|D|+|\dom(D)|^2)$. Further, let $q(\bar x)$ be a  CQ  over the same schema as \Tmc, let $q'(\bar x)$ be its tagging companion, and let $L$ be an ordering of the answer variables of $q$.
    Then, $q$ admits direct access w.r.t.\ $L$ and \Tmc with linear preprocessing time and polylogarithmic access time if and only if $q'$ is acyclic and free-connex and contains no disruptive $L$-trios, assuming the \loghyperclique{} hypothesis.
\end{thmrep}
\begin{proof}
    The first direction is trivial: if $q'$ is acyclic free-connex and contains no disruptive $L$-trios, then it admits direct access with linear preprocessing and logarithmic access time even if we ignore \Tmc~\cite{carmeli2023da}. Since $q$ and $q'$ are equivalent under \Tmc, the algorithm for $q'$ can serve as an algorithm for $q$ over databases that satisfy \Tmc. It is left to prove the opposite direction.

    Assume that $q$ admits direct access with linear preprocessing time and polylogarithmic access time on databases satisfying $\Tmc$ with the lexicographic order given by $L$. By \Cref{lem:da-prefix-count}, $q$ and $L$ also admit counting under prefix constraints with linear preprocessing time and polylogarithmic counting time on databases satisfying $\Tmc$. 
    On such database, $q$ and $\core(\mn{ch}_\Tmc(q))$ are equivalent. 
    By \Cref{lem:prefix-count-to-colored}, we get efficient counting under prefix constraints also for $\core(\mn{ch}_\Tmc(q))^*$. Using \Cref{lem:da-prefix-count} again, $\core(\mn{ch}_\Tmc(q))^*$ and $L$ admit direct access with linear preprocessing time and polylogarithmic access time on databases satisfying $\Tmc$.
 
    Let $q'_\mn{sjf}$ be a self-join-free version of $q'$. 
    Since $q'$ is a tagging companion of $q$, we can construct a database $\Dtag$ for $q$ satisfying $\Tmc$ such that $\bar c \in q'_{\mn{sjf}}(D)$ if and only if $\bar x \otimes \bar c \in q(\Dtag)$.
    Since $\Dtag$ satisfies \Tmc, $q(\Dtag)=\mn{ch}_\Tmc(q)(\Dtag)=
    core(\mn{ch}_\Tmc(q)(\Dtag))$.
    We now want to use $\core(\mn{ch}_\Tmc(q))^*$ to filter the answers over $\Dtag$ to only contain those with the identity tags on the answer variables. Extend $\Dtag$ into a database $\Dtag^*$ for $\core(\mn{ch}_\Tmc(q))^*$ as follows. For every answer variable $x$, define the relation $A_x$ to contain the tuple $(\langle x, c\rangle)$ for every value $c$ in the domain of $D$. For every quantified variable $y$, define the relation $A_y$ to contain all values in the domain of $\Dtag$. Using this construction, the answers $\core(\mn{ch}_\Tmc(q))^*(\Dtag^*)$ are exactly those answers of $\core(\mn{ch}_\Tmc(q))(\Dtag)$ that are of the form $\bar x \otimes \bar c$. Thus, to access the answers $q'_\mn{sjf}(D)$, we can access the answers $\core(\mn{ch}_\Tmc(q))^*(\Dtag^*)$ and project every domain value $\langle x, c\rangle$ into $c$.
    Overall, a direct access algorithm for $\core(\mn{ch}_\Tmc(q))^*$ and $L$ over databases satisfying $\Tmc$ with linear preprocessing time and polylogarithmic access time would imply a direct access algorithm for $q'_\mn{sjf}$ and $L$ over unrestricted databases with $O(|D|+|\dom(D)|^2)$ preprocessing time and polylogarithmic access time. This contradicts the \loghyperclique{} hypothesis according to \Cref{lem:known-da-hardness}.
\end{proof}
\cref{thm:directaccessmain}
applies to (1)~non-recursive sets of full TGDs in which the arity of head atoms is at most two, (2)~non-recursive sets of full TGDs in which every TGD contains at most two frontier variables, and (3)~sets of frontier-guarded full TGDs.

\section{Enumeration}

\emph{Enumerating} the answers to a query $q$ w.r.t.\ a set of TGDs \Tmc, both over the same schema \Sbf, 
is the following problem. The only
input is an \Sbf-database $D$ that 
satisfies all TGDs in \Tmc, and there
is a \emph{preprocessing phase} and an \emph{enumeration phase}. The purpose
of the former is to build data structures while not producing any output. In the latter, the algorithm
outputs  all query answers, one by one in unspecified order and without repetition, followed by an end-of-output signal. The \emph{preprocessing
time} on an enumeration algorithm is the time spent in the preprocessing phase, and the \emph{delay} is the time spent between two consecutive answers.

We recall the known dichotomy for the case without TGDs. 
If a CQ $q$ is acyclic and free-connex, then it admits enumeration with linear preprocessing and constant delay. Otherwise and if $q$ is self-join-free, then it does not admit such enumeration, assuming the hyperclique hypothesis~\cite{bagan-enum-cdlin,BraultBaron}.
For CQs with self-joins, a query can admit efficient enumeration even if it is not acyclic and free-connex, but a full characterization is not available and there are concrete CQs whose enumeration complexity is open~\cite{carmeli-enum-sjs}.

The challenges arising from self-joins  hint at  the fact that enumeration is harder to analyze than the other evaluation modes considered in this paper, both
with and without TGDs. In previous sections we 
were able to show that self-joins do not change the classification, intuitively because the tagging technique is compatible with the evaluation modes  studied there. In all-testing, for example,  we can 
test only for answers with the desired tag,
whereas in  enumeration we cannot prevent the output of ill-tagged answers.
Avoiding the hard problem of classifying 
the enumeration complexity of CQs with self-joins,
one may hope to classify the enumeration complexity
of self-join free CQs w.r.t.\ TGDs. This, however,
is still too optimistic as very simple sets of TGDs, and in particular tagging-friendly ones, make it 
possible to recover CQs with self-joins. Let us
make this more precise.  With a \emph{relation inclusion}, we mean a TGD of the form $R_1(x_1,\dots,x_n)\rightarrow R_2(x_1,\dots,x_n)$, which for 
brevity we denote by $R_1\inc R_2$.  Relation inclusions are frontier-guarded full TGDs, and thus sets of relation
inclusions are  tagging-friendly. It is not too
hard to see that (recursive) sets of relation inclusions can be used to
simulate self-joins in the query, details are in the
appendix. 

\begin{toappendix}
    We argue that a classification  of all self-join
free CQs w.r.t.\ sets of relation inclusions 
according to whether or not they admit enumeration with linear preprocessing and constant delay  yields such a classification of all CQs, also with self-joins. 

Let $q$ be a CQ over some schema \Sbf, possibly with self-joins, and consider its self-join free version $q_{\mn{sjf}}$ and the
set \Tmc of relation inclusions that contains $R_{\bar z} \subseteq  R_{\bar z'}$
for all relation symbols $R_{\bar z},  R_{\bar z'}$ in  $q_{\mn{sjf}}$ that are based on the same relation symbol $R$ from \Sbf.  We argue that
$q$ can be enumerated over unrestricted databases with linear preprocessing and constant delay if and only if the same is
true for $q_{\mn{sjf}}$ w.r.t.\ \Tmc.

\smallskip
``if''. 
Assume that $q_{\mn{sjf}}$ can be enumerated w.r.t.\ \Tmc  with linear preprocessing and constant delay.
Then, given an \Sbf-database $D$, we have $q(D)=q_{\mn{sjf}}(D')$
where $D'$ is the database obtained from $D$ by replacing every fact $R(\bar c)$ with the set of all facts $R_{\bar z}(\bar c)$ such that $R_{\bar z}$ is a relation symbol in $q_{\mn{sjf}}$. Clearly, $D'$ satisfies all TGDs in \Tmc and can be constructed in linear time. We can thus use the algorithm for $q_{\mn{sjf}}$ w.r.t.\ \Tmc to enumerate the answers to $q$ on $D$.

\smallskip
``only if''. Assume that $q$  
can be enumerated on unrestricted databases  with linear preprocessing and constant delay.
Assume we are given a database $D$ in the schema of $q_{\mn{sjf}}$ Then $q_{\mn{sjf}}(D)=q(D')$ where $D'$ is the database obtained from $D$ by
replacing every set of facts 
$\{ R_{\bar z_1}(\bar c),\dots,
 R_{\bar z_k}(\bar c) \}$, where
$R_{\bar z_1},\dots,
 R_{\bar z_k}$ are all relation
 symbols in $q_{\mn{sjf}}$ that are based on $R$, with the fact $R(\bar c)$. We can thus use the algorithm for $q$ to enumerate the answers to $q_{\mn{sjf}}$ on $D$.

\end{toappendix}

\subsection{Restricted Cases}\label{sec:enum-low-arity-cqs}

We consider two very restricted cases in which the challenges arising from self-joins are avoided, and a dichotomy can be attained, starting
 with CQs of arity at most two.  \Cref{ex:Bool-cyclic-easy} shows that even in this case, obtaining a dichotomy for enumeration seems very complicated for unrestricted TGDs. We show that tagging-friendly TGDs are more well-behaved.
\begin{thmrep}
\label{thm:aritytwo}
Let \Tmc be a finite set of TGDs that is tagging friendly with time
bound $O(|D|+|\dom(D)|^2)$.
Further, let $q(\bar x)$ be a CQ of arity at most $2$ over the same schema as~\Tmc, and let $q'(\bar x)$ be its tagging companion.
Then, $q$ admits enumeration w.r.t.\ \Tmc with linear preprocessing and constant delay if and only if $q'$ is acyclic and free-connex, assuming the hyperclique hypothesis.
\end{thmrep}
\begin{proof}
    If $q'$ is acyclic and free-connex, then it admits efficient enumeration on unrestricted databases, and since $q$ is equivalent to it, so does $q$. Therefore $q$ in particular admits efficient enumeration w.r.t.~\Tmc.
    Now assume that $q'$ is not acyclic and free-connex. Then, neither is $q'_\mn{sjf}$, and therefore it does not admit efficient enumeration. The
    proof is based on  a construction that uses linear preprocessing and constant delay enumeration for $q'_\mn{sjf}$ to detect $k$-hypercliques in a $(k-1)$-uniform hypergraph with $n$ vertices in time $O(n^{k-1})$, for some $k\ge 3$~\cite{BraultBaron}. In this construction, the  domain of the constructed database is the set of vertices from the input hypergraph, so $|\dom(D)|=n$.
    Given such an input $D$ to $q'_\mn{sjf}$, 
    since $q'$ is a tagging companion of $q$, we can construct a database $\Dtag$ for $q$ satisfying all TGDs in $\Tmc$ such that $\bar c \in q'_{\mn{sjf}}(D)$ if and only if $\bar x \otimes \bar c \in q(D_{\mn{tag}})$. 
    Since $\dom(D_{\mn{tag}}) \subseteq \mn{var}(q') \times \dom(D)$ and the arity of $q$ is at most $2$, we get that $|q(D_{\mn{tag}})|=O(|\dom(D)|^2)=O(n^2)$. We can enumerate $q(D_{\mn{tag}})$ and ignore answers not of the form $\bar x \otimes \bar c$, thus detecting hypercliques in time $O(n^{k-1}+n^2)=O(n^{k-1})$, contradicting the hyperclique hypothesis.
\end{proof}

\cref{thm:aritytwo}
applies to (1)~non-recursive sets of full TGDs in which the arity of head atoms is at most two, (2)~non-recursive sets of full TGDs in which every TGD contains at most two frontier variables, and (3)~sets of frontier-guarded full TGDs.

We next consider TGDs with only unary relation symbols in the head and
show that they do not change the known dichotomy for enumeration for connected self-join free CQs.
\begin{thmrep}
\label{thm:unaryheads}
    Let $\Tmc$ be a finite set of TGDs with only unary relation symbols in the head, and let $q(\bar{x})$ be a connected self-join free CQ over the same schema as \Tmc. Then, the enumeration of $q$ w.r.t.\ \Tmc is possible 
    with linear preprocessing and constant delay  if and only if $q$ is acyclic and free-connex, assuming the hyperclique hypothesis.
\end{thmrep}
\begin{proof}
    If $q$ is acyclic and free-connex, then it admits efficient enumeration on unrestricted databases, and it particular also on databases satisfying $\Tmc$.
    Let $q$ be a self-join-free connected CQ that is not acyclic and free-connex, and let $q'$ be obtained from $q$ by dropping all unary atoms. Then, also $q'$ is  not acyclic free-connex, so it does not admit efficient enumeration on unrestricted databases.
    Given a database $D'$ in the schema of  $q'$, we may build a database $D$ that extends $D'$ with all
    facts $A(c)$ where $A$ is any relation symbol from the schema and $c \in \dom(D')$. Note that $D$ satisfies all possible TGDs in formlated in the relevant schema.
    We claim that $q(D)=q'(D')$, implying that $q$ does not admit efficient enumeration w.r.t.\ $\Tmc$.
    Since $q$ is not acyclic and free-connex, it contains more than one variable, and since it is also connected, every variable in a unary atom in $q$ must also appear in an atom in $q$ of higher arity, and this atom also appears in $q'$. Thus, $q$ and $q'$ contain the same set of variables, and they have the same answer variables.
    Since the only difference between $q$ and $q'$ is that $q$ has more atoms, and since  
    all relations used only in $q$ contain the entire active domain in $D$, we have that $q(D)= q'(D')$. 
\end{proof}
In the appendix, we present an example of a disconnected
 CQ $q$ that is not acyclic and free-connex, and a set of TGDs \Tmc with only unary relation symbols in the head such that
enumerating $q$ w.r.t.\ \Tmc is possible with linear preprocessing and constant delay. Thus, \cref{thm:unaryheads} does not extend to the case of disconnected CQs. In addition, we present an example for which the enumeration complexity remains an open problem.

\begin{toappendix}
The following example presents a CQ $q$ that is not acyclic and free-connex, and a set of TGDs \Tmc with only unary relation symbols in the head such that
enumerating $q$ w.r.t.\ \Tmc is possible with linear preprocessing and constant delay. Thus, \cref{thm:unaryheads} does not extend to the case of disconnected CQs.
\begin{example}\label{ex-enum-unary-disconnected-easy}
Consider the query $q(x_1,x_2,x_3) \leftarrow  R_1(x_1,z),R_2(z,x_2),S(x_3)$ with the TGD
$R_1(v_1,v_2)\rightarrow S(v_2)$.
To evaluate $q$, we define an auxiliary query $q'(y_1,y_2,y_3) \leftarrow  R_1(y_1,y_3),R_2(y_3,y_2)$.
For every answer $(a,c,b)\in q'(D)$, we output $(a,b,c)$ as an answer to $q$ and store $(a,b)$ in an auxiliary relation $R'$. Once we have evaluated $q'$, we simply output $(a,b,c)$ for every $(a,b)\in R'$ and $(c)\in S$.
Due to the TGD, the first output line indeed prints answers to $q$. The second output line computes exactly all answers to $q$, so overall, every answer is found once or twice. Similarly to \Cref{ex-enum-easy-simple}, we can use the Cheater's Lemma~\cite{carmeli2021ucqs} to get linear preprocessing and constant delay enumeration with no duplicates.
\end{example}

The following example exhibits another disconnected CQ and set of TGDs with only unary relation symbols in rule heads, for which the enumeration complexity remains open. It shows that finding a modification of \cref{thm:unaryheads} that applies to all CQs, both connected and disconnected, might be challenging. 
\begin{example}
    $q(x_1,x_2,x_3)\leftarrow R_1(x_1,z_1),R_2(z_1,z_2),R_3(z_2,x_2),S(x_3)$ with the TGDs $R_1(v_1,v_2)\rightarrow S(v_2)$ and $R_2(v_1,v_2)\rightarrow S(v_2)$. 
    The algorithm of \Cref{ex-enum-unary-disconnected-easy} does not work here since there are two quantified variables on the path $x_1-z_1-z_2-x_2$ and only one answer variable $x_3$ propagating their possible values to the answers, so going over answers to $q$ does not give us enough time to go over all assignments to the full path.
    
    It seems unlikely that the answers to this example can be enumerated with linear preprocessing and constant delay, as this would allow detecting triangles in a tripartite graph with vertex sets $|V_1|=|V_2|=n$ and $|V_3|=\sqrt{n}$ in time $O(n^2)$. This does not contradict any well-established hypotheses, but it is not doable by state-of-the-art methods for triangle detection (using matrix multiplication), which do not give a near-quadratic time bound.
    Such a construction can be done by assigning $R_1=E_{3,1}$, $R_2=E_{1,2}$, $R_3=E_{2,3}$, and $S=V_1\cup V_2$. Then, the query returns an answer assigning $x_1$ and $x_2$ the same value iff the graph contains a triangle. A linear preprocessing and constant delay enumeration algorithm would allow producing and checking all answers in time bounded by the input and output sizes, and the number of answers to $q$ is bounded by $|V_3|^2(|V_1|+|V_2|)=O(n^2)$.
\end{example}
\end{toappendix}

\subsection{Challenges}\label{enum-non-rec-role-inc}

We consider binary relation inclusions, referred
to as \emph{role inclusions}, and focus in particular
on non-recursive sets of role inclusions. These are tagging-friendly for two reasons: (i)~they are non-recursive with only two frontier variables
and (ii)~they are full and frontier-guarded.
We show that even though there is 
 no obvious way to simulate CQs with
self-joins, many of the same challenges emerge. 
In particular, there are CQs and sets of TGDs that behave in
unexpected ways, and also cases where the complexity is a non-trivial open problem. 

We start with giving an example of a CQ and a set of non-recursive role inclusions that admit enumeration with linear preprocessing
and constant delay, but that do not fit into the general scheme
identified in the previous sections.
\begin{figure}[t]
    \centering
    \begin{tikzpicture}[scale=.5]

        \node (fake) at (-4,0) {};
        \node (x1) at (4, 4) {$x_1$};
        \node (x2) at (0,2)  {$x_2$};
        \node (x3) at (8,2)  {$x_3$};
        \node (x4) at (4,0)  {$x_4$};
        \node (x5) at (12,2)  {$x_5$};

        \node (x12) at (2.5,3) {};
        \node (x24) at (2,1) {};
        \node (x13) at (5,3) {};
        \node (x34) at (5,1) {};
        \node (x35) at (10,3) {};

        \draw [-latex] (x1)--(x2) node [midway, above] {\scriptsize $R_1$};
        \draw [-latex, very thick] (x4)--(x2) node [midway, below] {\scriptsize $S_1$};
        \draw [-latex, very thick] (x1)to node [midway, above] {\scriptsize $R_2$} (x3);
        \draw [-latex] (x5)--(x3) node [midway, above] {\scriptsize $P$};
        \draw [-latex, very thick] (x4)--(x3) node [midway, below] {\scriptsize $S_2$};

        \draw[dashed, blue, ->] (x35)  arc
        [
            start angle=20,
            end angle=130,
            x radius =2.25cm,
            y radius =1.5cm
        ] ;
        
        \draw[dashed,blue,->] (x13)  arc
        [
            start angle=135,
            delta angle=90,
            x radius =2.0cm,
            y radius =1.3cm
        ] ;

        \draw[dashed,blue,->] (x12)  arc
        [
            start angle=45,
            delta angle=-90,
            x radius =2.0cm,
            y radius =1.3cm
        ] ;

    \end{tikzpicture}
    \caption{The query $q$ from \Cref{ex-enum-easy-simple}, its subquery $q'$ in bold, and an endomorphism on $\mn{ch}_\Tmc(q)$  as dashed arrows (atoms without outgoing arrows map to themselves).}

    \label{fig:ex-enum-easy1}
\vspace*{-2mm}
  \end{figure}
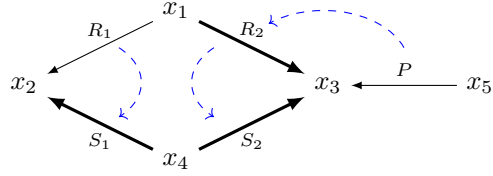
\begin{example}\label{ex-enum-easy-simple}
    Consider the following cyclic CQ and set of TGDs $\Tmc$,  illustrated in \Cref{fig:ex-enum-easy1}.
    $$
    \begin{array}{c}
    q(x_1,x_2,x_3,x_4,x_5)\leftarrow R_1(x_1,x_2),R_2(x_1,x_3),S_1(x_4,x_2),S_2(x_4,x_3),P(x_5,x_3)\\[1mm]
    \Tmc = \{ S_1\inc R_1, S_2\inc R_2, R_2\inc P \}
    \end{array}
    $$
    By \Cref{lem:guarded-friendly},  $\mn{ch}_\Tmc(q)$ (which is a core) is a tagging companion
    of $q$. Since $\mn{ch}_\Tmc(q)$ is  cyclic and in view of the dichotomies given in the preceding sections, one would thus expect $q$ and \Tmc 
    to constitute an intractable case. However, it is possible to enumerate $q$ w.r.t.\ \Tmc with linear preprocessing and
    constant delay, as follows. 
    
    Given a database $D$, enumerate with linear preprocessing and constant delay the answers to the acyclic and free-connex CQ
    $
      q'(x_1,x_2,x_3,x_4)\leftarrow R_2(x_1,x_3),S_1(x_4,x_2),S_2(x_4,x_3),
    $
    displayed by the bold edges in \Cref{fig:ex-enum-easy1}.
    For each answer $(a,b,c,d)$  produced for $q'$, 
    \begin{enumerate}
        \item 
    output $(d,b,c,d,d)$ as an answer to $q$; this is 
    justified since $D$ satisfies \Tmc, cf.\ the endomorphism on $\mn{ch}_\Tmc(q)$  displayed in \Cref{fig:ex-enum-easy1}, with image $\mn{ch}_\Tmc(q')$; 
    
    \item then check whether $(a,b)\in R_1$; if so enumerate all  $(e,c)\in P$, and  output $(a,b,c,d,e)$  as an answer to $q$. 

    \end{enumerate}
    Step~2 may repeat the output
    $(d,b,c,d,d)$ of Step~1. To fix this, we apply the Cheater's Lemma~\cite{carmeli2021ucqs}, which
    states that if every answer is repeated only a constant number of times, then this can be rectified without compromising constant delay.
    It is easy to see that every answer to $q$ is output exactly once in Step~2. Due to Step~1, we output at least one answer to $q$ for every answer to $q'$, which is necessary to achieve constant delay. 
 Testing  $(a,b) \in R_1$ in Step~2 takes constant time on a RAM, and enumerating all $(e,c) \in P$ is possible with constant delay. Overall the algorithm thus uses linear preprocessing and constant delay
\end{example}
In \Cref{ex-enum-easy-simple}, the use of an acyclic and free-connex CQ $q'$ that is the image of an endomorphism on $\mn{ch}_\Tmc(q)$,  is reminiscent of the techniques used
in \cite{carmeli-enum-sjs} to analyze the enumeration complexity
of CQs with self-joins. Using variations of \cref{ex-enum-easy-simple}, we exhibit several challenging phenomena that also arise for CQs with self-joins,
full details are in the appendix.
The variation obtained by dropping the dangling $P$-edge from $q$ is hard (i.e., it does not admit enumeration with linear preprocessing and constant delay) assuming the triangle hypothesis.
If instead of the dangling $P$-edge containing $R_2$, there are dangling $P_i$-edges containing $S_i$ for $i \in \{1,2\}$, the query is still hard, but there is no obvious way to show this assuming the triangle hypothesis; we instead reduce from vertex-unbalanced triangle detection, a problem introduced in \cite{bringmann2025unbalanced}.
Another variation combines two sets of dangling edges with associated inclusions. While each set in isolation
results in a hard query, their combination seems to prevent known hardness proofs, and its complexity remains
open.
 We consider it remarkable that such seemingly minor changes produce such a wide range of behaviors. We also present an example based on a different query that admits enumeration with linear preprocessing and constant delay,
but for reasons that are significantly more intricate than in \Cref{ex-enum-easy-simple}.

\begin{toappendix}
The following example shows that the variation of \Cref{ex-enum-easy-simple} in which the dangling $P$-edge is removed from the query does not admit enumeration with linear preprocessing and constant delay,
assuming the triangle hypothesis.
\begin{example}\label{ex-enum-hard-simple}
Consider the query $q$ and set of TGDs \Tmc obtained from those in \Cref{ex-enum-easy-simple} by dropping the atom and the TGD using $P$.
We show that $q$ does not admit enumeration w.r.t.\ \Tmc with linear preprocessing and
constant delay, assuming that triangle detection in an undirected graph is not possible in time $O(n^2)$.
Let $G=(V,E)$ be the input graph and assume
w.l.o.g.\ that $V \subseteq \mathbb{N}$.
We construct a database $D$ by including, for every edge $(a,b) \in E$ with $a < b$,  the facts 
$$R_1(a,b), R_2(a,b), S_2(a,b), R_1(a,a), S_1(a,a).$$
Note that the TGDs are satisfied.
There are two types of answers to $q$ on $D$:
those of the form $(a,b,c,b)$ with $a<b<c$ and $\{a,b,c\}$ forming a triangle, and those of the form $(a,a,b,a)$ with $a<b$ and $\{a,b\}$ forming an edge. If the answers to $q$ on $D$ can be enumerated with linear preprocessing and constant delay, then after linear time, we either see an answer that exhibits a triangle, or the algorithm finishes printing all answers that correspond to edges and terminates. Thus, such an enumeration algorithm detects the existence of triangles in the input graph in linear time, contradicting the hyperclique hypothesis.
\end{example}

\begin{figure}[t]
    \centering
    \begin{tikzpicture}[yscale=.5]

        \node (x1) at (0, 2) {$x_1$};
        \node (x2) at (-2,0)  {$x_2$};
        \node (x3) at (2,0)  {$x_3$};
        \node (x4) at (0,-2)  {$x_4$};
        \node (x6) at (-4,0)  {$x_6$};
        \node (x5) at (4,0)  {$x_5$};

        \draw [-latex] (x1)--(x2) node [midway, above] {\scriptsize $R_1$};
        \draw [-latex] (x4)--(x2) node [midway, below] {\scriptsize $S_1$};
        \draw [-latex] (x1) to node [midway, above] {\scriptsize $R_2$} (x3);
        \draw [-latex] (x4)--(x3) node [midway, below] {\scriptsize $S_2$};
        \draw [-latex] (x5)--(x3) node [midway, above] {\scriptsize $P_2$};
        \draw [-latex] (x6)--(x2) node [midway, above] {\scriptsize $P_1$};

        \node (x12) at (-.75,.75) {};
        \node (x13) at (.75,.75) {};
        \node (x35) at (3.25,-.5) {};
        \node (x26) at (-3.25,-.5) {};

        \draw[dashed, blue, ->] (x35)  arc
        [
            start angle=0,
            delta angle=-135,
            x radius =1cm,
            y radius =1cm
        ] ;

        \draw[dashed, blue, ->] (x26)  arc
        [
            start angle=180,
            delta angle=135,
            x radius =1cm,
            y radius =1cm
        ] ;
        
        \draw[dashed,blue,->] (x13)  arc
        [
            start angle=140,
            delta angle=80,
            x radius =2.0cm,
            y radius =1.3cm
        ] ;

        \draw[dashed,blue,->] (x12)  arc
        [
            start angle=40,
            delta angle=-80,
            x radius =2.0cm,
            y radius =1.3cm
        ] ;

    \end{tikzpicture}
    \caption{The query from \Cref{ex-enum-hard-utd}.}

    \label{fig:ex-enum-hard-simple}
\vspace*{-2mm}
  \end{figure}

The following variation of Examples~\ref{ex-enum-easy-simple} and~\ref{ex-enum-hard-simple} is also hard, but it does not seem possible to prove this by reduction from triangle detection. 
We instead reduce from vertex-unbalanced
triangle detection, a problem introduced in \cite{bringmann2025unbalanced}.
\begin{example}\label{ex-enum-hard-utd}
    Consider the following query and TGDs, as illustrated in \Cref{fig:ex-enum-hard-simple}.
    \[q(x_1,x_2,x_3,x_4,x_5,x_6)\leftarrow R_1(x_1,x_2),R_2(x_1,x_3),S_1(x_4,x_2),S_2(x_4,x_3),P_2(x_5,x_3),P_1(x_6,x_2)\]
    \[S_1\inc R_1, S_2\inc R_2, S_2\inc P_2, S_1\inc P_1 \]
    The \emph{Vertex-Unbalanced Triangle Detection} (VUTD) hypothesis~\cite{bringmann2025unbalanced} assumes that, for every constant $\alpha\in(0,1]$, it is not possible to decide the existence of a triangle in a tripartite graph with vertex sets $|V_1|=n$, $|V_2|=|V_3|=\Theta(|n^\alpha|)$ in time $O(n^{1+\alpha})$. Note that the input size to this problem is $O(n^{1+\alpha})$. Given an unbalanced tripartite graph $(V_1\cup V_2\cup V_3, E_{1,2}\cup E_{3,2}\cup E_{1,3})$ with $\alpha\le \frac{1}{3}$, we construct a database as follow:
    $S_1=P_1=E_{3,2}$, $R_1=E_{3,2}\cup E_{1,2}$, $S_2=P_2=\{(v,v)\mid v\in V_3\}$, and $R_2=S_2\cup E_{1,3}$.
    There are two types of answers over this construction: $(v_3,v_2,v_3,v_3,v_3,v_3')$ with $v_i,v_i'\in V_i$ and $(v_2,v_3),(v_2,v_3')\in E_{2,3}$; 
    and $(v_1,v_2,v_3,v_3,v_3,v_3')$ with $v_i,v_i'\in V_i$, $\{v_1,v_2,v_3\}$ forming a triangle, and $(v_2,v_3')\in E_{2,3}$.
    If the answers can be enumerated with linear preprocessing and constant delay, after $O(n^{1+\alpha}+|E_{2,3}|^2)=O(n^{1+\alpha}+{(n^{2\alpha})}^2)=O(n^{1+\alpha})$ time, we either see an answer corresponding to a triangle, or the algorithm finishes printing all answers of the first type and terminates. We conclude that this problem does not admit such an efficient enumeration algorithm, assuming the VUTD hypothesis.
    Notice that we cannot use the basic triangle detection assumption instead of VUTD because then there would be too many non-triangle answers to reach a contradiction.
\end{example}

We now present a final variation of 
Examples~\ref{ex-enum-easy-simple} to~\ref{ex-enum-hard-simple}. This time, none of the techniques used to deal with those examples seem to be applicable.
It remains open whether enumeration with linear preprocessing and constant delay is possible.
\begin{example}\label{ex-enum-open}
    The enumeration complexity of the query and TGDs illustrated in \Cref{fig:ex-enum-open} remains open.
    This example is the composition of two hard queries: the query from \Cref{ex-enum-hard-utd}, obtained from this example by removing the relations $T_i$, and the query obtained by removing the relations $P_i$. Each of these queries separately can be proved hard assuming VUTD, but the proof requires assigning the large vertex set to different variables, and so the same proofs do not apply to the combined example. We do not have an algorithm for this example either.
\end{example}

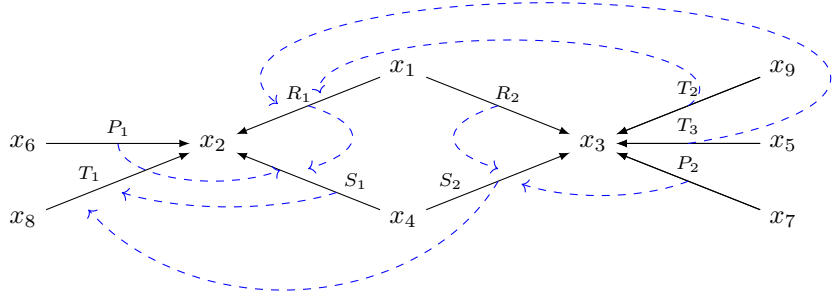
\begin{figure}[t]
    \centering
    \begin{tikzpicture}[yscale=.5, xscale=1.25]

        \node (x1) at (0, 2) {$x_1$};
        \node (x2) at (-2,0)  {$x_2$};
        \node (x3) at (2,0)  {$x_3$};
        \node (x4) at (0,-2)  {$x_4$};
        \node (x5) at (-4,0)  {$x_6$};
        \node (x6) at (-4,-2)  {$x_8$};
        \node (x7) at (4,2)  {$x_9$};
        \node (x8) at (4,0)  {$x_5$};
        \node (x9) at (4,-2)  {$x_7$};

        \draw [-latex] (x1)--(x2);
        \draw [-latex] (x4)--(x2);
        \draw [-latex] (x1) -- (x3);
        \draw [-latex] (x4)--(x3);
        \draw [-latex] (x5)--(x2);
        \draw [-latex] (x6)--(x2);

        \draw [-latex] (x7)--(x3);
        \draw [-latex] (x8)--(x3);
        \draw [-latex] (x9)--(x3);

        \node (r1) at (-1.1, 1.3) {\scriptsize $R_1$};
        \node (r2) at (1.1, 1.3) {\scriptsize $R_2$};

        \node (s1) at (-.5, -1) {\scriptsize $S_1$};
        \node (s2) at (.5, -1) {\scriptsize $S_2$};
        
        \node (l1) at (-3, 0.4) {\scriptsize $P_1$};
        \node (l2) at (-3.3, -.8) {\scriptsize $T_1$};

        \draw [-latex] (x7) to node [midway, above] {\scriptsize $T_2$} (x3);
        \draw [-latex] (x8) to node [midway, above] {\scriptsize $T_3$} (x3);
        \draw [-latex] (x9) to node [midway, above] {\scriptsize $P_2$} (x3);

        \node (x12) at (-1,1) {};
        \node (x13) at (1,1) {};
        \node (x83) at (3,0) {};
        \node (x52) at (-3,0) {};
        \node (x93) at (3,-1) {};
        \node (x73) at (3,1) {};
        \node (x42) at (-.7,-1.3) {};
        \node (x43) at (1,-1) {};
 
        \draw[dashed, blue, ->] (x83)  arc
        [
            start angle=-60,
            delta angle=260,
            x radius =3.0cm,
            y radius =2cm
        ] ;

        \draw[dashed, blue, ->] (x52)  arc
        [
            start angle=180,
            delta angle=135,
            x radius =1cm,
            y radius =1cm
        ] ;
        
        \draw[dashed,blue,->] (x13)  arc
        [
            start angle=140,
            delta angle=80,
            x radius =2.0cm,
            y radius =1.3cm
        ] ;

        \draw[dashed,blue,->] (x12)  arc
        [
            start angle=40,
            delta angle=-80,
            x radius =2.0cm,
            y radius =1.3cm
        ] ;

        \draw[dashed,blue,->] (x93)  arc
        [
            start angle=-45,
            delta angle=-85,
            x radius =1.3cm,
            y radius =1.3cm
        ] ;

        \draw[dashed,blue,->] (x73)  arc
        [
            start angle=-20,
            delta angle=205,
            x radius =2cm,
            y radius =1.3cm
        ] ;

        \draw[dashed,blue,->] (x42)  arc
        [
            start angle=-45,
            delta angle=-90,
            x radius =1.6cm,
            y radius =1.3cm
        ] ;

        \draw[dashed,blue,->] (x43)  arc
        [
            start angle=-25,
            delta angle=-120,
            x radius =2.5cm,
            y radius =5cm
        ] ;

    \end{tikzpicture}
    \caption{The  query from \Cref{ex-enum-open}. A dashed arrow from $A$ to $B$ represents the TGD $B\subseteq A$.}
    \label{fig:ex-enum-open}
\vspace*{-2mm}
  \end{figure}

We close with an adaptation of an example
from \cite{carmeli-enum-sjs}. This is a case that admits enumeration with linear preprocessing and constant delay, but
for reasons that are more intricate than those showcased in \Cref{ex-enum-easy-simple}.

\begin{example}\label{ex-enum-easy-counting}
We can simulate Example 23 from \cite{carmeli-enum-sjs} without self-joins using non-recursive role inclusions. Consider the full CQ $q$ illustrated in \Cref{fig:ex-enum-easy-counting}, where a vertex marked by the number $i$ refers to the variable $x_i$, an edge $i\to j$ describes a binary atom using a unique relation symbol $R_{i,j}$, and a dashed blue arrow from $R$ to $S$ means that $S\inc R$.
We can solve it with the same algorithm given in \cite{carmeli-enum-sjs}, where we use what we mark in \Cref{fig:ex-enum-easy-counting} by $I_s$, $I_\ell$, and $I_t$ as the small, left, and top images respectively.

\begin{figure}[t]
    \centering
    \begin{tikzpicture}[scale=.5]
        
        \node (x20) at (4,0) {12};
        \node (x40) at (8,0) {13};
        \node (x11) at (2,2) {10};
        \node (x31) at (6,2) {5};
        \node (x22) at (4,4) {4};
        \node (x42) at (8,4) {6};
        \node (x02) at (0,4) {9};
        \node (x13) at (2,6) {3};
        \node (x53) at (10,6) {7};
        \node (x24) at (4,8) {2};
        \node (x44) at (8,8) {8};
        \node (x35) at (6,10) {1};
        \node (x55) at (10,10) {14};
        \node (x36) at (6,12) {9};
        \node (nameQ) at (6,6) {$q$};

        \draw [-latex] (x31)--(x20);
        \draw [-latex] (x31)--(x40);
        \draw [-latex] (x11)--(x22);
        \draw [-latex] (x31)--(x22);
        \draw [-latex] (x31)--(x42);
        \draw [-latex] (x02)--(x13);
        \draw [-latex] (x22)--(x13);
        \draw [-latex] (x42)--(x53);
        \draw [-latex] (x24)--(x13);
        \draw [-latex] (x44)--(x53);
        \draw [-latex] (x35)--(x24);
        \draw [-latex] (x35)--(x44);
        \draw [-latex] (x44)--(x55);
        \draw [-latex] (x35)--(x36);

        \node (y512) at (5,1) {};
        \node (y513) at (7,1) {};
        \node (y104) at (3,3) {};
        \node (y54) at (5,3) {};
        \node (y56) at (7,3) {};
        \node (y103) at (1,5) {};
        \node (y43) at (3,5) {};
        \node (y67) at (9,5) {};
        \node (y23) at (3,7) {};
        \node (y87) at (9,7) {};
        \node (y814) at (9,9) {};
        \node (y18) at (7,9) {};
        \node (y12) at (5,9) {};
        \node (y19) at (6,11) {};

        \draw[dashed,blue,->] (y103)  arc
        [
            start angle=-130,
            delta angle=80,
            x radius =1.3cm,
            y radius =1.3cm
        ] ;

        \draw[dashed,blue,->] (y104)  arc
        [
            start angle=-130,
            delta angle=80,
            x radius =1.3cm,
            y radius =1.3cm
        ] ;

        \draw[dashed,blue,->] (y512)  arc
        [
            start angle=-130,
            delta angle=80,
            x radius =1.3cm,
            y radius =1.3cm
        ] ;

        \draw[dashed,blue,->] (y19)  arc
        [
            start angle=90,
            delta angle=-110,
            x radius =1.3cm,
            y radius =1.3cm
        ] ;

        \draw[dashed,blue,->] (y814)  arc
        [
            start angle=40,
            delta angle=-80,
            x radius =1.3cm,
            y radius =1.3cm
        ] ;
        
        \draw[dashed,blue,->] (y513)  arc
        [
            start angle=-40,
            delta angle=80,
            x radius =1.3cm,
            y radius =1.3cm
        ] ;

        \draw[dashed,blue,->] (y56) -- (y54);
        \draw[dashed,blue,->] (y67) -- (y43);
        \draw[dashed,blue,->] (y87) -- (y23);
        \draw[dashed,blue,->] (y18) -- (y12);

        \draw[dashed,blue,->] (y43) -- (y23);
        \draw[dashed,blue,->] (y54) -- (y12);
        \draw[dashed,blue,->] (y56) -- (y18);
        \draw[dashed,blue,->] (y67) -- (y87);

        \node (t20) at ($(2,0)$) {};
        \node (t40) at (4,0) {};
        \node (t11) at (1,1) {};
        \node (t31) at (3,1) {};
        \node (t22) at (2,2) {};
        \node (t42) at (4,2) {};
        \node (t02) at (0,2) {};
        \node (t13) at (1,3) {};
        \node (t53) at (5,3) {};
        \node (t24) at (2,4) {};
        \node (t44) at (4,4) {};
        \node (t35) at (3,5) {};
        \node (t55) at (5,5) {};
        \node (t36) at (3,6) {};

        \node (delta1) at (11,2) {};
        
        \node (z20) at ($1.5*(t20) + (delta1) $) {12};
        \node (z11) at ($1.5*(t11) + (delta1) $) {10};
        \node (z31) at ($1.5*(t31) + (delta1) $) {5};
        \node (z22) at ($1.5*(t22) + (delta1) $) {4};
        \node (z42) at ($1.5*(t42) + (delta1) $) {6};
        \node (z13) at ($1.5*(t13) + (delta1) $) {3};
        \node (z24) at ($1.5*(t24) + (delta1) $) {2};
        \node (z44) at ($1.5*(t44) + (delta1) $) {8};
        \node (z35) at ($1.5*(t35) + (delta1) $) {1};
        \node (nameL) at ($1.5*(3,3) + (delta1) $) {$I_\ell$};

        \draw [-latex] (z31)--(z20);
        \draw [-latex] (z11)--(z22);
        \draw [-latex] (z31)--(z22);
        \draw [-latex] (z31)--(z42);
        \draw [-latex] (z22)--(z13);
        \draw [-latex] (z24)--(z13);
        \draw [-latex] (z35)--(z24);
        \draw [-latex] (z35)--(z44);

        \node (delta1) at (21,2) {};
        
        \node (z22) at ($1.5*(t22) + (delta1) $) {4};
        \node (z13) at ($ 1.5*(t13) + (delta1) $) {3};
        \node (z53) at ($ 1.5*(t53) + (delta1) $) {7};
        \node (z24) at ($ 1.5*(t24) + (delta1) $) {2};
        \node (z44) at ($ 1.5*(t44) + (delta1) $) {8};
        \node (z35) at ($ 1.5*(t35) + (delta1) $) {1};
        \node (z55) at ($ 1.5*(t55) + (delta1) $) {14};     
        \node (nameT) at ($1.5*(3,3) + (delta1) $) {$I_t$};

        \draw [-latex] (z22)--(z13);
        \draw [-latex] (z24)--(z13);
        \draw [-latex] (z44)--(z53);
        \draw [-latex] (z35)--(z24);
        \draw [-latex] (z35)--(z44);
        \draw [-latex] (z44)--(z55);

        \node (delta1) at (16,-3) {};
        
        \node (z22) at ($1.5*(t22) + (delta1) $) {4};
        \node (z13) at ($1.5*(t13) + (delta1) $) {3};
        \node (z24) at ($1.5*(t24) + (delta1) $) {2};
        \node (z44) at ($1.5*(t44) + (delta1) $) {8};
        \node (z35) at ($1.5*(t35) + (delta1) $) {1};     
        \node (nameS) at ($1.5*(3,3) + (delta1) $) {$I_s$};

        \draw [-latex] (z22)--(z13);
        \draw [-latex] (z24)--(z13);
        \draw [-latex] (z35)--(z24);
        \draw [-latex] (z35)--(z44);

    \end{tikzpicture}
    \caption{The query from \Cref{ex-enum-hard-utd}.}

    \label{fig:ex-enum-easy-counting}
\vspace*{-2mm}
  \end{figure}

We repeat the main idea here for completeness. We enumerate the answers to the small image, and, treating each answer $(a,b,c,d,e)$ as an assignment, we show a way to efficiently find all answers to $q$ that extend this assignment. We first find all of the extensions to the top image and print the answers implied by these, while storing the possible assignments to $x_7$ as a set we denote $T_7(e)$. Then, we do the same with the left image while storing the assignments to $x_6$ as a set denoted $T_6(d)$. Next, we go over all pairs $u\in T_6(d)$ and $v\in T_7(e)$ and check whether $(u,v)\in R_{6,7}$. If there is such a fact, we have found an assignment to the cycle $x_1,\ldots,x_8,x_1$, and we can easily enumerate all possible assignments to the spikes $x_9,\ldots,x_{14}$ to get all answers to $q$. The crux of this proof is that the number of pairs of values we check is bounded by the number of preliminary answers we find using the top and left images. Thus, the preliminary answers provide enough time to perform this computation while still achieving linear preprocessing and constant delay.

The same proof from \cite[Example 23]{carmeli-enum-sjs} works here, with the only difference that we need to specify the names of the relations we use (there, all atoms use the same relation). 
We repeat next the computation showing that the number of pairs we check is bounded by the number of preliminary answers.
We define $\inset{a}{b}{c}=\{v|(v,c)\in R_{a, b}\}$ and $\outset{a}{b}{c}=\{v|(c,v)\in R_{a, b}\}$.
We have that $|T_7(e)|\le\outset{8}{7}{e}$ and 
$|T_6(d)|\le\sum_{f\in\inset{5}{4}{d}}{\outset{5}{6}{f}}$, and also 
\begin{align*}
\soltop{e}&=|\outset{8}{7}{e}||\outset{8}{14}{e}|\geq|\outset{8}{7}{e}|^2
\\
\solleft{d}&=|\inset{11}{4}{d}|\sum_{f\in\inset{5}{4}{d}}{|\outset{5}{6}{f}||\outset{6}{12}{f}|}\geq|\inset{5}{4}{d}|\sum_{f\in\inset{5}{4}{d}}{|\outset{5}{6}{f}|^2}
\end{align*}
We get that
\begin{align*}
|T_6(d)|\cdot|T_7(e)|
\le |T_6(d)|^2 + |T_7(e)|^2
\le \left(\sum_{f\in\inset{5}{4}{d}}{\outset{5}{6}{f}}\right)^2 + \left(\outset{8}{7}{e}\right)^2\\
\le
|\inset{5}{4}{d}|\sum_{f\in\inset{5}{4}{d}}{|\outset{5}{6}{f}|^2} + |\outset{8}{7}{e}|^2
\le
\solleft{d} + \soltop{e}
\end{align*}
In other words, in each iteration extending an answer to the small image, the number of pairs we check is bounded by the number of preliminary answers.
\end{example}

\end{toappendix}

\section{Conclusion}

In this article, we have started an investigation of the limits of linear time evaluation of
conjunctive queries under TGD constraints, for the querying modes of single-testing, all-testing, counting, direct access, and enumeration. 
Our results leave open a wealth of interesting questions for future research. The grand goal of completely classifying all CQs w.r.t.\ all sets of TGDs currently seems out of reach, for any of the considered modes of querying. A much more modest aim would be to 
understand the complexity of some of the open cases identified in this article,
hoping that this does not only solve an isolated case, but identifies new techniques.
Another interesting question is whether it is possible to obtain a classification of linear-time evaluation for CQs w.r.t.\ non-recursive sets of TGDs without any further restrictions, and w.r.t.\ frontier-guarded TGDs without assuming that they are full. We  do not know how to apply the tagging approach in this case. However, we are also not aware of examples for which the complexity cannot be determined with current techniques. One idea is to transition from 
the closed world of database constraints to the open world of ontology-mediated querying, as pursued for
instance in \cite{LMCS-counting-23}, and to then rewrite away the existential quantifiers in TGD heads, ending up with full frontier-guarded TGDs as studied in this article. This, however, introduces fresh relation symbols that induce unexpectedly obstinate complications. 

\newpage

\bibliographystyle{plainurl}
\bibliography{local}

\appendix

\end{document}